\newcommand{\acli}[1]{\emph{\acl{#1}}}		
\newcommand{\acdef}[1]{\emph{\acl{#1}} \textup{(\acs{#1})}\acused{#1}}		
\colorlet{MyBlue}{MediumBlue}
\colorlet{MyGreen}{DarkGreen!85!Black}
  \def\cref#1{<#1>}%
\crefname{assumption}{Assumption}{Assumptions}
\crefname{assumptionloc}{Assumption}{Assumptions}
\theoremstyle{plain}
\newtheorem{theorem}{Theorem}		
\newtheorem{lemma}[theorem]{Lemma}		
\newtheorem*{corollary*}{Corollary}		
\theoremstyle{definition}
\newtheorem{definition}[theorem]{Definition}		
\newtheorem{assumption}{Assumption}		
\newtheorem{example}{Example}		
\newtheorem*{definition*}{Definition}		
\newtheorem*{assumption*}{Assumptions}		
\newcommand{\asmtag}[1]{
  \let\oldtheassumption\theassumption
  \renewcommand{\theassumption}{#1}
  \g@addto@macro\endassumption{
    \addtocounter{assumption}{-1}
    \global\let\theassumption\oldtheassumption}
  }
\newtheorem*{remark*}{Remark}		
\newcounter{proofpart}
\newacro{NI}{Nikaido-Isoda}
\newacro{VI}{variational inequality}
\newacro{SP}{saddle-point}
\newacro{GAN}{generative adversarial network}
\newacro{LHS}{left-hand side}
\newacro{RHS}{right-hand side}
\newacro{iid}[i.i.d.]{independent and identically distributed}
\newacro{usc}[u.s.c.]{upper semi-continuous}
\newacro{lsc}[l.s.c.]{lower semi-continuous}
\newacro{NE}{Nash equilibrium}
\newacro{CCE}{coarse correlated equilibrium}
\newacro{VS}{variationally stable}
\newacro{SVS}{strictly variationally stable}
\newacro{RVU}{Regret bounded by Variations in Utilities}
\newacro{DA}{dual averaging}
\newacro{MD}{mirror descent}
\newacro{PEG}{past extra-gradient}
\newacro{MWU}{multiplicative weights update}
\newacro{OMWU}{optimistic multiplicative weights update}
\newacro{OMP}{optimistic mirror-prox}
\newacro{FTRL}{``follow the regularized leader''}
\newacro{OMD}{online mirror descent}
\newacro{OptMD}{optimistic mirror descent}
\newacro{OptDA}{optimistic dual averaging}
\newacro{OptFTRL}{optimistic follow the regularized leader}
\newacro{DS-OptMD}{dual stabilized optimistic mirror descent}
\newacro{OptDE}{optimistic dual extrapolation}
\newacro{OMPPS}{optimistic mirror-prox with primal stabilization}
\newacro{OMPDS}{optimistic mirror-prox with dual stabilization}
\title[Adaptive Learning in Continuous Games]{Adaptive Learning in Continuous Games:\\ Optimal Regret Bounds and Convergence to Nash Equilibrium}
\begin{document}
\maketitle

\begin{abstract}

In game-theoretic learning, several agents are simultaneously following their individual interests, so the environment is non-stationary from each player's perspective.
In this context, the performance of a learning algorithm is often measured by its regret.
However, no-regret algorithms are not created equal in terms of game-theoretic guarantees:
depending on how they are tuned, some of them may drive the system to an equilibrium, while others could produce cyclic, chaotic, or otherwise divergent trajectories.
To account for this, we propose a range of no-regret policies based on optimistic mirror descent, with the following desirable properties:
\begin{enumerate*}
[\itshape i\hspace*{.5pt}\upshape)]
\item
they do not require \emph{any} prior tuning or knowledge of the game;
\item
they all achieve $\bigoh(\sqrt{T})$ regret against arbitrary, adversarial opponents;
and
\item
they converge to the best response against convergent opponents.
Also, if employed by all players, then
\item
they guarantee $\bigoh(1)$ \emph{social} regret;
while
\item
the induced sequence of play converges to \acl{NE} with $\bigoh(1)$ \emph{individual} regret in all variationally stable games (a class of games that includes all monotone and convex-concave zero-sum games).
\end{enumerate*}
\end{abstract}


\section{Introduction}

A fundamental problem at the interface of game theory and online learning concerns the exact interplay between static and dynamic solution concepts.
On the one hand,
if all players know the game and are assumed to be rational, the most relevant solution concept is that of a \acli{NE}:
this represents a stationary state from which no player has an incentive to deviate.
On the other hand,
this knowledge is often unavailable, so players must adapt to each other's actions in a dynamic manner;
in this case, the standard figure of merit is an agent's \emph{regret}, \ie the cumulative difference in performance between an agent's trajectory of play and the best action in hindsight.
Optimistically, one would expect that the two approaches should yield compatible answers \textendash\ and, indeed, one direction is clear:
\acl{NE} never incurs any regret.
Our paper deals with the converse question, namely:
\emph{Does no-regret lead to \acl{NE}?}

This question has attracted considerable interest in the literature and the answer can be particularly nuanced.
To provide some context, it is well known that the empirical frequency of no-regret play in \emph{finite} games converges to the set of \acfp{CCE} \textendash\ also known as the game's \emph{Hannan set} \cite{Han57,HMC00}.
This is sometimes interpreted as a ``universal equilibrium convergence'' result, but one needs to keep in mind that
\begin{enumerate*}
[\itshape a\upshape)]
\item
the type of convergence involved is \emph{not} the actual, day-to-day play but the empirical frequency of play;
and
\item
the game's \acp{CCE} set may contain elements that fail even the most basic rationalizability axioms.
\end{enumerate*} 
In particular, \citet{VZ13} constructed a simple two-player game (a variant of rock-paper-scissors with a feeble twin) that admits \acp{CCE} supported \emph{exclusively} on strictly dominated strategies.

This interplay becomes even more involved because the behavior of a no-regret learning algorithm could switch from convergent to non-convergent by a slight variation of its hyperparameters or a small perturbation of the game.
As a simple example, optimistic gradient methods are known to converge to \acl{NE} in smooth convex-concave games, provided that they are tuned appropriately.
However, if the algorithm's step-size is out-of-tune even by a little bit, the trajectory of play could diverge and the players' mean behavior could converge to an irrelevant off-equilibrium profile (we provide a concrete example of this behavior in \cref{sec:OptMD}).
Equally pernicious examples can be found in symmetric $2\times2$ congestion games:
even though such games have a very simple equilibrium structure (a unique, evolutionarily stable equilibrium), running a no-regret learning algorithm \textendash\ like the popular \acl{MWU} scheme \textendash\ may lead to chaos  \cite{PPP17,CFMP20,CFMP20-NIPS}.

\para{Our contributions and related work}

In view of all this, the equilibrium convergence properties of no-regret learning depend crucially on the algorithm's tuning \textendash\ and the parameters required for this tuning could be beyond the players' reach.
With this in mind, we propose a range of no-regret policies with the following desirable properties:
\begin{enumerate}
[leftmargin=\parindent,itemsep=0pt]
\item
They do not require \emph{any} prior tuning or knowledge of the game's parameters:
each player updates their individual step-size with purely local, individual gradient information.
\item
They guarantee an order-optimal $\bigoh(\sqrt{\nRuns})$ regret bound against adversarial play, and they further enjoy \emph{constant} social regret when all players employ one of these algorithms.
\item
In any continuous game with smooth, convex losses, the sequence of chosen actions of any player converges to best response against convergent opponents.
\item
If all players follow one of these algorithms, the induced trajectory of play converges to \acl{NE} and the individual regret of each player is bounded as $\bigoh(1)$ in all variationally stable games \textendash\ a large class of games that contains as special cases all convex-concave zero-sum games and monotone / diagonally convex games.
\end{enumerate}
To the best of our knowledge, the proposed methods \textendash\ \acdef{OptDA} and \acdef{DS-OptMD} \textendash\ are the first in the literature that concurrently enjoy even a subset of these properties in games with continuous action spaces.
To achieve this, they rely on two principal ingredients:
\begin{enumerate*}
[\itshape a\upshape)]
\item
a regularization mechanism as in the popular \ac{FTRL} class of policies \cite{SS11,BCB12};
and
\item
a player-specific adaptive step-size rule inspired by \cite{RS13-NIPS}.
\end{enumerate*}
In this regard, they resemble the policy employed by \citet{SALS15} who established comparable individual/social regret guarantees for \emph{finite} games.
Our results extend the analysis of \citet{SALS15} to games with \emph{continuous} \textendash\ and possibly \emph{unbounded} \textendash\ action spaces, and, as a pleasing after-effect, they also shave off all logarithmic factors.


Concerning the convergence behavior of \ac{OptMD},
it is known that the sequence of realized actions converges to a \acl{NE} in all variationally stable games,
provided that every player runs the algorithm with a sufficiently large regularization parameter, common across all players \citep{MLZF+19,HIMM20}.%
\footnote{Strictly speaking, \cite{MLZF+19} analyzes the Mirror-Prox algorithm, but the same arguments apply to \ac{OptMD}.
On the other hand, several other papers have focused on obtaining convergence rate of \ac{OptMD} in more specific settings \cite{HIMM19,LS19,WLZL21}.}
This result cannot be attained by ``vanilla'' first-order methods that do not include an extra-gradient mechanism, but it also comes with several important caveats.
First, running \ac{OptMD} with a constant step-size robs the algorithm of any fallback guarantees:
a player's individual regret may grow linearly if the other players switch to an adversarial behavior (\eg as part of a ``grim trigger'' strategy).
Second, the method's convergence is contingent on the players' using a fine-tuned regularization parameter, depending on the smoothness modulus of their payoff functions.
This constant cannot be estimated without prior, \emph{global} knowledge of the game's primitives, and if a player misestimates it, the algorithm's convergence breaks down completely (see \cref{fig:constant-bilinear-diverge} in \cref{sec:OptMD}).

In terms of trajectory convergence of adaptive methods, the closest antecedents of our results are the recent papers by \citet{LZMJ20} and \citet{ABM19,ABM21}, where the authors propose an adaptive step-size rule for cocoercive games and \aclp{VI} respectively.
However, in both cases, the method's step-size requires \emph{global} gradient information, and therefore does not apply to a fully distributed game-theoretic setting.

\section{Online learning in games}
\label{sec:setup}

In this section, we present the necessary background material on normal form games with continuous action spaces and the corresponding learning framework.

\subsection{Games with continuous action spaces}
\label{subsec:game}

\para{Definitions and examples}

Throughout the paper, we focus on normal form games played by a finite set of players $\players\defeq\{1,\ldots,\nPlayers\}$.
Each player $\play\in\players$ is associated with a closed convex action set $\vw[\points]\subseteq\R^{\vdim_{\play}}$ and a loss function $\vw[\loss]\from\points\to\R$, where $\points\defeq\prod_{\play=1}^{\nPlayers}\vw[\points]$ denotes the game's joint action space.
For the sake of clarity, a joint action of multiple players will be typeset in bold;
in particular, the joint action profile of all players will be written as $\jaction=(\vw[\action], \vw[\jaction][\playexcept])=\allplayers{\vw[\action]}$, where $\vw[\action]$ and $\vw[\jaction][\playexcept]$ respectively denote the action of player $\play$ and the joint action of all players \emph{except} player $\play$.

Our blanket assumption concerning the players' loss functions is the following:

\begin{assumption}[Individual convexity + Smoothness]
\label{asm:convexity+smoothness}
For each $\play\in\players$, $\vw[\loss]$ is continuous in $\jaction$ and convex in $\vw[\action]$ \textendash\  that is, $\vw[\loss](\cdot,\vw[\jaction][\playexcept])$ is convex for all $\vw[\jaction][\playexcept]\in\prod_{\playalt\neq\play}\vw[\points][\playalt]$.
Furthermore, the subdifferential $\subd_{\play}\vw[\loss]$ of $\vw[\loss]$ relative to $\vw[\action]$ admits a Lipschitz continuous selection $\vw[\vecfield]$ on $\points$.
\end{assumption}

In the sequel, we will refer to any game that satisfies \cref{asm:convexity+smoothness} as a (continuous) convex game.
For the sake of concreteness, we briefly discuss below two examples of such games.

\begin{example}[Mixed extensions of finite games]
\label{ex:finite}
In a \emph{finite game}, each player $\play\in\players$ has a finite set $\vw[\pures]$ of \emph{pure strategies} and no assumptions are made on the loss function $\vw[\loss]\from\prod_{\play=1}^{\nPlayers}\vw[\pures]\to\R$.
A \emph{mixed strategy} for player $\play$ is a probability distribution $\vw[\action]$ over their pure strategies, so the player plays $\indg$ with probability $\vw[\action_{\indg}]$ (\ie the $\indg$-th coordinate of $\vw[\action]$).%
\footnote{In a slight abuse of notation, a subscript may denote either time or a coordinate, but this should be clear from the context.}
In this case,
$\vw[\points] = \simplex(\vw[\pures])$ is the set of mixed strategies,
the expected loss at a mixed profile is given by $\vw[\loss](\jaction)=\ex_{\bb{s}\sim\jaction}{\vw[\loss](\bb{s})}$,
and
the player's feedback is the observation of the expected loss $\ex_{\vw[\bb{s}][\playexcept]\sim\vw[\jaction][\playexcept]}[\vw[\loss](\indg,\vw[\bb{s}][\playexcept])]$ for all $\indg\in\vw[\pures]$.
Our blanket assumption holds trivially since the mixed losses are multilinear.
\end{example}

\begin{example}[Kelly auctions]
\label{ex:auction}
Consider an auction of $K$ splittable resources among $\nPlayers$ bidders (players).
For the $\indg$-th resource, let $q_{\indg}$ and $c_{\indg}$ denote respectively its available quantity and the entry barrier for bidding on it;
for the $\play$-th bidder, let $\vw[b]$ and $\vw[\textsl{g}]$ denote respectively the bidder's budget and marginal gain from obtaining a unit of resources. 
During play, each bidder submits a bid $\vw[\action_{\indg}]$ for each resource $\indg$ with the constraint $\sum_{\indg=1}^{K}\vw[\action_{\indg}]\le\vw[b]$.
Resources are then allocated to bidders proportionally to their bids, so the $\play$-th player gets $\vw[\rho_{\indg}]=q_{\indg}\vw[\action_{\indg}]/(c_{\indg}+\sumplayer\vw[\action_{\indg}])$ units of the $\indg$-th resource.
The utility of player $\play\in\players$ is given by $\vw[\pay](\jaction)=\sum_{\indg=1}^K(\vw[\textsl{g}]\vw[\rho_{\indg}]-\vw[\action_{\indg}])$, 
and the loss function is $\vw[\loss]=-\vw[\pay]$.
\end{example}

\para{\acl{NE}}

In terms of solution concepts, the most widely used notion is that of a \acl{NE}, \ie a strategy profile from which no player has incentive to deviate unilaterally.
Formally, a point $\bb{\sol}\in\points$ is a \acl{NE} if for all $\play\in\players$ and all $\vwsp[\point]\in\vw[\points]$,
$\vw[\loss](\oneandother[\sol][\sol[\jaction]])\le\vw[\loss](\oneandother[\point][\sol[\jaction]])$.
For posterity, we will write $\sols$ for the set of \aclp{NE} of the game;
by a famous theorem of \citet{Deb52}, $\sols$ is always nonempty if $\points$ is compact.

\subsection{Regret minimization}

In the multi-agent learning model that we consider, players interact with each other repeatedly via a continuous convex game.
In more detail, during each round $\run$ of the process, each player $\play$ selects an action $\vwt[\action]$ from their action set $\vw[\points]$ and suffers a loss $\vw[\loss](\vt[\jaction])$, where $\vt[\jaction]=\allplayers{\vwt[\action]}$ is the joint action profile of all players.
At the end of each round, the players receive as feedback a subgradient vector 
\begin{equation}
    \label{eq:feedback}
    \vwt[\gvec]=\vw[\vecfield](\vt[\jaction])\in\subd_{\play}\vw[\loss](\vwt[\action],\vwt[\jaction][\playexcept]),
\end{equation}
and the process repeats.
We will also write $\jvecfield=\allplayers{\vw[\vecfield]}$ for the joint feedback operator.

In this low-information setting, the players have no knowledge about the rules of the game, and can only improve their performance by ``learning through play''.
It is therefore unrealistic to assume that players can pre-compute their component of an equilibrium profile;
however, it is plausible to expect that rational players would always seek to minimize their accumulated losses.
This criterion can be quantified via each player's \emph{individual regret}, \ie the difference between the player's cumulative loss and the best they could have achieved by playing a given action from a compact comparator set $\vw[\arpoints]\subseteq\vw[\points]$.

Following \citet{SS11}, we define the regret relative to a set of competing actions as
\begin{equation}
    \notag
    \vwt[\reg][\play][\nRuns](\vw[\arpoints]) = \max_{\vw[\arpoint]\in\vw[\arpoints]}\sum_{\run=1}^{\nRuns} 
    \left(\vw[\loss](\vwt[\action],\vwt[\jaction][\playexcept])
    - \vw[\loss](\vw[\arpoint],\vwt[\jaction][\playexcept])\right).
\end{equation}
Likewise, for $\arpoints\defeq\prod_{\play=1}^{\nPlayers}\vw[\arpoints]\subseteq\points$, we define the \emph{social regret} by aggregating over all players, viz,
\begin{equation}
    \notag
    \vt[\reg][\nRuns](\arpoints)
    =\sumplayer\vwt[\reg][\play][\nRuns](\vw[\arpoints])
    =\max_{\bb{\arpoint}\in\arpoints}
    \sumplayer\sum_{\run=1}^{\nRuns} 
    \left(\vw[\loss](\vwt[\action],\vwt[\jaction][\playexcept])
    - \vw[\loss](\vw[\arpoint],\vwt[\jaction][\playexcept])\right).
\end{equation}
In this context, a sequence of play $\vwt[\jaction]$ of player $\play$ incurs \emph{no individual regret} if $\vwt[\reg][\play][\nRuns](\vw[\arpoints])=\smalloh(\nRuns)$ for every (compact) set of alternative strategies;
correspondingly, $\vt[\jaction]$ incurs \emph{no social regret} if $\vt[\reg][\nRuns](\arpoints) = o(\nRuns)$.

In certain classes of games, the growth rate of the social regret can be related to the empirical mean of the players' social welfare \citep{SALS15}.
However, beyond this ``aggregate'' criterion, having no regret does not translate into any tangible guarantees for the quality of ``day-to-day'' play \cite{VZ13}.
On that account, we will measure the optimality of $\vwt[\point]$ at a given stage $\run$ by the gap function
\begin{equation}
    \notag
    \vwsp[\gap_{\vw[\arpoints]}](\vt[\jaction])
    = \vw[\loss](\vwt[\action],\vwt[\jaction][\playexcept])
    -\min_{\vw[\arpoint]\in\vw[\arpoints]}
    \vw[\loss](\vw[\arpoint],\vwt[\jaction][\playexcept]),
\end{equation}
\ie the best that the player could have gained by switching to any other strategy in $\vw[\arpoints]$ at round $\run$.
When $\vw[\arpoints]=\vw[\points]$ and $\vwsp[\gap_{\vw[\points]}](\vt[\jaction])\le\eps$ for every $\play\in\players$, we recover the definition of an $\eps$-equilibrium.

\section{Optimistic mirror descent and its failures}
\label{sec:OptMD}

\para{The \acs{OptMD} template}

Our focal point in the sequel will be the \acdef{OptMD} class of algorithms, which, under different assumptions, has been shown to enjoy optimal regret minimization guarantees \cite{CYLM+12,RS13-COLT,RS13-NIPS}.
To define it, assume that each player $\play\in\players$ is equipped with a \emph{regularizer} $\vw[\hreg]\from\vw[\points]\to\R$,
\ie
a continuous, strongly convex function whose subdifferential $\subd\vw[\hreg]$ admits a continuous selection $\grad\vw[\hreg]$.
Then, given a sequence of feedback signals $\seqinf[\vwt[\gvec]]$ (defined in \eqref{eq:feedback} with the notation $\vwt[\gvec][\play][0]=0$),
the $\play$-th player
plays an action $\vwt[\action] = \vwtinter$ via the update rule
\begin{equation}
    \tag{OptMD}
    \label{eq:OptMD}
    \vwt[\state] = \argmin_{\point\in\vw[\points]}\thinspace
    \product{\vwtlast[\gvec]}{\point} + \vwtlast[\regpar]\vw[\breg](\point, \vwtlast[\state]),~~~
    \vwtinter = \argmin_{\point\in\vw[\points]}\thinspace
    \product{\vwtlast[\gvec]}{\point} + \vwt[\regpar]\vw[\breg](\point, \vwt[\state]),~~~
\end{equation}
where
\begin{equation}
    \notag
    \vw[\breg](\arpoint,\point)
    = \vw[\hreg](\arpoint) - \vw[\hreg](\point) - \product{\grad\vw[\hreg](\point)}{\arpoint-\point}
    \qquad
    \arpoint\in\vw[\points],
    \point\in\dom\subd\vw[\hreg],
\end{equation}
denotes the \emph{Bregman divergence} of $\vw[\hreg]$ and $\vwt[\regpar]$ is a player-specific regularization parameter (more details on this below).
We also stress that, although \eqref{eq:OptMD} produces two iterates per step, only \emph{one} is actually played and directly contributes to the received feedback \textendash\ namely, $\vwt[\gvec]=\vw[\vecfield](\vwtinter[\state],\vwt[\jaction][\playexcept])$.


\setlength{\columnsep}{15pt}
\begin{wrapfigure}{tr}{0.35\textwidth}
    \vspace{-2ex}
    \centering
    \includegraphics[width=0.35\textwidth]{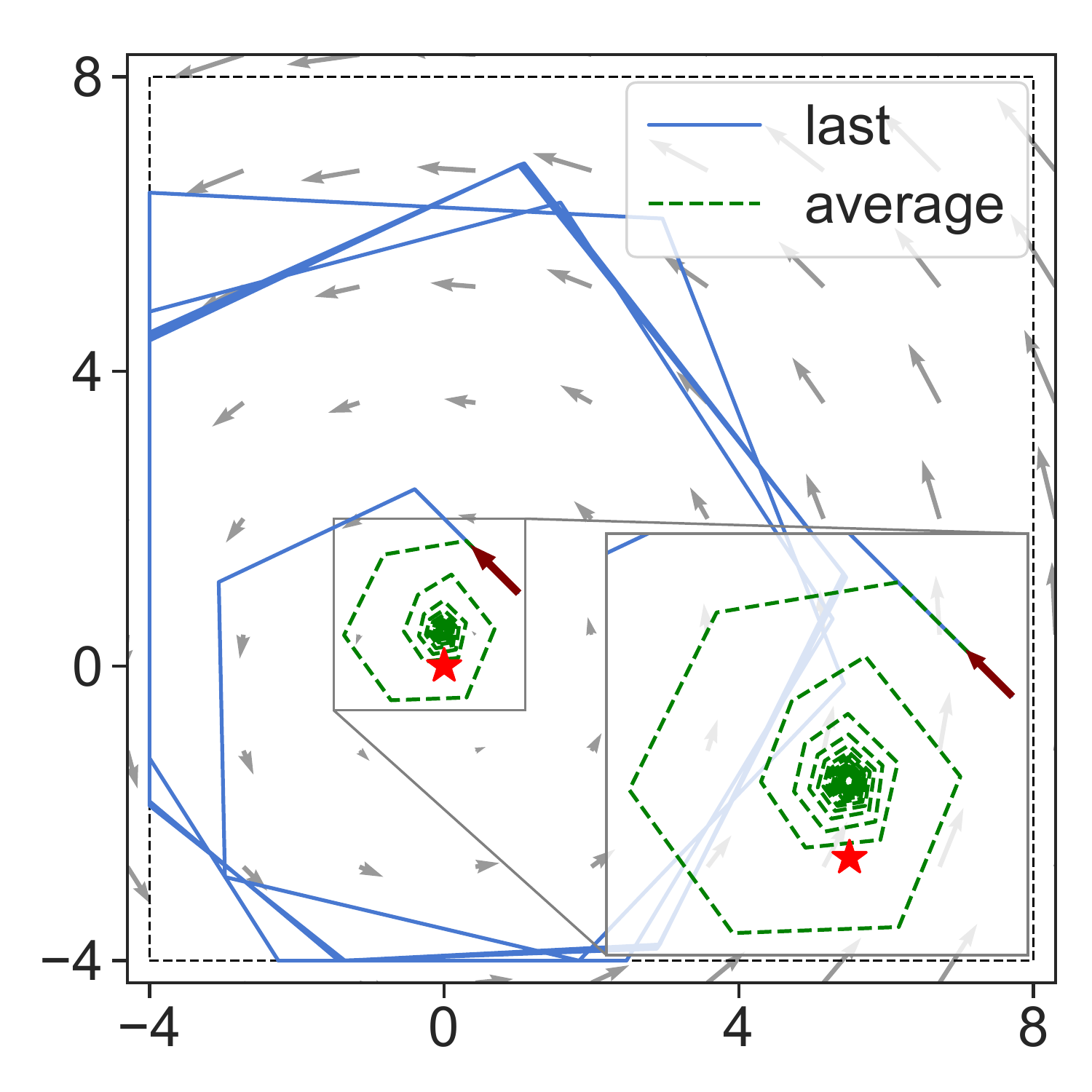}
    \vspace{-1.2em}
    \caption{The trajectories of play and its time-average when running \acs{PEG} for $\min_{\minvar\in[-4,8]}\max_{\maxvar\in[-4,8]}\minvar\maxvar$ with constant stepsize $\step=0.7>1/\sqrt{3}$. Neither of the two converges to the unique Nash equilibrium at $(0,0)$.}
    \label{fig:bilinear_compare}
    \label{fig:constant-bilinear-diverge}
    \vspace{-2em}
\end{wrapfigure}

Two of the most widely used instances of \eqref{eq:OptMD} are the \acdef{PEG} and \acdef{OMWU} algorithms, obtained respectively by
the quadratic regularizer $\vw[\hreg](\point) = \norm{\point}_{2}^{2}/2$
and
the negentropy function $\vw[\hreg](\point) = \sum_{\indg=1}^{\vdim_{\play}}\point_{\indg}\log\point_{\indg}$.
For a detailed discussion, see \cite{RS13-NIPS,MS16,GBVV+19,HIMM19,DP19} and references therein.

\para{Failures of \ac{OptMD}}

As we mentioned in the introduction, the convergence of \eqref{eq:OptMD} is only guaranteed as long as the players' regularization parameter $\vwt[\regpar]$ has been suitably fine-tuned \textendash\ specifically, as long as it is sufficiently large relative to the smoothness modulus of the players' loss functions.
However, this tuning is contingent on a degree of coordination and global knowledge of the game that is often impractical:
if $\vwt[\regpar]$ is not chosen properly, \eqref{eq:OptMD} may \textendash\ and, in fact, \emph{does} \textendash\ fail to converge.

We illustrate this failure in the simple min-max game
$\vw[\loss][1](\minvar,\maxvar) = \minvar\maxvar = -\vw[\loss][2](\minvar,\maxvar)$.
In this case, if both players run the \ac{PEG} instance of \eqref{eq:OptMD} with $\regpar > \sqrt{3}$, the sequence of play converges to the game's unique \acl{NE}.
However, if the players misestimate the critical value $\sqrt{3}$ and choose $\regpar < \sqrt{3}$, the method no longer converges to equilibrium, in either the ``ergodic'' or ``trajectory/last-iterate'' sense (for a proof, see \eg \cite{ZY20}).
Moreover, as we show in \cref{fig:constant-bilinear-diverge}, this ``off-equilibrium'' behavior persists even if we restrict the players' actions to a compact set:
in fact, not only does the method fail to converge to equilibrium, its average actually converges to an irrelevant action profile (an artifact of the trajectory's divergence).
This makes such failures particularly spurious and difficult to detect:
even though the algorithm stabilizes, the players' regret continues to accrue at a linear rate.

A simple remedy to the above would be to run \eqref{eq:OptMD} with an increasing regularization schedule, \eg of the form $\vwt[\regpar] \propto \sqrt{\run}$.
In some cases, this could indeed stabilize the algorithm and ensure convergence, but at a much slower rate \textendash\ in terms of both regret minimization and convergence speed.
An alternative would be to employ an adaptive schedule in the spirit of \cite{RS13-NIPS} (see \cref{sec:adaptive} for details), but even this is not enough:
as was shown by \citet{OP18}, when the ``Bregman diameter'' $\breg_{\points} \defeq [2\sup_{\arpoint,\point} \breg(\arpoint,\point)]^{1/2}$ of $\points$ is unbounded, mirror-based methods with an increasing regularization parameter may \textendash\ and often \emph{do} \textendash\ lead to \emph{superlinear} regret.%
\footnote{The precise result of \cite{OP18} concerns mirror descent;
however, it is straightforward to adapt their argument to show that, for example, the \ac{PEG} variant of \eqref{eq:OptMD} run on $\points=\R$ against the sequence $\vwt[\gvec] = (-1)^{\floor{(2\run-1)/\nRuns}}$ imposes $\Omega(\nRuns^{3/2})$ regret for both $\sqrt{\run}$ and adaptive regularization schedules.}
This ``finite Bregman diameter'' condition rules out both MWU on the simplex and gradient descent in unbounded domains, and it is the first requirement that we relax in the next section.

\section{Optimistic averaging, adaptation, and stabilization}
\label{sec:adaptive}

\subsection{Optimistic dual averaging}
\label{sec:OptDA}

Viewed abstractly, the failures described above are due to the following aspect of \eqref{eq:OptMD}:
\begin{center}
\emph{With an increasing schedule for $\vt[\regpar]$, new information enters \eqref{eq:OptMD} with a decreasing weight.}
\end{center}
From a learning viewpoint, this behavior is undesirable because it gives more weight to earlier, uninformed updates, and less weight to more recent, more relevant ones (so, mutatis mutandis, an adversary could push the algorithm very far from an optimal point in the starting iterations of a given window of play).
To account for this disparity, we build on an idea originally due to \citet{Nes07}, and introduce the \acdef{OptDA} method as:
\begin{equation}
    \tag{OptDA}
    \label{eq:OptDA}
    \begin{aligned}
    \vwt[\state] &= \argmin_{\point\in\vw[\points]} \sum_{\runalt=1}^{\run-1}\product{\vwt[\gvec][\play][\runalt]}{\point} + \vwt[\regpar]\vw[\hreg](\point),\\
    \vwtinter &= \argmin_{\point\in\vw[\points]}\thinspace
    \product{\vwtlast[\gvec]}{\point} + \vwt[\regpar]\vw[\breg](\point, \vwt[\state]).
\end{aligned}
\end{equation}
In contrast to \eqref{eq:OptMD}, the base state $\vt[\state]$ of \eqref{eq:OptDA} is produced by aggregating all feedback received with the \emph{same weight} (the first line in the algorithm);
subsequently, each player selects an action $\vwt[\action]=\vwtinter[\state]$ after taking a ``conservatively optimistic'' step forward (this one with a decreasing step-size, for reasons of stability).
As we will show, this different aggregation architecture plays a crucial role in overcoming the ``finite Bregman diameter'' limitation of \eqref{eq:OptMD}.

From a design perspective, the core elements of \eqref{eq:OptDA} are
\begin{enumerate*}
[\itshape a\upshape)]
\item
the choice of ``learning rate'' parameters $\vwt[\regpar]$ (which now acts both as a regularization weight and as an inverse step-size);
and
\item
the choice of regularizer $\vw[\hreg]$, which defines the ``mirror map'' $\vw[\mirror]\from\dvec\to \argmax_{\point\in\vw[\points]} \product{\dvec}{\point} - \vw[\hreg](\point)$ that determines the update of the base state $\vwt[\state]$ of \eqref{eq:OptDA}.
\end{enumerate*}
We discuss both elements in detail in the remainder of this section.

\begin{remark*}[Optimistic FTRL]
Another closely related algorithm is the optimistic variant of \acli{FTRL} (\acs{OptFTRL})\acused{OptFTRL}
\cite{ALLW18,MY16,SALS15}, whose updates follow the recursion
\begin{equation}
    \tag{OptFTRL}
    \label{eq:OptFTRL}
    \vwtinter[\state] = \argmin_{\point\in\vw[\points]}
    \left\langle
    \sum_{\runalt=1}^{\run-1}\vwt[\gvec][\play][\runalt]+\vwtlast[\gvec],
    \point\right\rangle
    + \vwt[\regpar]\vw[\hreg](\point).
\end{equation}
Compared to \eqref{eq:OptDA}, \eqref{eq:OptFTRL} aggregates all the relevant feedback, including $\vwtlast[\gvec]$ directly in the dual space.
In this way, there is no need to define $\vwt[\state]$, which acts as an auxiliary state to produce the actual iterate $\vwtinter[\state]$ in both \eqref{eq:OptMD} and \eqref{eq:OptDA}.
Nonetheless, while the regret bounds presented in \cref{sec:regret} can also be obtained for adaptive variants of \eqref{eq:OptFTRL}, the fact that all updates are performed in the dual space prevents us from proving the last-iterate convergence results of \cref{sec:last}.
\end{remark*}

\subsection{Learning rate adaptation}

Since running the algorithm with a $\sqrt{\run}$ learning rate schedule is, in general, too pessimistic, we will consider an adaptive policy in the spirit of \citet{RS13-NIPS},
namely
\begin{equation}
    \label{eq:adaptive-reg}
    \txs
    \tag{Adapt}
    \vwt[\regpar] = \sqrt{\vw[\tau]+\sum_{\runalt=1}^{\run-1}\vwt[\increment]}
    ~~~~\text{where}~~~
    \vwt[\increment] = \vwpdual[\norm{\vwt[\gvec]-\vwtlast[\gvec]}^2].
\end{equation}
In the above, $\vw[\tau] > 0$ is a player-specific constant that can be chosen freely by each player,
and $\vwpdual[\norm{\cdot}]\from\dvec\to\max_{\vwp[\norm{\point}]\le1}{\product{\dvec}{\point}}$ is the dual norm of $\vwp[\norm{\cdot}]$, itself a norm on $\R^{\vdim_{\play}}$.
Intuitively, in the favorable case (\eg when the environment is stationary), the increments $\vwt[\increment]$ will eventually vanish, so the policy \eqref{eq:adaptive-reg} will be a proxy for the ``constant step-size'' case.
By contrast, in a non-favorable\,/\,adversarial setting, we have $\vwt[\increment] = \Theta(1)$ and $\vwt[\regpar]$ grows as $\Theta(\sqrt{\run})$, which makes the algorithm robust.

We should also note here that \eqref{eq:adaptive-reg} involves \emph{exclusively} player-specific quantities, and its computation only makes use of information that is available to each player \emph{locally}.
This is not always the case for other adaptive learning rates considered in the game-theoretic literature, \eg as in \cite{LZMJ20,ABM19,ABM21}.
Even though this ``local information'' desideratum is very natural, very few algorithms with this property have been analyzed in the game theory literature.

\subsection{Reciprocity and stabilization}

In the aggregation step of \eqref{eq:OptDA}, the mirror map $\vw[\mirror]$ maps a dual vector back to the primal space to obtain $\vwt[\state]$.
For this reason, to analyze the players' sequence of play, we will make use of the Fenchel coupling, a ``primal-dual'' distance of measure first introduced in \cite{MS16,MerSta18,BM17}.
To define it, let $\dual{(\vw[\hreg])}$ be the Fenchel conjugate of $\vw[\hreg]$, \ie $\dual{(\vw[\hreg])}(\dvec)=\max_{\point\in\vw[\points]}\product{\dvec}{\point}-\vw[\hreg](\point)$.
The Fenchel coupling induced by $\vw[\hreg]$ between a primal point $\arpoint\in\vw[\points]$ and a dual vector $\dvec\in\R^{\vdim_{\play}}$ is then defined as 
\begin{equation}
    \notag
    \vw[\fench](\arpoint,\dvec) = \vw[\hreg](\arpoint) + \dual{(\vw[\hreg])}(\dvec) - \product{\dvec}{\arpoint}.
\end{equation}
One key property of the Fenchel coupling is that $\vw[\fench](\arpoint,\dvec)\ge(1/2)\vwp[\norm{\vw[\mirror](\dvec)-\arpoint}^2]$ for some norm $\vwp[\norm{\cdot}]$ on $\vw[\points]$.
Therefore, it can be used to measure the convergence of a sequence.
In particular, $\vw[\mirror](\vwt[\dstate])\to\vw[\arpoint]$ whenever $\vw[\fench](\vw[\arpoint],\vwt[\dstate])\to0$.
For several results concerning the trajectory convergence of the algorithm, it will also be convenient to assume the converse, that is

\begin{assumption}[Fenchel reciprocity \cite{MZ19}]
\label{asm:Fenchel-recp}
For any $\play\in\players$, $\vw[\arpoint]\in\vw[\points]$, and $\seqinf[\vwt[\dstate]]$ a sequence of dual vectors such that $\vw[\mirror](\vwt[\dstate])\to\vw[\arpoint]$, we have $\vw[\fench](\vw[\arpoint],\vwt[\dstate])\to0$.
\end{assumption}

Given the similarity between the Fenchel coupling and the Bregman divergence (which we discuss in detail in \cref{app:mirror}),
Fenchel reciprocity may be regarded as a primal-dual analogue of the more widely used Bregman reciprocity condition \cite{CT93,Kiw97}.

\asmtag{\ref*{asm:Fenchel-recp}$'$}
\begin{assumption}[Bregman reciprocity]
\label{asm:Bregman-recp}
For any $\play\in\players$, $\vw[\arpoint]\in\vw[\points]$, and $\seqinf[\vwt[\state]]$ a sequence of primal points such that $\vwt[\state]\to\vw[\arpoint]$, it holds $\vw[\breg](\vw[\arpoint],\vwt[\state])\to0$.
\end{assumption}

It can be verified that Bregman reciprocity is indeed implied by Fenchel reciprocity, but the opposite is generally not true.
For example, when $\vw[\hreg]$ is the quadratic regularizer, Bregman reciprocity always holds while Fenchel reciprocity is only guaranteed when $\vw[\points]$ is a polytope.

In this regard, it is desirable to devise an algorithm with the same regret guarantees as \ac{OptDA} while only requiring the less stringent Bregman reciprocity condition to ensure the convergence of the trajectory.
This motivates us to introduce \acdef{DS-OptMD}, in which player $\play$ recursively computes their realized action $\vwt[\action]=\vwtinter[\state]$ by
\begin{equation}
    \tag{DS-OptMD}
    \label{eq:DS-OptMD}
    \begin{aligned}
    \vwt[\state] &= \argmin_{\point\in\vw[\points]}\thinspace
    \product{\vwtlast[\gvec]}{\point} + \vwtlast[\regpar]\vw[\breg](\point, \vwtlast[\state])
    + (\vwt[\regpar]-\vwtlast[\regpar])\vw[\breg](\point, \vwt[\state][\play][\start]),
    \\
    \vwtinter &= \argmin_{\point\in\vw[\points]}\thinspace
    \product{\vwtlast[\gvec]}{\point} + \vwt[\regpar]\vw[\breg](\point, \vwt[\state]).
    \end{aligned}
\end{equation}
The stabilization step (\ie the anchoring term that appears in the first line of the update) is inspired by \citet{FHPF20},
and it has been shown to help the algorithm achieve no regret even when the Bregman diameter is unbounded.
Moreover, by standard arguments \cite{KM17,Mer19,JKM19,FHPF20}, we can show that when the mirror map is interior-valued, \ie $\im\vw[\mirror]=\relint\vw[\points]$ (here $\relint\vw[\points]$ denotes the relative interior of $\vw[\points]$), the update of \eqref{eq:DS-OptMD} coincides with that of \eqref{eq:OptDA}.%
\footnote{Precisely, this requires to set $\vt[\state][\start]=\argmin_{\point\in\vw[\points]}\vw[\hreg](\points)$ in \eqref{eq:DS-OptMD}.}
One important example which falls into this situation is the (stabilized) \ac{OMWU} algorithm \cite{DP19}, whose update can be written in a coordinate-wise way as follows 
\begin{equation}
    \label{eq:OMWU-dynamic}
    \tag{OMWU}
    \vwct[\action] = \vwctinter[\state]
    = \frac{\exp(-(\sum_{\runalt=1}^{\run-1}\vctnp[\gvec][\indg][\runalt]+\vctnp[\gvec][\indg][\run-1])/\vwt[\regpar])}
    {\sum_{\indgalt=1}^{\vdim_{\play}}\exp(-(\sum_{\runalt=1}^{\run-1}\vctnp[\gvec][\indgalt][\runalt]+\vctnp[\gvec][\indgalt][\run-1])/\vwt[\regpar])}.
\end{equation}

\subsection{A template descent inequality}
\label{sec:template-descent}

For the results presented in this work, we provide an umbrella analysis for \ac{OptDA} and \ac{DS-OptMD} by means of the following energy inequality.

\begin{lemma}
\label{lem:template-descent}
Suppose that player $\play$ runs \eqref{eq:OptDA} or \eqref{eq:DS-OptMD}.
Then, for any $\vw[\arpoint]\in\vw[\points]$, we have
\begin{equation}
    \label{eq:template-descent-main}
    \begin{aligned}
    \vwtupdate[\regpar]\vwtupdate[\estseq](\vw[\arpoint])
    &\le
    \vwt[\regpar]\vwt[\estseq](\vw[\arpoint])
    - \product{\vwt[\gvec]}{\vwtinter-\vw[\arpoint]}
    + (\vwtupdate[\regpar]-\vwt[\regpar])\vw[\armeasure](\vw[\arpoint])\\
    &~+ \product{\vwt[\gvec]-\vwtlast[\gvec]}{\vwtinter-\vwtupdate[\state]}
    - \vwt[\regpar]\vw[\breg](\vwtupdate,\vwtinter) - \vwt[\regpar]\vw[\breg](\vwtinter,\vwt[\state]),
    \end{aligned}
\end{equation}
where:
\begin{enumerate}
[\upshape(\itshape i\hspace*{.5pt}\upshape)]
\item
$\vwt[\estseq](\vw[\arpoint]) = \vw[\fench](\vw[\arpoint],\vwt[\dstate])$,
$\vw[\armeasure](\vw[\arpoint]) = \vw[\hreg](\vw[\arpoint]) - \min\vw[\hreg]$
for \eqref{eq:OptDA}.
\item
$\vwt[\estseq](\vw[\arpoint]) = \vw[\breg](\vw[\arpoint],\vwt[\state])$,
$\vw[\armeasure](\vw[\arpoint]) = \vw[\breg](\vw[\arpoint],\vwt[\state][\play][\start])$
for \eqref{eq:DS-OptMD}.
\end{enumerate}
\end{lemma}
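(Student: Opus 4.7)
The engine shared by the two cases is the three-point optimality identity for Bregman proximal maps: if $x^{\star}=\arg\min_{x\in\vw[\points]}\{\langle \psi,x\rangle+\eta\,D(x,y)\}$ on a closed convex set, then for every $z\in\vw[\points]$
$$\langle \psi,x^{\star}-z\rangle\;\le\;\eta\bigl(D(z,y)-D(z,x^{\star})-D(x^{\star},y)\bigr).$$
I would prove the inequality separately for \eqref{eq:DS-OptMD} and \eqref{eq:OptDA}, joined by a common algebraic rearrangement at the end.

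For \eqref{eq:DS-OptMD}, I apply the identity twice. First, to the extrapolation step with $\psi=g_{t-1}$, center $X_t$, and minimizer $X_t^{+}$, tested at $z=X_{t+1}$; this yields $\langle g_{t-1},X_t^{+}-X_{t+1}\rangle\le\eta_t\bigl(D(X_{t+1},X_t)-D(X_{t+1},X_t^{+})-D(X_t^{+},X_t)\bigr)$. Second, I notice that the stabilized base update is itself a single Bregman proximal minimization whose objective is $\eta_{t+1}$-strongly convex: the combined anchor $\eta_t D(\cdot,X_t)+(\eta_{t+1}-\eta_t)D(\cdot,X_1)$ has the structure of a single Bregman term, so applying the identity at $z=p$ delivers
$$\eta_{t+1}D(p,X_{t+1})\le\eta_t D(p,X_t)+(\eta_{t+1}-\eta_t)D(p,X_1)-\langle g_t,X_{t+1}-p\rangle-\eta_t D(X_{t+1},X_t)-(\eta_{t+1}-\eta_t)D(X_{t+1},X_1).$$
I then split $\langle g_t,X_{t+1}-p\rangle=\langle g_t,X_t^{+}-p\rangle+\langle g_t-g_{t-1},X_{t+1}-X_t^{+}\rangle+\langle g_{t-1},X_{t+1}-X_t^{+}\rangle$ and feed the last summand into the first optimality bound; the two $\eta_t D(X_{t+1},X_t)$ terms cancel exactly, while the residual $-(\eta_{t+1}-\eta_t)D(X_{t+1},X_1)\le 0$ is simply dropped.

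For \eqref{eq:OptDA}, set $\Phi_t(x)\defeq\sum_{s<t}\langle g_s,x\rangle+\eta_t h(x)$, so that $X_t=\arg\min\Phi_t$ and, unfolding the Fenchel coupling via Fenchel--Young, $\eta_t F(p,Y_t)=\eta_t h(p)+\sum_{s<t}\langle g_s,p\rangle-\Phi_t(X_t)$. Subtracting consecutive copies gives
$$\eta_{t+1}F(p,Y_{t+1})-\eta_t F(p,Y_t)=(\eta_{t+1}-\eta_t)h(p)+\langle g_t,p\rangle+\Phi_t(X_t)-\Phi_{t+1}(X_{t+1}).$$
Using $\Phi_{t+1}=\Phi_t+\langle g_t,\cdot\rangle+(\eta_{t+1}-\eta_t)h$, the Bregman strong-convexity bound $\Phi_t(X_{t+1})\ge\Phi_t(X_t)+\eta_t D(X_{t+1},X_t)$, and the crude estimate $h(X_{t+1})\ge\min h$ (invoked together with $\eta_{t+1}\ge\eta_t$), I reduce the right-hand side to $(\eta_{t+1}-\eta_t)(h(p)-\min h)+\langle g_t,p-X_{t+1}\rangle-\eta_t D(X_{t+1},X_t)$. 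The same three-way splitting of $\langle g_t,p-X_{t+1}\rangle$ and the same use of optimality for $X_t^{+}$ then close the proof exactly as in the previous case.

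The argument is essentially a bookkeeping exercise; the only delicate point is arranging the two parallel derivations so that the $\eta_t D(X_{t+1},X_t)$ quantities produced by the two optimality inequalities cancel exactly, leaving as residuals only $-\eta_t D(X_{t+1},X_t^{+})-\eta_t D(X_t^{+},X_t)$. Keeping the sign of the stabilization correction and the time indices straight across the two cases is where care is needed.
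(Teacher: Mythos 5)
Your proposal is correct and takes essentially the same route as the paper's proof: the same optimality/three-point inequalities for the two steps, the same splitting of $\langle g_t, X_t^{+}-p\rangle$, the same exact cancellation of the $\eta_t D(X_{t+1},X_t)$ terms, and the same dropped nonnegative remainders. Your OptDA bookkeeping through the value function $\Phi_t$ is just a repackaging of the paper's Fenchel-coupling argument (indeed $\Phi_t(X_{t+1})-\Phi_t(X_t)=\eta_t F(X_{t+1},Y_t)$, so your "Bregman growth" step is exactly the paper's $F\ge D$ lemma), and the only glossed detail is that the first-order optimality conditions must be invoked with the specific continuous selection $\nabla h$, which the paper handles via its mirror-map optimality lemma.
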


The proof of \cref{lem:template-descent} combines several techniques used in the analysis of regularized online learning algorithms and is deferred to \cref{app:mirror}.
As a direct consequence of \cref{lem:template-descent}, we have
\begin{equation}
    \label{eq:template-regret-main}
    \sum_{\run=1}^{\nRuns}\product{\vwt[\gvec]}{\vwtinter[\state]-\vw[\arpoint]}
    \le \vwt[\regpar][\play][\nRuns+1]\vw[\armeasure](\vw[\arpoint])
    + \sum_{\run=1}^{\nRuns}
    \frac{\vwpdual[\norm{\vwt[\gvec]-\vwtlast[\gvec]}^2]}{\vwt[\regpar]}
    -\sum_{\run=2}^{\nRuns}\frac{\vwtlast[\regpar]}{8} \vwp[\norm{\vwtinter[\state]-\vwtpast[\state]}^2].
\end{equation}
This is very similar to the \acdef{RVU} property introduced by \citet{SALS15}, but it now applies to an algorithm with possibly non-constant learning rate (and, of course, to continuous action spaces).
By invoking the individual convexity assumption, \eqref{eq:template-regret-main} gives an implicit upper bound on the individual regret of each player.
Moreover, \eqref{eq:template-descent-main} relates the distance measure of round $\run$ to that of round $\run+1$. Therefore, we can also leverage \cref{lem:template-descent} to prove the convergence of the learning dynamics.
In \cref{app:template} we explain in detail how this template inequality can be used to derive exactly the same guarantees for other learning algorithms as long as they satisfy a version of \eqref{eq:template-descent-main}.

\section{Optimal regret bounds}
\label{sec:regret}

In this section, we derive a series of min-max optimal regret bounds, both when the opponents are adversarial and when all the players interact according to prescribed algorithms.
The proofs of our results leverage the template inequality \eqref{eq:template-regret-main} and are deferred to \cref{app:regret}.

\subsection{Regret guarantees: individual and social}

Our first result provides a worst-case guarantee for \emph{any} sequence of play realized by the opponents.

\begin{restatable}{theorem}{AdvReg}
\label{thm:adversarial-regret}
Suppose that \cref{asm:convexity+smoothness} holds, and a player $\play\in\players$ adopts \eqref{eq:OptDA} or \eqref{eq:DS-OptMD} with the adaptive learning rate \eqref{eq:adaptive-reg}.
If $\vw[\arpoints] \subseteq \vw[\points]$ is bounded and $\gbound = \sup_{\run} \norm{\vwt[\gvec]}$,
the regret incurred by the player is bounded as
$\vwt[\reg][\play][\nRuns](\vw[\arpoints])=\bigoh(\gbound\sqrt{\nRuns}+\gbound^2)$.
\end{restatable}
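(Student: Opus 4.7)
The idea is to feed \eqref{eq:template-regret-main} to the individual regret and then show that every term on its right-hand side is controlled by $\gbound$ and $\nRuns$ alone, with no dependence on the opponents' choices.

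Since $\vwt[\gvec]\in\subd_{\play}\vw[\loss](\vwt[\action],\vwt[\jaction][\playexcept])$ and $\vw[\loss]$ is convex in its $\play$-th argument by \cref{asm:convexity+smoothness}, a standard first-order bound yields, for every $\vw[\arpoint]\in\vw[\arpoints]$,
\[
\vw[\loss](\vwt[\action],\vwt[\jaction][\playexcept]) - \vw[\loss](\vw[\arpoint],\vwt[\jaction][\playexcept])
\leq \product{\vwt[\gvec]}{\vwt[\action]-\vw[\arpoint]}
= \product{\vwt[\gvec]}{\vwtinter[\state]-\vw[\arpoint]}.
\]
Summing over $\run$ and taking the sup over $\vw[\arpoint]\in\vw[\arpoints]$ reduces the claim to bounding the right-hand side of \eqref{eq:template-regret-main}, uniformly in $\vw[\arpoint]$.

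I would then handle the three resulting pieces in turn. (i) For the \emph{prox term} $\vwt[\regpar][\play][\nRuns+1]\vw[\armeasure](\vw[\arpoint])$, the uniform gradient bound gives $\vwt[\increment]=\vwpdual[\norm{\vwt[\gvec]-\vwtlast[\gvec]}^{2}]\leq 4\gbound^{2}$, so $\vwt[\regpar][\play][\nRuns+1]\leq\sqrt{\vw[\tau]+4\gbound^{2}\nRuns}=\bigoh(\gbound\sqrt{\nRuns})$; continuity of $\vw[\armeasure]$ (which equals $\vw[\hreg]-\min\vw[\hreg]$ for \eqref{eq:OptDA} and $\vw[\breg](\cdot,\vw[\state][\play][\start])$ for \eqref{eq:DS-OptMD}) on the bounded set $\vw[\arpoints]$ yields a finite $D\defeq\sup_{\vw[\arpoint]\in\vw[\arpoints]}\vw[\armeasure](\vw[\arpoint])$, producing an overall contribution of $\bigoh(D\gbound\sqrt{\nRuns})$. (ii) For the \emph{adaptive sum} $\sum_{\run=1}^{\nRuns}\vwt[\increment]/\vwt[\regpar]$, the standard AdaGrad-style telescoping uses $\sqrt{x}-\sqrt{y}\geq(x-y)/(2\sqrt{x})$ with $x=\vw[\tau]+\sum_{s\leq\run}\vwt[\increment][\play][s]$ and $y=\vw[\tau]+\sum_{s<\run}\vwt[\increment][\play][s]$ to obtain a bound of order $\bigoh(\vwt[\regpar][\play][\nRuns+1]+\gbound^{2}/\sqrt{\vw[\tau]})=\bigoh(\gbound\sqrt{\nRuns}+\gbound^{2})$. (iii) The \emph{stability term} $-\sum_{\run=2}^{\nRuns}(\vwtlast[\regpar]/8)\vwp[\norm{\vwtinter[\state]-\vwtpast[\state]}^{2}]$ is non-positive and is simply discarded.

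Assembling these three bounds inside the template inequality gives
\[
\vwt[\reg][\play][\nRuns](\vw[\arpoints])
\leq \bigoh(D\gbound\sqrt{\nRuns}) + \bigoh(\gbound\sqrt{\nRuns}+\gbound^{2})
= \bigoh(\gbound\sqrt{\nRuns}+\gbound^{2}),
\]
as claimed. The one subtle point, and the only non-routine calculation in the argument, is the telescoping in step (ii): the denominator $\vwt[\regpar]$ depends only on increments up to round $\run-1$, whereas the telescoping identity naturally produces the ``current'' partial sum $\vwt[\regpar][\play][\run+1]^{2}$ in the denominator. Reconciling this off-by-one mismatch is the main obstacle; it is handled by the case split $\vwt[\regpar]\geq 2\gbound$ versus $\vwt[\regpar]<2\gbound$. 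In the first regime, $\vwt[\increment]\leq 4\gbound^{2}\leq\vwt[\regpar]^{2}$, so $\vwt[\regpar][\play][\run+1]^{2}\leq 2\vwt[\regpar]^{2}$ and the telescoping proceeds up to a $\sqrt{2}$ factor; in the second regime, the cumulative increments are themselves bounded by $4\gbound^{2}$, so the corresponding summands contribute at most $\bigoh(\gbound^{2})$, which is absorbed into the additive constant in the final bound.
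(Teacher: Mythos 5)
Your proposal is correct and follows essentially the same route as the paper: bound the linearized regret via the template inequality \eqref{eq:template-regret-main}, drop the negative stability term, bound $\vwt[\regpar][\play][\nRuns+1]$ using $\vwt[\increment]\le 4\gbound^2$, and control the adaptive sum through the AdaGrad-type estimate of \cref{lem:adaptive} together with an $\bigoh(\gbound^2)$ correction for the one-step lag in the denominator. The only (immaterial) difference is how that lag is treated: the paper splits $\vwt[\increment]/\vwt[\regpar]=\vwt[\increment]/\vwtupdate[\regpar]+(1/\vwt[\regpar]-1/\vwtupdate[\regpar])\vwt[\increment]$ and telescopes the second sum to $2\gbound^2$, whereas you use the case split $\vwt[\regpar]\gtrless 2\gbound$, which produces the same $\bigoh(\gbound^2)$ additive term.
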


\cref{thm:adversarial-regret} is a direct consequence of \eqref{eq:template-regret-main} and the definition of the adaptive learning rate.
It addresses what is traditionally referred to as the adversarial scenario, since we do not make any assumptions on how the opponents' actions are selected;
in particular, they may choose the actions so as to maximize the player's cumulative loss.
Even in this case, \cref{thm:adversarial-regret} shows that the two adaptive algorithms that we consider would achieve no regret provided that the sequence of feedback is bounded (this is for example the case when $\points$ is compact).


We now proceed to show that, if all players adhere to one of the adaptive policies discussed so far, the social regret is at most constant.

\begin{theorem}
\label{thm:social-regret-bounded}
Suppose that \cref{asm:convexity+smoothness} holds and all players $\play\in\players$ use \eqref{eq:OptDA} or \eqref{eq:DS-OptMD} with the adaptive learning rate \eqref{eq:adaptive-reg}.
Then, for every bounded comparator set $\arpoints\subseteq\points$, the players' social regret is bounded as $\vt[\reg][\nRuns](\arpoints)=\bigoh(1)$.
\end{theorem}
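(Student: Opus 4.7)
The plan is to derive the bound via an RVU-style argument in the spirit of \citet{SALS15}, adapted to continuous actions and the adaptive step-sizes of \eqref{eq:adaptive-reg}. Fix a bounded comparator profile $\bb{\arpoint}\in\arpoints$ and let $D_\play \defeq \sup_{\vw[\arpoint]\in\vw[\arpoints]}\vw[\armeasure](\vw[\arpoint])$; this is finite since $\vw[\arpoints]$ is bounded and $\vw[\armeasure]$ is continuous. By the individual-convexity part of \cref{asm:convexity+smoothness}, $\vw[\loss](\vwt[\action],\vwt[\jaction][\playexcept])-\vw[\loss](\vw[\arpoint],\vwt[\jaction][\playexcept]) \le \langle\vwt[\gvec],\vwtinter[\state]-\vw[\arpoint]\rangle$. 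Summing over $\run$ and plugging into \eqref{eq:template-regret-main} bounds the individual regret of each player $\play$ by
\[
    \vwt[\reg][\play][\nRuns](\vw[\arpoints])
    \le D_\play\,\vwt[\regpar][\play][\nRuns+1]
    + \sum_{\run=1}^{\nRuns}\frac{\vwt[\increment]}{\vwt[\regpar]}
    - \sum_{\run=2}^{\nRuns}\frac{\vwtlast[\regpar]}{8}\,\vwp[\norm{\vwtinter[\state]-\vwtpast[\state]}^2].
\]

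Two standard estimates for \eqref{eq:adaptive-reg} then come into play. A classical AdaGrad-type telescoping bound gives $\sum_{\run=1}^{\nRuns}\vwt[\increment]/\vwt[\regpar] \le 2\,\vwt[\regpar][\play][\nRuns+1] + O_\play(1)$, while the monotonicity of $\vwt[\regpar]$ combined with $\vwt[\regpar]\ge\sqrt{\vw[\tau]}$ lets us lower-bound the dissipation by $(\sqrt{\vw[\tau]}/8)Z_\play$, where $Z_\play \defeq \sum_{\run=2}^{\nRuns}\vwp[\norm{\vwtinter[\state]-\vwtpast[\state]}^2]$. Summing the per-player inequalities over $\play\in\players$ yields
\[
    \vt[\reg][\nRuns](\arpoints)
    \le \sum_{\play\in\players}(D_\play+2)\,\vwt[\regpar][\play][\nRuns+1]
    - \sum_{\play\in\players}\frac{\sqrt{\vw[\tau]}}{8}\,Z_\play + O(1).
\]

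The crucial coupling then comes from the Lipschitz continuity of $\jvecfield$: equipping $\points$ with a suitable joint norm obtained by aggregating the players' individual norms, we obtain $\vwt[\increment] \le L^2 \sum_{\playalt\in\players}\vw[\norm{\vwtinter[\state][\playalt]-\vwtpast[\state][\playalt]}^2][\playalt]$ for some Lipschitz constant $L$ (here I used $\vwt[\action]=\vwtinter[\state]$). Aggregating over $\play$ and $\run$ gives $\sum_{\play}\sum_{\run}\vwt[\increment] \le \nPlayers L^2 Z$ with $Z \defeq \sum_\play Z_\play$. Combined with sub-additivity of $\sqrt{\cdot}$ and Cauchy--Schwarz, this yields $\sum_\play \vwt[\regpar][\play][\nRuns+1] \le K_1 + \nPlayers L\sqrt{Z}$, and the previous display becomes $\vt[\reg][\nRuns](\arpoints) \le K_2 + K_3\sqrt{Z} - K_4 Z$ with $K_4 = \tfrac18 \min_\play \sqrt{\vw[\tau]} > 0$. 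Elementary calculus shows the right-hand side is uniformly bounded in $Z \ge 0$ (maximized at $Z^\star = (K_3/(2K_4))^2$), so $\vt[\reg][\nRuns](\arpoints) = \bigoh(1)$ independently of $\nRuns$. The main technical hurdle is precisely this decoupling: each player's dissipation only controls their own trajectory oscillations, while the learning-rate increments aggregate across players through $\jvecfield$; the RVU template resolves this because the quadratic dissipation dominates a merely $\sqrt{\cdot}$ positive contribution once the offset $\vw[\tau] > 0$ is in place.
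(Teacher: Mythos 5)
Your overall architecture \textendash\ convert the dissipation into a $-c\sum_{\run}\vwt[\increment]$ budget via the Lipschitz continuity of $\jvecfield$, and then note that the remaining positive terms grow only like $\sqrt{\sum_{\run}\vwt[\increment]}$, so a negative-leading-coefficient quadratic in $\sqrt{Z}$ closes the bound \textendash\ is exactly how the paper treats the $\vwt[\regpar][\play][\nRuns+1]\vw[\inibound]$ term (cf.\ \eqref{eq:OptDA-regret-finite-1}). The gap is in the step you dismiss as a ``classical AdaGrad-type telescoping bound'', namely $\sum_{\run=1}^{\nRuns}\vwt[\increment]/\vwt[\regpar]\le 2\,\vwt[\regpar][\play][\nRuns+1]+O_{\play}(1)$. \cref{lem:adaptive} yields this only with $\vwtupdate[\regpar]$ in the denominator; under \eqref{eq:adaptive-reg} the learning rate lags one step, so $\vwt[\regpar]$ does not contain $\vwt[\increment]$, and converting between the two costs a term of order $\sup_{\run}\vwt[\increment]$ \textendash\ this is precisely the $\bigoh(\gbound^2)$ appearing in \cref{thm:adversarial-regret}. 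Concretely, with $\vw[\tau]=1$ and a single nonzero increment $\vwt[\increment][\play][2]=M$, the left-hand side equals $M$ while the right-hand side is $2\sqrt{1+M}+O(1)$, so no universal $O_{\play}(1)$ exists. In \cref{thm:social-regret-bounded} the action sets may be unbounded and no bound $\gbound$ on the feedback is assumed (nor can one be extracted a priori from the dynamics), so this inequality is unjustified. The crude alternative $\vwt[\increment]/\vwt[\regpar]\le\vwt[\increment]/\sqrt{\vw[\tau]}$ does not rescue the argument either: it produces a term \emph{linear} in $Z$ with coefficient of order $\nPlayers\vw[\lips]^2/\sqrt{\vw[\tau]}$, which can dominate your dissipation coefficient $K_4$ because $\vw[\tau]$ is chosen by the players without any knowledge of $\vw[\lips]$.

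The paper closes exactly this hole by a different device: it reserves half of the Lipschitz-converted dissipation, $\vw[\deccst]\sum_{\run}\vwt[\increment]$, to absorb $\sum_{\run}\vwt[\increment]/\vwt[\regpar]$, and argues by dichotomy (\eqref{eq:OptDA-regret-finite-2}): either $\vwt[\regpar]$ stays bounded, in which case $\sum_{\run}\vwt[\increment]<+\infty$ and the term is an $\nRuns$-independent (trajectory-dependent) constant; or $\vwt[\regpar]\to+\infty$, in which case $1/\vwt[\regpar]\le\vw[\deccst]$ from some finite round onward and only finitely many summands are positive. If you replace your AdaGrad step by this case analysis, while keeping your $K_3\sqrt{Z}-K_4Z$ argument for the $\vwt[\regpar][\play][\nRuns+1]$ terms, you recover the paper's proof. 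As written, your argument only establishes the theorem under an additional bounded-feedback assumption (e.g.\ compact $\points$), which is strictly weaker than the stated result \textendash\ and covering unbounded action spaces is one of the main points of this theorem.
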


The closest result in the literature is that of \cite{SALS15}, which proves a constant regret bound for \emph{finite} games for all algorithms that satisfy the \ac{RVU} property.
\cref{thm:social-regret-bounded} improves upon this result in two fundamental aspects:
First, \cref{thm:social-regret-bounded} applies to \emph{any} continuous game with smooth and convex losses, 
not just mixed extensions of finite games.
Second,
the proposed policies do not require any prior knowledge about the game's parameters (such as the relevant Lipschitz constants and the like).

An additional appealing property of our analysis is that, to the best of our knowledge, this is the first guarantee that shaves off the logarithmic in $\nRuns$ factors in this specific setting for a method that is robust to adversarial opponents (\ie \cref{thm:adversarial-regret}).
This relies on a careful analysis of \eqref{eq:template-regret-main} with the specific learning rate \eqref{eq:adaptive-reg}.
We note additionally that, in \cref{thm:social-regret-bounded}, the players \emph{do not} need to use the same regularizer or even the same template algorithm:
As a matter of fact, the only requirement for this result to hold is that the players' sequence of play satisfies a version of the inequality \eqref{eq:template-regret-main}.

\subsection{Individual regret under variational stability}

We close this section by zooming in on a class of convex games known as \emph{variationally stable}:

\begin{definition}
A continuous convex game is \emph{variationally stable} if
the set $\sols$ of Nash equilibria of the game is nonempty and
\begin{equation}
    \label{eq:VS}
    \product{\jvecfield(\jaction)}{\jaction-\bb{\sol}}
    =\sumplayer\product{\vw[\vecfield](\jaction)}{\vwsp[\action]-\vwsp[\sol]}
    \ge0
    \quad
    \text{for all $\jaction\in\points$, $\bb{\sol}\in\sols$.}
\end{equation}
The game is \emph{strictly variationally stable} if \eqref{eq:VS} holds as a strict inequality whenever $\jaction\notin\sols$.%
\end{definition}

A notable family of games that verify the variational stability condition is monotone games (\ie $\jvecfield$ is monotone), which includes convex-concave zero-sum games, zero-sum polymatrix games, Cournot oligopolies, and Kelly auctions (\cref{ex:auction}) as several examples.
The last two examples satisfy in fact a more stringent diagonal strict concavity condition (\citet{Rosen65}), \ie the vector field $\jvecfield$ is strictly monotone, which implies the strict variational stability of the game.

\begin{remark*}
In the literature, the term ``variationally stable'' frequently signifies what we refer to as ``strictly variationlly stable'';
this is for example the case in \cite{MZ19}, where the concept was first introduced.
\end{remark*}

Under this stability condition, we derive a constant regret bound on the individual regrets of the players when they play against each other using a prescribed algorithm.

\begin{theorem}
\label{thm:inidividual-regret-bound}
Suppose that \cref{asm:convexity+smoothness} holds and all players $\play\in\players$ use \eqref{eq:OptDA} or \eqref{eq:DS-OptMD} with the adaptive learning rate \eqref{eq:adaptive-reg}.
If the game is variationally stable, then, for every bounded comparator set $\vw[\arpoints]\subseteq\vw[\points]$, the individual regret of player $\play\in\players$ is bounded as $\vwt[\reg][\play][\nRuns](\vw[\arpoints])=\bigoh(1)$.
\end{theorem}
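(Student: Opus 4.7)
The strategy is to apply the template regret bound \eqref{eq:template-regret-main} in two stages: first, summed across all players with the comparator set to a Nash equilibrium $\bb{\sol}\in\sols$, so that variational stability forces the gradient increments $\vwt[\increment]$ to be summable in $\run$; and second, applied to each player individually against an arbitrary comparator in $\vw[\arpoints]$, so that the boundedness of the adaptive learning rate $\vw[\regpar][\play][\nRuns+1]$ translates directly into an $\bigoh(1)$ regret guarantee.

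For the first stage, fix any $\bb{\sol}\in\sols$ (nonempty under variational stability), instantiate \eqref{eq:template-regret-main} at $\vw[\arpoint]=\vw[\sol]$ for every $\play\in\players$, and sum over $\play$. Individual convexity from \cref{asm:convexity+smoothness} upgrades the left-hand side into $\sum_{\run=1}^{\nRuns}\product{\jvecfield(\vt[\jaction])}{\vt[\jaction]-\bb{\sol}}$, which is $\geq 0$ by \eqref{eq:VS}. Rearranging therefore yields
\[
\tfrac{1}{8}\sum_{\play}\sum_{\run=2}^{\nRuns}\vwtlast[\regpar]\,\vwp[\norm{\vwtinter[\state]-\vwtpast[\state]}^2] \;\le\; \sum_{\play}\left(\vw[\regpar][\play][\nRuns+1]\vw[\armeasure](\vw[\sol]) \;+\; \sum_{\run=1}^{\nRuns}\frac{\vwt[\increment]}{\vwt[\regpar]}\right).
\]

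The critical step is to close this into a self-bounding estimate on $S\defeq\sum_{\play,\run}\vwt[\increment]$. The standard adaptive-step telescope $\sum_{\run}\vwt[\increment]/\vwt[\regpar]\le 2\vw[\regpar][\play][\nRuns+1]$ controls the rightmost sum, while on the left, the uniform lower bound $\vwtlast[\regpar]\ge\sqrt{\vw[\tau]}$ together with the Lipschitz continuity of $\jvecfield$ and $\vwt[\action]=\vwtinter[\state]$ gives $\vwt[\increment]\le C\sum_{\playalt}\vwp[\norm{\vwt[\action][\playalt]-\vwtlast[\action][\playalt]}^2]$ for an absolute constant~$C$. Substituting back and using $\vw[\regpar][\play][\nRuns+1]=\sqrt{\vw[\tau]+\sum_{\run}\vwt[\increment]}$, the inequality reduces to the shape $c\,S\le c'\sum_{\play}\sqrt{\vw[\tau]+\sum_{\run}\vwt[\increment][\play]}$, a ``linear~$\le$~square-root'' estimate (after Cauchy-Schwarz) that forces $S=\bigoh(1)$, and hence $\vw[\regpar][\play][\nRuns+1]=\bigoh(1)$ uniformly in $\nRuns$ for every player.

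With the adaptive step sizes so stabilized, the conclusion is a one-liner: invoking \eqref{eq:template-regret-main} once more for a single player with an arbitrary $\vw[\arpoint]\in\vw[\arpoints]$, using individual convexity on the left to recover the regret and discarding the non-positive trajectory term on the right, one obtains
\[
\vwt[\reg][\play][\nRuns](\vw[\arpoints]) \;\le\; \sup_{\vw[\arpoint]\in\vw[\arpoints]}\vw[\armeasure](\vw[\arpoint])\cdot\vw[\regpar][\play][\nRuns+1] \;+\; 2\vw[\regpar][\play][\nRuns+1] \;=\; \bigoh(1),
\]
finite because $\vw[\armeasure]$ is continuous and $\vw[\arpoints]$ is bounded. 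The hardest part is the self-bounding step: one must carefully track how the Lipschitz constant of $\jvecfield$, the strong-convexity modulus of the players' regularizers, and the constants hidden in \eqref{eq:adaptive-reg} interact so that the negative trajectory terms in the summed template inequality absorb a strictly positive multiple of $\sum_{\play}\sum_{\run}\vwt[\increment]$, leaving room for the ``square-root'' closure. This is the very same mechanism underlying \cref{thm:social-regret-bounded}; the difference here is that the role played there by the non-negativity of social regret is now played by \eqref{eq:VS}, which lets us isolate each player's individual regret rather than only the aggregate.
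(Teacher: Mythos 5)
Your overall plan is the same as the paper's: sum the template bound \eqref{eq:template-regret-main} at a Nash equilibrium, use \eqref{eq:VS} to drop the left-hand side, use Lipschitz continuity of $\jvecfield$ to turn the negative trajectory terms into a negative multiple of $\sum_{\play}\sum_{\run}\vwt[\increment]$, close a ``linear $\le$ square-root'' self-bounding inequality to get $\sum_{\run}\vwt[\increment]<+\infty$ and $\vwt[\regpar]\to\vw[\regpar]<+\infty$ (this is exactly \cref{lem:regpar-bounded}), and then conclude from the single-player template bound. The gap is inside your ``critical step'': the claimed telescope $\sum_{\run=1}^{\nRuns}\vwt[\increment]/\vwt[\regpar]\le 2\vwt[\regpar][\play][\nRuns+1]$ is not valid here. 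Under \eqref{eq:adaptive-reg} the denominator $\vwt[\regpar]$ only aggregates increments up to round $\run-1$, whereas the telescoping bound (\cref{lem:adaptive}) needs the \emph{current} increment in the denominator. Bridging this one-step lag requires a uniform bound on $\vwt[\increment]$ \textendash\ this is precisely why \cref{thm:adversarial-regret} assumes $\gbound=\sup_\run\norm{\vwt[\gvec]}<+\infty$ and pays an extra $\gbound^2$ \textendash\ but no such bound is available in the present all-players setting, where the action sets (hence the realized gradients) may be unbounded: a single huge increment arriving while $\vwt[\regpar]$ is still of order one makes $\vwt[\increment]/\vwt[\regpar]$ of order $\vwt[\increment]$, not $\bigoh\bigl(\sqrt{\smash[b]{\textstyle\sum_{\run}\vwt[\increment]}}\bigr)$. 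Nor can you simply fall back on $\vwt[\increment]/\vwt[\regpar]\le\vwt[\increment]$: its coefficient $1$ will in general exceed the coefficient of order $1/(\nPlayers\vw[\lips]^2)$ of the negative trajectory term that you have already spent to lower-bound the left side by a positive multiple of $\sum_{\play,\run}\vwt[\increment]$, so the self-bounding inequality no longer closes.

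The paper closes exactly this hole with a dichotomy on $\lim_{\run\to+\infty}\vwt[\regpar]$, established in the proof of \cref{thm:social-regret-bounded} (inequalities \eqref{eq:OptDA-regret-finite-1}\textendash\eqref{eq:OptDA-regret-finite-2}) and reused in \cref{lem:regpar-bounded}: either $\vwt[\regpar]$ stays bounded, in which case $\sum_{\run}\vwt[\increment]<+\infty$ and the problematic sum is trivially $\bigoh(1)$; or $\vwt[\regpar]\to+\infty$, in which case $1/\vwt[\regpar]\le\vw[\deccst]$ for all but finitely many rounds, so beyond that point the term is absorbed by half of the negative trajectory term, and the finitely many early rounds contribute a (trajectory-dependent, $\nRuns$-independent) constant. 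With this case analysis substituted for the false telescope, your stage-1 closure becomes the paper's argument, and your stage-2 step is then fine as written: once summability of the increments is known, $\vwt[\increment]/\vwt[\regpar]\le\vwt[\increment]$ already gives the constant bound, which is how the paper finishes. A minor cosmetic point: in stage~1 no convexity is needed to identify the summed linearized terms with $\product{\jvecfield(\vt[\jaction])}{\vt[\jaction]-\bb{\sol}}$ \textendash\ that is just the definition of the feedback; individual convexity is only needed in stage~2 to dominate the true regret by the linearized regret.
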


\cref{thm:inidividual-regret-bound} extends a range of results previously proved for \emph{finite} two-player, zero-sum games for various learning algorithms \citep{DDK11,KHSC18,RS13-NIPS}.
It also inherits the appealing attribute of the social regret bound of \cref{thm:social-regret-bounded} \textendash\ namely, that all logarithmic factors have been shaved off.

The main difficulty in the proof of \cref{thm:inidividual-regret-bound} is to show that the sequence of gradient increments $\seqinf[\vwt[\increment]]$ is actually summable for all $\play\in\players$.
Equivalently, this implies that each player's learning rate $\vwt[\regpar]$ converges to a finite constant that is automatically adapted to the smoothness landscape of the game.
To achieve this, we follow a proof strategy that is similar in spirit to the approach of \cite{ABM21} for solving variational inequalities;
however, our setting is considerably more complicated because each player's learning rate is different.


\section{Convergence of the day-to-day trajectory of play}
\label{sec:last}

So far, our results have focused on ``average'' measures of performance, namely the players' individual and social regret.
Even though the derived bounds are sharp, as we discussed in \cref{sec:setup}, they cannot be used to draw meaningful conclusions for the players' \emph{actual} sequence of play.
Our analysis in this section shows that, in fact, the proposed learning methods actually stabilize to a best response or a \acl{NE} in a number of relevant cases.
The proof details are deferred to \cref{app:convergence}.

\subsection{Convergence to best response against convergent opponents}

A fundamental consistency property for online learning in games is that any player should end up ``best responding'' to the action profile of all other players if their actions stabilize (or are stationary).
Formally, a player $\play\in\players$ is said to ``\emph{converge to best response}'' if, whenever the action profile $\vwt[\jaction][\playexcept]$ of all other players converges to some limit profile $\vw[\limp{\jaction}][\playexcept]\in\prod_{\playalt\neq\play}\vw[\points][\playalt]$, the sequence of actions $\vwt[\action][\play] \in \vw[\points][\play]$ of the focal player $\play\in\players$ converges itself to $\BR(\vw[\limp{\jaction}][\playexcept]) \defeq \argmin_{\vw[\action]\in\vw[\points]}\vw[\loss](\vw[\action],\vw[\limp{\jaction}][\playexcept])$.
We establish this key requirement below.

\begin{theorem}
\label{thm:cvg-best-response}
Suppose that \cref{asm:convexity+smoothness,asm:Fenchel-recp} \textpar{resp. \ref{asm:Bregman-recp}} hold, and a player $\play\in\players$ employs \eqref{eq:OptDA} \textpar{resp. \eqref{eq:DS-OptMD}} with the adaptive learning rate \eqref{eq:adaptive-reg}.
If $\vw[\points]$ is compact, the trajectory of chosen actions of the player in question converges to best response.
\end{theorem}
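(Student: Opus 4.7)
The plan is to view player $\play$ as facing an asymptotically stationary convex minimization problem. Set $L(\point)\defeq\vw[\loss](\point,\vw[\limp{\jaction}][\playexcept])$; by \cref{asm:convexity+smoothness} and compactness of $\vw[\points]$, $L$ is convex and continuous, so $L^{*}\defeq\min_{\vw[\points]}L$ is attained and $\BR(\vw[\limp{\jaction}][\playexcept])=\argmin L$ is nonempty. Fix any $\sol^{*}\in\BR(\vw[\limp{\jaction}][\playexcept])$. Since $\vw[\vecfield]$ and $\vw[\loss]$ are uniformly continuous on $\vw[\points]\times K$ for any compact $K\ni\vw[\limp{\jaction}][\playexcept]$, the residual
\[
\eta_{\run}\defeq\sup_{\point\in\vw[\points]}\bigl|\vw[\loss](\point,\vwt[\jaction][\playexcept])-L(\point)\bigr|+\bigl\|\vwt[\gvec]-\vw[\vecfield](\vwtinter,\vw[\limp{\jaction}][\playexcept])\bigr\|\cdot\sup_{\point,\arpoint\in\vw[\points]}\|\point-\arpoint\|
\]
vanishes as $\vwt[\jaction][\playexcept]\to\vw[\limp{\jaction}][\playexcept]$. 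Instantiating \cref{lem:template-descent} at $\vw[\arpoint]=\sol^{*}$, convexity of $\vw[\loss]$ in its first argument yields $\product{\vwt[\gvec]}{\vwtinter-\sol^{*}}\ge L(\vwtinter)-L^{*}-\eta_{\run}$, while Young's inequality plus the strong convexity of $\vw[\hreg]$ absorbs $\product{\vwt[\gvec]-\vwtlast[\gvec]}{\vwtinter-\vwtupdate}-\vwt[\regpar]\vw[\breg](\vwtupdate,\vwtinter)$ into $\vwt[\increment]/(2\vwt[\regpar])$. Collecting terms, the template inequality tightens to the energy estimate
\[
\vwtupdate[\regpar]\vwtupdate[\estseq](\sol^{*})+\bigl[L(\vwtinter)-L^{*}\bigr]+\vwt[\regpar]\vw[\breg](\vwtinter,\vwt[\state])\le \vwt[\regpar]\vwt[\estseq](\sol^{*})+(\vwtupdate[\regpar]-\vwt[\regpar])\vw[\armeasure](\sol^{*})+\tfrac{\vwt[\increment]}{2\vwt[\regpar]}+\eta_{\run}.
\]

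The second step is to sum this from $\run=1$ to $\nRuns$ and exploit compactness of $\vw[\points]$: $\vwt[\gvec]$ is uniformly bounded, so $\vwt[\regpar][\play][\nRuns+1]=\bigoh(\sqrt{\nRuns})$; the standard step-size lemma gives $\sum_{\run\le\nRuns}\vwt[\increment]/\vwt[\regpar]=\bigoh(\vwt[\regpar][\play][\nRuns+1])$; and $\sum_{\run\le\nRuns}\eta_{\run}=\smalloh(\nRuns)$ since $\eta_{\run}\to 0$. Telescoping the $\vwt[\regpar]\vwt[\estseq]$ terms and dropping the (nonnegative) $\vwtupdate[\regpar]\vwtupdate[\estseq](\sol^{*})$ on the left then yields
\[
\sum_{\run=1}^{\nRuns}\bigl[L(\vwtinter)-L^{*}\bigr]=\smalloh(\nRuns) \quad\text{and}\quad \sum_{\run=1}^{\nRuns}\vwt[\regpar]\vw[\breg](\vwtinter,\vwt[\state])=\smalloh(\nRuns).
\]
Because each summand is non-negative and $\vwt[\regpar]\ge\vwt[\regpar][\play][1]>0$, this gives $\liminf_{\run}L(\vwtinter)=L^{*}$ and $\liminf_{\run}\vw[\breg](\vwtinter,\vwt[\state])=0$; compactness plus continuity of $L$ then extracts a subsequence $\vwtinter[\play][\run_{k}]\to\sol^{*}_{\infty}\in\BR(\vw[\limp{\jaction}][\playexcept])$ along which $\vwt[\state][\play][\run_{k}]\to\sol^{*}_{\infty}$ as well.

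The third step is to upgrade this subsequential convergence to last-iterate convergence by reapplying the energy inequality with $\vw[\arpoint]=\sol^{*}_{\infty}$ and invoking reciprocity. In the \eqref{eq:OptDA} case, \cref{asm:Fenchel-recp} gives $\vwt[\estseq][\play][\run_{k}](\sol^{*}_{\infty})=\vw[\fench](\sol^{*}_{\infty},\vwt[\dstate][\play][\run_{k}])\to 0$; in the \eqref{eq:DS-OptMD} case, \cref{asm:Bregman-recp} analogously forces $\vwt[\estseq][\play][\run_{k}](\sol^{*}_{\infty})=\vw[\breg](\sol^{*}_{\infty},\vwt[\state][\play][\run_{k}])\to 0$. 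An Opial/quasi-Fej\'er argument applied to $\vwtupdate[\regpar]\vwtupdate[\estseq](\sol^{*}_{\infty})$ then promotes this to $\vwt[\estseq](\sol^{*}_{\infty})\to 0$ along the full sequence, so that $\vwt[\state]\to\sol^{*}_{\infty}$, and the bound on $\vw[\breg](\vwtinter,\vwt[\state])$ gives $\vwtinter\to\sol^{*}_{\infty}$.

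The main obstacle is precisely this last promotion step: none of the three residual terms $(\vwtupdate[\regpar]-\vwt[\regpar])\vw[\armeasure](\sol^{*}_{\infty})$, $\vwt[\increment]/(2\vwt[\regpar])$, and $\eta_{\run}$ is \emph{a priori} summable\textemdash their partial sums are only $\bigoh(\vwt[\regpar][\play][\nRuns+1])=\bigoh(\sqrt{\nRuns})$ and $\smalloh(\nRuns)$, respectively\textemdash so the energy sequence $\vwt[\regpar]\vwt[\estseq](\sol^{*}_{\infty})$ is \emph{not} classically quasi-Fej\'er and the standard Opial argument does not directly apply. The delicate part of the proof is to exploit the adaptive construction \eqref{eq:adaptive-reg}, coupled with the opponents' convergence (which asymptotically kills the portion of $\vwt[\gvec]-\vwtlast[\gvec]$ driven by the opponents' motion), to show that the adaptive weight $\vwt[\regpar]$ either stabilizes to a finite limit, or grows just fast enough that the residual terms decay along the subsequence where $\vwt[\estseq](\sol^{*}_{\infty})$ is close to its $\limsup$\textemdash thereby ruling out persistent excursions away from $0$.
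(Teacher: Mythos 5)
Your Steps 1–2 are essentially the paper's own setup: the same energy inequality obtained from \cref{lem:template-descent}, with a residual driven by the opponents' motion that vanishes by compactness and continuity, followed by a Ces\`aro/liminf argument that produces a cluster point of the played actions inside $\BR(\vw[\limp{\jaction}][\playexcept])$. The genuine gap is Step 3, and it is precisely the step you flag and then leave unresolved. Because the residual terms $(\vwtupdate[\regpar]-\vwt[\regpar])\vw[\armeasure](\vw[\arpoint])$, $\vwt[\increment]/(2\vwt[\regpar])$ and $\eta_{\run}$ are only $\smalloh(1)$ per step and \emph{not} summable, the energy $\vwt[\regpar]\vwt[\estseq]$ is not quasi-Fej\'er, so no Opial-type argument promotes subsequential convergence to convergence of the whole trajectory. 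Your proposed fix (``$\vwt[\regpar]$ either stabilizes or grows just fast enough that the residuals decay along the relevant subsequence'') is not an argument and is not the right mechanism: even with every per-step residual tending to zero (which is exactly what the paper proves via its \cref{lem:to-zero-BR}), the partial sums may diverge, so the energy can drift upward during long excursions away from the best-response set unless something forces a \emph{uniform} decrease during such excursions. Moreover, your own inequality has discarded part of what such an argument needs: by absorbing $\product{\vwt[\gvec]-\vwtlast[\gvec]}{\vwtinter-\vwtupdate[\state]}-\vwt[\regpar]\vw[\breg](\vwtupdate,\vwtinter)$ entirely into $\vwt[\increment]/(2\vwt[\regpar])$, you lose the negative quadratic in $\vwp[\norm{\vwtupdate[\state]-\vwtinter[\state]}]$ (the paper only absorbs half of it).

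The paper closes this gap with a ``trapping'' argument run on the set-distance $\vwt[\estseq](\vw[\sols])$ with $\vw[\sols]=\BR(\vw[\limp{\jaction}][\playexcept])$, not on a fixed cluster point $\sol^{*}_{\infty}$ (a secondary weakness of your plan: the theorem only asks for convergence to the best-response \emph{set}, and fixing a single limit point makes the missing monotonicity even harder to obtain). Concretely, for any target $\sbradius>0$, the reciprocity condition (\cref{asm:Fenchel-recp}, resp.\ \ref{asm:Bregman-recp}) yields a radius $\sradius>0$; once the vanishing residual is below a fixed fraction of $\min(\cst,\sradius^2)$, one distinguishes three cases: if $\vwtinter[\state]$ is far from $\vw[\sols]$, the retained loss-gap term (your $L(\vwtinter)-L^{*}\ge\cst>0$) forces the energy $\vwt[\regpar]\vwt[\estseq](\vw[\sols])$ down by a fixed amount; if $\vwtinter[\state]$ is close but $\vwtupdate[\state]$ moves far, the retained quadratic term forces the same decrease; otherwise $\vwtupdate[\estseq](\vw[\sols])\le\sbradius$. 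Non-negativity of the energy rules out the first two cases occurring forever, and once the third occurs the iterates remain trapped, giving $\vwt[\estseq](\vw[\sols])\to0$; a final estimate then shows $\vwp[\norm{\vwtinter[\state]-\vwt[\state]}]\to0$, so the realized actions converge to $\vw[\sols]$. Without an argument of this kind, your Step 3 \textendash\ which is the heart of the theorem \textendash\ remains unproved.
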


\begin{proof}[Idea of proof]
The fact that the opponents are only convergent rather than stationary makes the proof much more challenging and requires a non-standard ``trapping'' argument.%
\footnote{In fact, the compactness assumption in \cref{thm:cvg-best-response} can be dropped if the opponents are stationary.}
Specifically, we show that when the sequence $\vwt[\state]$ gets close to a best response (\ie when $\min_{\vw[\sol]\in\BR(\vw[\limp{\jaction}][\playexcept])}\vwt[\estseq](\vw[\sol])\le\sbradius$ for some $\sbradius>0$),
all subsequent iterates must remain in this neighborhood provided that $\run$ is sufficiently large.
Subsequently, we also show that the sequence $\seqinf[\vwt[\state]]$ visits any neighborhood of $\BR(\vw[\limp{\jaction}][\playexcept])$ infinitely many times.
Therefore, for every neighborhood of $\BR(\vw[\limp{\jaction}][\playexcept])$, the iterates eventually get trapped into that neighborhood, and we conclude by showing $\vwp[\norm{\vwtinter[\state]-\vwt[\state]}]$ converges to zero.
\end{proof}

As a direct consequence of \cref{thm:cvg-best-response}, we deduce that $\lim_{\run\to+\infty}\vwsp[\gap_{\vw[\points]}](\vt[\jaction])=0$ whenever the opponents' actions converge. Therefore, the action of the player becomes quasi-optimal as time goes by, in the sense that they would not earn much more by switching to any other strategy in each round.


\subsection{Main result: Convergence to Nash equilibrium}

Moving forward, we proceed to establish a series of results concerning the convergence of the players' trajectory of play to \acl{NE} when all players employ an adaptive learning algorithm.

\begin{theorem}
\label{thm:converge-to-Nash}
Suppose that \cref{asm:convexity+smoothness,asm:Fenchel-recp} \textpar{resp. \ref{asm:Bregman-recp}} hold and all players $\play\in\players$ use either \eqref{eq:OptDA} or \eqref{eq:DS-OptMD} \textpar{resp.~only \eqref{eq:DS-OptMD}} with the adaptive learning rate \eqref{eq:adaptive-reg}.
Then the induced trajectory of play converges to a \acl{NE} provided that either of the following conditions is satisfied
\begin{enumerate}[\itshape a\upshape),itemsep=0pt,topsep=1.2pt]
\item
The game is strictly variationally stable.
\item
The game is variationally stable and $\vw[\hreg]$ is subdifferentiable on all $\vw[\points]$, \ie $\dom\subd\vw[\hreg]=\vw[\points]$.
\vspace{-0.4em}
\end{enumerate}
\end{theorem}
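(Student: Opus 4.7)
The starting ingredient is a byproduct of \cref{thm:inidividual-regret-bound}: its proof actually yields $\sum_{\run}\vwt[\increment]<\infty$ for each $\play\in\players$, which in turn forces $\vwt[\regpar]$ to converge to a finite positive limit, bounded below by $\sqrt{\vw[\tau]}>0$. The strategy is then to upgrade \cref{lem:template-descent} into a quasi-Fej\'er recursion for the aggregated energy
$V_{\run}(\bb{\sol}) \defeq \sum_{\play}\vwt[\regpar]\vwt[\estseq](\vw[\sol])$, where $\bb{\sol}\in\sols$ is an arbitrary Nash equilibrium.

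To obtain this recursion, I sum \eqref{eq:template-descent-main} over $\play$ and perform two standard simplifications. Variational stability disposes of the leading inner-product term through $\sum_{\play}\product{\vwt[\gvec]}{\vwtinter-\vw[\sol]}=\product{\jvecfield(\vt[\jaction])}{\vt[\jaction]-\bb{\sol}}\ge 0$, while Young's inequality combined with the $1$-strong convexity of $\vw[\hreg]$ gives $\product{\vwt[\gvec]-\vwtlast[\gvec]}{\vwtinter-\vwtupdate[\state]}-\vwt[\regpar]\vw[\breg](\vwtupdate,\vwtinter)\le\vwt[\increment]/(2\vwt[\regpar])$. The residual noise $\sum_{\play}[(\vwtupdate[\regpar]-\vwt[\regpar])\vw[\armeasure](\vw[\sol])+\vwt[\increment]/(2\vwt[\regpar])]$ is summable (the first piece telescopes because $\vwt[\regpar]$ converges, and the second inherits summability from $\sum_{\run}\vwt[\increment]<\infty$ and $\vwt[\regpar]\ge\sqrt{\vw[\tau]}$). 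A Robbins--Siegmund argument then delivers simultaneously that $V_{\run}(\bb{\sol})$ converges, $\sum_{\run}\product{\jvecfield(\vt[\jaction])}{\vt[\jaction]-\bb{\sol}}<\infty$, and $\sum_{\run}\sum_{\play}\vw[\breg](\vwtinter,\vwt[\state])<\infty$. The latter forces $\vwp[\norm{\vwtinter-\vwt[\state]}]\to 0$ by strong convexity, while convergence of $V_{\run}(\bb{\sol})$ combined with the quadratic lower bound on $\vwt[\estseq]$ (valid both for the Fenchel coupling in \eqref{eq:OptDA} and the Bregman divergence in \eqref{eq:DS-OptMD}) shows that $\seqinf[\vt[\jaction]]$ is bounded.

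I then extract a subsequence $\vt[\jaction][\run_{k}]\to\jaction^{*}$ and argue that $\jaction^{*}\in\sols$. In case~(a), the summability of $\product{\jvecfield(\vt[\jaction])}{\vt[\jaction]-\bb{\sol}}$ lets me refine $\run_{k}$ so that this scalar tends to zero; continuity of $\jvecfield$ then forces $\product{\jvecfield(\jaction^{*})}{\jaction^{*}-\bb{\sol}}=0$, and strict variational stability concludes. In case~(b), strict VS is unavailable, so instead I take limits directly in the first-order optimality of the proximal update,
\begin{equation*}
    \product{\vwtlast[\gvec]}{\vw[\arpoint]-\vwtinter}+\vwt[\regpar]\vw[\breg](\vw[\arpoint],\vwt[\state])\ge\vwt[\regpar]\vw[\breg](\vw[\arpoint],\vwtinter)+\vwt[\regpar]\vw[\breg](\vwtinter,\vwt[\state]),
\end{equation*}
which holds for every $\vw[\arpoint]\in\vw[\points]$. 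The hypothesis $\dom\subd\vw[\hreg]=\vw[\points]$ guarantees that $\vw[\breg](\vw[\arpoint],\cdot)$ is well-defined and lower semi-continuous at the limit $\vw[\sol]^{*}$, so the two Bregman terms cancel in the limit; together with $\vwp[\norm{\vwtinter-\vwt[\state]}]\to 0$ and continuity of $\vw[\vecfield]$, this recovers $\product{\vw[\vecfield](\jaction^{*})}{\vw[\arpoint]-\vw[\sol]^{*}}\ge 0$ for every $\vw[\arpoint]\in\vw[\points]$, i.e., $\jaction^{*}\in\sols$. In both cases, the quasi-Fej\'er argument applied with $\bb{\sol}=\jaction^{*}$ combined with \cref{asm:Fenchel-recp} (resp.\ \cref{asm:Bregman-recp}) along $\run_{k}$ forces $V_{\run}(\jaction^{*})\to 0$, and hence $\vt[\jaction]\to\jaction^{*}$.

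\textbf{Main obstacle.} The tightest step is case~(b): without strict VS, knowing that $\product{\jvecfield(\jaction^{*})}{\jaction^{*}-\bb{\sol}}=0$ is not enough to place $\jaction^{*}$ in $\sols$, so one must pass to the limit directly in the prox optimality condition. This is exactly where the hypothesis $\dom\subd\vw[\hreg]=\vw[\points]$ is pivotal --- it makes the Bregman divergence on the right-hand side sensible at the candidate limit $\vw[\sol]^{*}$ and allows the cancellation that ultimately recovers the Nash variational inequality.
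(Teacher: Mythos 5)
Your proposal is correct and follows essentially the same route as the paper's proof: summability of the increments and convergence of the learning rates (the paper's \cref{lem:regpar-bounded}), a quasi-Fej\'er recursion for the aggregated energy at a Nash equilibrium (the paper's \cref{lem:bregman-cvg}), identification of cluster points as equilibria via strict variational stability in case (a) and by passing to the limit in the prox optimality condition in case (b) \textendash\ exactly where $\dom\subd\vw[\hreg]=\vw[\points]$ and the continuity of $\grad\vw[\hreg]$ enter \textendash\ and finally the reciprocity assumptions to upgrade subsequential to full convergence. The only bookkeeping point: in case (b) you must also justify that the lagged feedback $\vwtlast[\gvec]=\vw[\vecfield](\past[\jstate])$ converges to $\jvecfield(\jaction^{*})$ along your subsequence (the paper obtains $\norm{\current[\jstate]-\past[\jstate]}\to0$ in \cref{lem:diff-to-zero} by retaining both Bregman terms in its analogue of your Young step, whereas you absorbed $\vw[\breg](\vwtupdate,\vwtinter)$ entirely), but this is a one-line patch with ingredients you already have, since $\sum_{\run}\vwt[\increment]<\infty$ forces $\vwt[\gvec]-\vwtlast[\gvec]\to0$.
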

\begin{proof}[Idea of proof]
The proof of the two cases follow the same schema.
We first establish that every cluster point of $\seqinf[\vt[\jstate]]$ is a Nash equilibrium.
This utilizes the fact that $\vwt[\regpar]$ converges to a finite constant as shown in the proof of \cref{thm:inidividual-regret-bound}.
Then, to prove the sequence actually converges, we leverage the reciprocity conditions discussed in \cref{sec:OptDA} together with a quasi-Fejér property \cite{Com01} that we establish for the induced sequence of play relative to a suitable divergence metric.
\end{proof}

The convergence to a Nash equilibrium $\bb{\sol}$ implies that for every $\play\in\players$ and every compact set $\vw[\arpoints]\in\vw[\points]$, $\lim_{\run\to+\infty}\vwsp[\gap_{\vw[\arpoints]}](\vt[\jaction])=\vwsp[\gap_{\vw[\arpoints]}](\bb{\sol})\le0$.
Thus, in the long run, the players are individually satisfied with their own choices of each play compared to any other action they could have pick from a comparator set.
To the best of our knowledge, this is the first equilibrim convergence result for online learning in variationally stable games with a player-specific, adaptive learning rate.
The closest antecedent to our result is the recent work of \cite{LZMJ20} where the authors prove convergence to \acl{NE} in unconstrained cocoercive games,%
\footnote{The class of cocoercive games is defined by the property $\product{\jvecfield(\jaction) - \jvecfield(\jactionalt)}{\jaction-\jactionalt} \geq (1/\beta) \dnorm{\jvecfield(\jaction) - \jvecfield(\jactionalt)}^{2}$.}
with an adaptive step-size that is the same across player (and which therefore requires access to global information to be computed).
In this regard, \cref{thm:converge-to-Nash} extends a wide range of earlier equilibrium convergence results for \emph{strictly} stable games that were obtained with a constant or diminishing \textendash\ but not \emph{adaptive} \textendash\ step-size.

Despite the generality of \cref{thm:converge-to-Nash}, it fails to cover the case where the players are running localized, adaptive versions of \ac{OMWU} in a game that is \acl{VS} but not \emph{strictly} so.
The most representative example of this special case is finite two-player zero-sum games with a mixed equilibrium;
we address this case below.

\begin{restatable}{theorem}{AdaptiveOMWU}
\label{thm:converge-OMWU}
Suppose that the players of a two-player, finite zero-sum game follow \eqref{eq:OMWU-dynamic} with the adaptive learning rate \eqref{eq:adaptive-reg}.
Then the induced sequence of play converges to a \acl{NE}.
\end{restatable}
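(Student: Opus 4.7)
The plan is to extend the schema of Theorem \ref{thm:converge-to-Nash}(b) while circumventing the fact that the negentropy is not subdifferentiable on $\partial\Delta$. The key enabling observation is that, for the softmax mirror map $Q(y)_j = e^{y_j}/\sum_k e^{y_k}$, Fenchel reciprocity continues to hold at every point of $\Delta$, including boundary strategies: a direct calculation using the constant-shift invariance of $F$ identifies $F(\bar{x}, y)$ with the KL divergence $\mathrm{KL}(\bar{x}\,\|\,Q(y))$, which tends to $0$ whenever $Q(y) \to \bar{x}$, even if $\bar{x}$ has null coordinates.

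First I would establish that each player's adaptive learning rate converges to a finite constant, $\gamma^t_i \to \gamma^\infty_i < \infty$. Since two-player zero-sum games are variationally stable, the argument underpinning Theorem \ref{thm:inidividual-regret-bound} applies and yields $\sum_t \delta^t_i < \infty$ for each $i$; feeding this back into \eqref{eq:template-regret-main} also forces $\|x^{t+1/2}_i - x^t_i\| \to 0$.

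Next, since the softmax mirror map is interior-valued on $\Delta$, OMWU coincides with \eqref{eq:OptDA} with the negentropy regularizer, and Lemma \ref{lem:template-descent} applies with $\psi^t_i(\cdot) = F^i(\cdot, y^t_i)$. Summing \eqref{eq:template-descent-main} over players at an arbitrary Nash equilibrium $\bar{x}^* \in \sols$, invoking variational stability $\sum_i \langle V^i(x^{t+1/2}), x^{t+1/2}_i - x^{*,i}\rangle \ge 0$, and absorbing the summable residuals via Young's inequality, I obtain that $\Phi^t(\bar{x}^*) := \sum_i \gamma^t_i F^i(x^{*,i}, y^t_i)$ is quasi-Fejér in $t$ and hence admits a finite limit. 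A standard near-stationarity argument ($\|x^{t+1/2}_i - x^t_i\| \to 0$ plus the first-order optimality characterization of $x^{t+1/2}$) shows that every cluster point of $\{x^t\}$ is a Nash equilibrium; I pick such a cluster point $\bar{x}^\infty$ and a subsequence $x^{t_k} \to \bar{x}^\infty$. By the Fenchel reciprocity computation above, $F^i(x^{\infty,i}, y^{t_k}_i) \to 0$, so $\Phi^{t_k}(\bar{x}^\infty) \to 0$; since the whole sequence $\Phi^t(\bar{x}^\infty)$ already has a limit, that limit must be $0$. Hence $F^i(x^{\infty,i}, y^t_i) \to 0$ for each $i$, and $x^t \to \bar{x}^\infty$ by Fenchel reciprocity.

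The main obstacle is that neither case of Theorem \ref{thm:converge-to-Nash} applies to OMWU at a mixed zero-sum equilibrium: the game is variationally stable but not \emph{strictly} so (Rock-Paper-Scissors being the canonical example), and the negentropy is not subdifferentiable on $\partial\Delta$. The key technical step is therefore the explicit verification of Fenchel reciprocity for the softmax mirror map even at boundary strategies; this turns the Fenchel coupling into a faithful tracking quantity despite the associated Bregman divergence blowing up at the limit point.
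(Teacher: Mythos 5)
Your observation that the entropic Fenchel coupling coincides with the Kullback--Leibler divergence, $F^i(p,y)=\mathrm{KL}(p\,\|\,Q_i(y))$, and hence that Fenchel reciprocity (\cref{asm:Fenchel-recp}) holds for \ac{OMWU} even at boundary points, is correct \textendash\ but it does not touch the real obstruction. In the paper, reciprocity is only used in the final ``conclude'' step of \cref{thm:converge-to-Nash}, \emph{after} one knows that some cluster point of the trajectory is a \acl{NE}; it plays no role in establishing that fact. The genuine gap in your proposal is the sentence asserting that ``a standard near-stationarity argument ($\|x_i^{t+1/2}-x_i^t\|\to0$ plus the first-order optimality characterization of $x_i^{t+1/2}$) shows that every cluster point of $\{x^t\}$ is a Nash equilibrium''. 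This is exactly the step that fails for the negentropy regularizer and is the reason \cref{thm:converge-OMWU} is not a corollary of \cref{thm:converge-to-Nash}: passing the optimality condition \eqref{eq:optimality-before-limit} to the limit requires continuity of $\nabla h_i$ at the cluster point, and the entropic gradient blows up on coordinates where the cluster point vanishes (writing the condition with the explicit softmax form instead makes it vacuous, since the $\nabla h$ difference is exactly $-g_i^{t-1}/\gamma_i^t$ up to a normalization constant). Such boundary cluster points cannot be excluded: every coordinate corresponding to a non-essential pure strategy must vanish in the limit. Restricting the comparators to the support of the cluster point only yields a variational inequality for the \emph{support-restricted} subgame, and a point that is an equilibrium of its own support-restricted game need not be a \acl{NE} of the full game (a pure profile in Rock--Paper--Scissors is the standard counterexample), so ``cluster point $\Rightarrow$ Nash'' does not follow.

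The paper's proof is built precisely to circumvent this. It fixes an equilibrium $(\solA,\solB)$ whose support contains \emph{all} essential strategies (\cref{lem:nash-all-essential}); the boundedness of $\gamma_1^\infty\mathrm{KL}(\solA,x_1^t)+\gamma_2^\infty\mathrm{KL}(\solB,x_2^t)$ from \cref{lem:bregman-cvg} keeps those coordinates bounded away from zero, so the optimality condition can be passed to the limit only over comparators supported in $\supp(\solA)\times\supp(\solB)$, giving the restricted inequality \eqref{eq:two-player-supp-opt} at any cluster point $\limp{\jaction}$. This restricted inequality is then upgraded via the mixing lemma \cref{lem:mix-still-nash} (borrowed from the analysis of \cite{WLZL21}): $(1-\gamediff/2)\,\bb{\sol}+(\gamediff/2)\,\limp{\jaction}$ is a genuine \acl{NE}, which forces the cluster point's support to equal $\supp(\solA)\times\supp(\solB)$. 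Uniqueness of the cluster point then follows by comparing the limits of the weighted KL divergences to $\bb{\sol}$ and to the mixture at two putative cluster points, and only \emph{after} convergence is established does the no-regret property identify the limit as a \acl{NE} \textendash\ the opposite order from the one you propose. To repair your argument you would need to supply, in place of the ``standard near-stationarity argument'', something equivalent to this support-identification and mixing machinery; as written, the proposal assumes the hardest part of the theorem.
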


The closest results in the literature are \cite{DP19} and, most recently, \cite{WLZL21}.
\Cref{thm:converge-OMWU} sharpens these results in two key aspects:
\begin{enumerate*}[\itshape i\upshape)]
    \item the players' learning rate is not contingent on the knowledge of game-specific constants;
    and
    \item we do not assume the existence of a \emph{unique} \acl{NE}.
\end{enumerate*}

Finally, following the proof of \cref{thm:converge-to-Nash}, we establish below an interesting dichotomy for general convex games with compact action sets (see also \cref{app:dichotomy} for a non-compact version).

\begin{theorem}
\label{thm:dichotomy-general-sum}
Suppose that \cref{asm:convexity+smoothness} holds and all players $\play\in\players$ use \eqref{eq:OptDA} or \eqref{eq:DS-OptMD} with the adaptive learning rate \eqref{eq:adaptive-reg}.
Assume additionally that $\vw[\points]\subset\dom\subd\vw[\hreg]$ for every $\play\in\players$ and $\points$ is compact.
Then one of the following holds:
\begin{enumerate}
[\upshape(\itshape a\upshape),topsep=1.2pt,itemsep=1.2pt,leftmargin=*]
\item
The sequence of realized actions converges to the set of Nash equilibria.
Furthermore, for every $\play\in\players$, it holds $\vwt[\reg][\play][\nRuns](\vw[\points])=\bigoh(1)$ and $\limsup_{\run\to+\infty}\vwsp[\gap_{\vw[\points]}](\vt[\jaction])\le0$. 

\item
The social regret tends to minus infinity when $\toinf$, \ie $\lim_{\toinf}\vt[\reg][\nRuns](\points)=-\infty$.
\end{enumerate}
\end{theorem}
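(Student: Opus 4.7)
\emph{Proof proposal.} Following the strategy developed for \cref{thm:inidividual-regret-bound,thm:converge-to-Nash}, the plan is to run the analysis on the total step variation of the realized actions,
$$X_\nRuns \defeq \sum_{\play\in\players}\sum_{\run=2}^{\nRuns}\vwp[\norm{\vwtinter[\state]-\vwtpast[\state]}^2],$$
and split according to whether $X_\nRuns$ stays bounded. First I would apply the template regret inequality \eqref{eq:template-regret-main} at any $\vw[\arpoint]\in\vw[\points]$, invoke individual convexity (\cref{asm:convexity+smoothness}) to upper bound each $\vwt[\reg][\play][\nRuns](\vw[\points])$ by the corresponding right-hand side, and sum over all players. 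The Lipschitz continuity of $\jvecfield$ guarantees that $\vwt[\increment]\le L^{2}\|\vt[\jaction]-\vt[\jaction][\run-1]\|^{2}$ in a suitable product norm, whence $Y^{\play}_{\nRuns}\defeq\sum_{\run}\vwt[\increment]\le L^{2}X_{\nRuns}$. Combining this with the standard estimates $\sum_{\run}\vwt[\increment]/\vwt[\regpar]\le 2\sqrt{\vw[\tau]+Y^{\play}_{\nRuns}}$ and $\vwtlast[\regpar]\ge\sqrt{\vw[\tau]}$ yields the key inequality
$$\vt[\reg][\nRuns](\points)\ \le\ C_{1}+C_{2}\sqrt{X_{\nRuns}}-c_{3}\,X_{\nRuns},$$
with positive constants $C_{1},C_{2},c_{3}$ depending only on the game primitives and the initial parameters $\vw[\tau]$.

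The dichotomy now reads off this estimate. If $X_{\nRuns}\to\infty$, the quadratic term dominates, so $\vt[\reg][\nRuns](\points)\to-\infty$, placing us in case (b). Otherwise $X_{\nRuns}$ stays bounded in $\nRuns$, hence $Y^{\play}_{\nRuns}<\infty$ for every $\play\in\players$ and each adaptive learning rate $\vwt[\regpar]$ converges to a finite positive limit $\regpar^{\play,\infty}$. Plugging this back into \eqref{eq:template-regret-main} gives the individual regret bound $\vwt[\reg][\play][\nRuns](\vw[\points])=\bigoh(1)$, while the summability of the step increments forces $\vwp[\norm{\vwtinter[\state]-\vwtpast[\state]}]\to 0$. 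A more careful application of \cref{lem:template-descent} (invoking Young's inequality on the cross term $\product{\vwt[\gvec]-\vwtlast[\gvec]}{\vwtinter-\vwtupdate[\state]}$ in order to absorb the $-\vwt[\regpar]\vw[\breg](\vwtupdate,\vwtinter)$ contribution) further yields $\sum_{\run}\vwt[\regpar]\vw[\breg](\vwtinter,\vwt[\state])<\infty$; combined with $\vwt[\regpar]$ being bounded away from zero and the strong convexity of $\vw[\hreg]$, this delivers $\vwp[\norm{\vwtinter[\state]-\vwt[\state]}]\to 0$.

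The crux of the argument, and the main obstacle, is to show that every cluster point of the played sequence $\seqinf[\vt[\jaction]]$ lies in $\sols$. Along any convergent subsequence $\vt[\jaction][\run_{k}]\to\jaction^{\ast}$, combining $\vwt[\regpar]\to\regpar^{\play,\infty}$, $\vwp[\norm{\vwtinter[\state]-\vwt[\state]}]\to 0$, $\vwp[\norm{\vwtinter[\state]-\vwtpast[\state]}]\to 0$, the continuity of $\grad\vw[\hreg]$ on $\vw[\points]$ (guaranteed by $\vw[\points]\subset\dom\subd\vw[\hreg]$), and the continuity of $\jvecfield$ lets one pass to the limit in the first-order optimality condition defining $\vwtinter[\state]$, yielding $\product{\vw[\vecfield](\jaction^{\ast})}{\vw[\point]-(\jaction^{\ast})_{\play}}\ge 0$ for every $\vw[\point]\in\vw[\points]$ and every $\play\in\players$; this is exactly the variational characterization of $\jaction^{\ast}\in\sols$. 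Compactness of $\points$ and closedness of $\sols$ then yield $\vt[\jaction]\to\sols$, and the continuity of $\vwsp[\gap_{\vw[\points]}]$ together with $\vwsp[\gap_{\vw[\points]}](\bb{\sol})=0$ on $\sols$ gives $\limsup_{\toinf}\vwsp[\gap_{\vw[\points]}](\vt[\jaction])\le 0$. The assumption $\vw[\points]\subset\dom\subd\vw[\hreg]$ is essential in this last step: without it, the residual $\grad\vw[\hreg](\vwtinter[\state])-\grad\vw[\hreg](\vwt[\state])$ need not vanish along boundary-approaching subsequences, and the reduction to the Nash variational inequality would break down.
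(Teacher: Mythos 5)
Your proposal reproduces the paper's overall architecture (sum the template inequality, use Lipschitz continuity to dominate the increments, obtain a bound of the form $C_{1}+C_{2}\sqrt{\cdot}-c_{3}(\cdot)$ on the social regret, and split according to whether the quadratic argument diverges), and the case-(b) branch, the $\bigoh(1)$ individual regret, and the final limit-passing in the optimality condition are all fine. The gap is in the convergent branch, and it is exactly at the step you flag as needing "a more careful application of \cref{lem:template-descent}". Because you take the dichotomy on $X_{\nRuns}=\sum_{\play}\sum_{\run}\vwp[\norm{\vwtinter[\state]-\vwtpast[\state]}^2]$ (variation of the \emph{realized} actions only), boundedness of $X_{\nRuns}$ gives you $\vwp[\norm{\vwtinter[\state]-\vwtpast[\state]}]\to0$ and summable increments, but not $\vwp[\norm{\vwtinter[\state]-\vwt[\state]}]\to0$, which is indispensable for killing the residual $\grad\vw[\hreg](\vwtinter[\state])-\grad\vw[\hreg](\vwt[\state])$ in the optimality condition. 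Your proposed derivation of $\sum_{\run}\vwt[\regpar]\vw[\breg](\vwtinter,\vwt[\state])<+\infty$ \textendash\ telescope \cref{lem:template-descent} after absorbing the cross term into $\vw[\breg](\vwtupdate,\vwtinter)$ \textendash\ leaves the term $-\sum_{\run}\product{\vwt[\gvec]}{\vwtinter[\state]-\vw[\arpoint]}$ on the right-hand side, i.e.\ \emph{minus} the player's linearized regret. In the paper this term is discarded only under variational stability (this is precisely \cref{lem:diff-to-zero}, which takes $\vw[\arpoint]=\vw[\sol]$ and invokes \eqref{eq:VS}); in the present theorem no stability assumption is available, the term has no sign, and boundedness of $X_{\nRuns}$ only yields an \emph{upper} bound on regrets, not the lower bound you would need. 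So the summability claim, and hence $\vwp[\norm{\vwtinter[\state]-\vwt[\state]}]\to0$, is not established, and the "every cluster point is Nash" step breaks down.

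The paper's proof avoids this by taking the dichotomy on a finer quantity: only half of $\vw[\breg](\vwtupdate,\vwtinter)$ is consumed by Young's inequality, the remaining Bregman terms are \emph{not} merged into $\vwp[\norm{\vwtinter[\state]-\vwtpast[\state]}^2]$, and one keeps the two gaps separately, working with $\sum_{\run}\bigl(\norm{\inter[\jstate]-\current[\jstate]}^2+\norm{\current[\jstate]-\past[\jstate]}^2\bigr)$; since by the triangle inequality $\vwt[\increment]\le2\vw[\lips]^2\bigl(\norm{\inter[\jstate]-\current[\jstate]}^2+\norm{\current[\jstate]-\past[\jstate]}^2\bigr)$, the same negative-leading-coefficient quadratic argument applies to this quantity, so its divergence still forces $\vt[\reg][\nRuns](\points)\to-\infty$, while its summability directly delivers \emph{both} $\norm{\inter[\jstate]-\current[\jstate]}\to0$ and $\norm{\current[\jstate]-\past[\jstate]}\to0$ together with $\sum_{\run}\vwt[\increment]<+\infty$ \textendash\ exactly the three facts your limit argument needs. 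If you rebase your case split on this quantity (or otherwise supply an argument for $\vwp[\norm{\vwtinter[\state]-\vwt[\state]}]\to0$ that does not presuppose a lower bound on the linearized regret), the rest of your proof goes through as written.
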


\Cref{thm:dichotomy-general-sum} shows that, if the player's sequence of actions fails to converge, the social regret goes to $-\infty$;
in particular, there is at least one player who benefits more from the actions employed by all other players compared to the regret incurred by all the dissatisfied players put together.
For this player in question, the individual regret goes to $-\infty$ and the player actually benefits from not converging to a fixed action.
We are not aware of any similar result in the literature.



\begin{figure}[t]
    \centering
    \begin{subfigure}[b]{\linewidth}
    \includegraphics[width=0.325\linewidth]{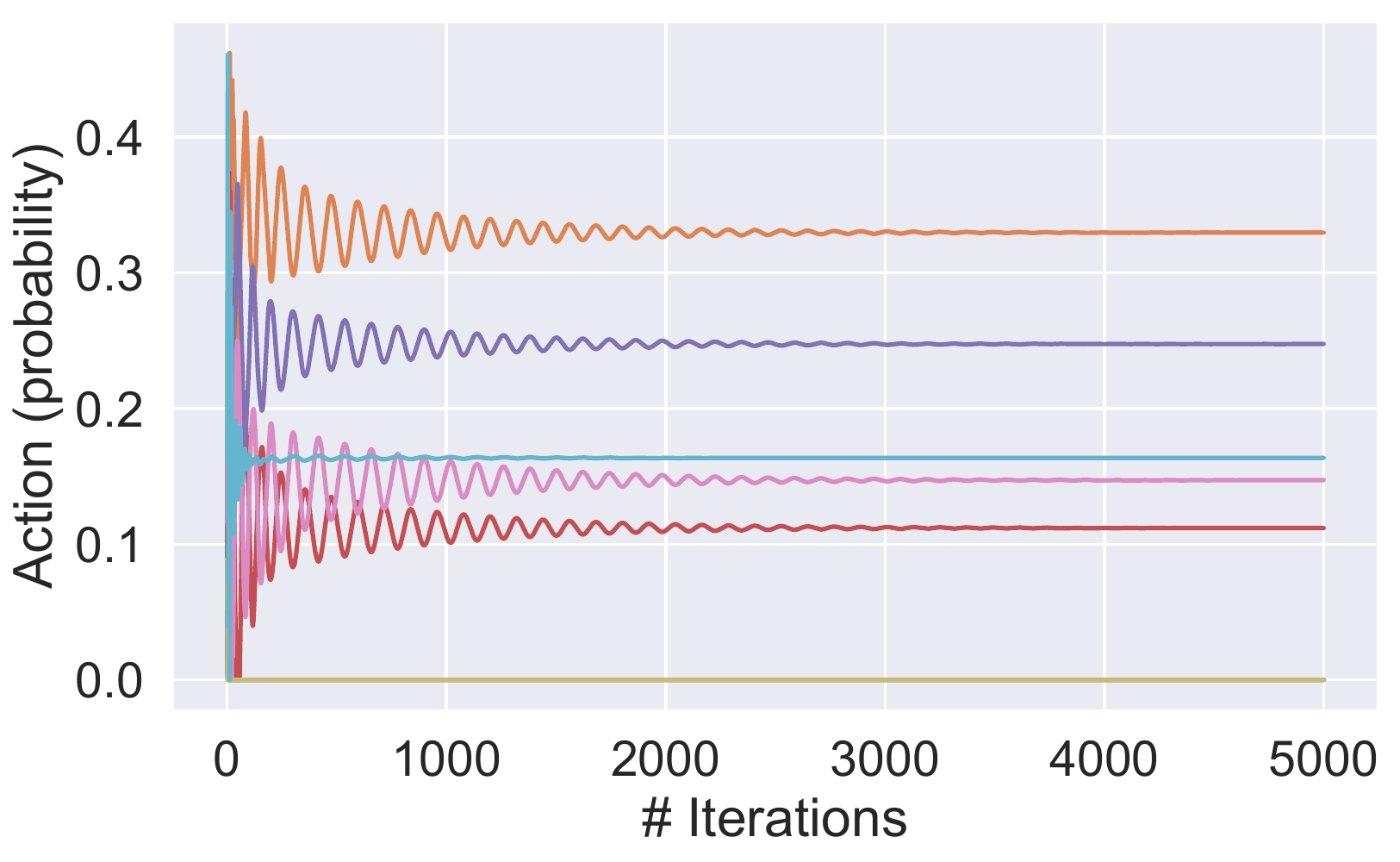}
    \includegraphics[width=0.325\linewidth]{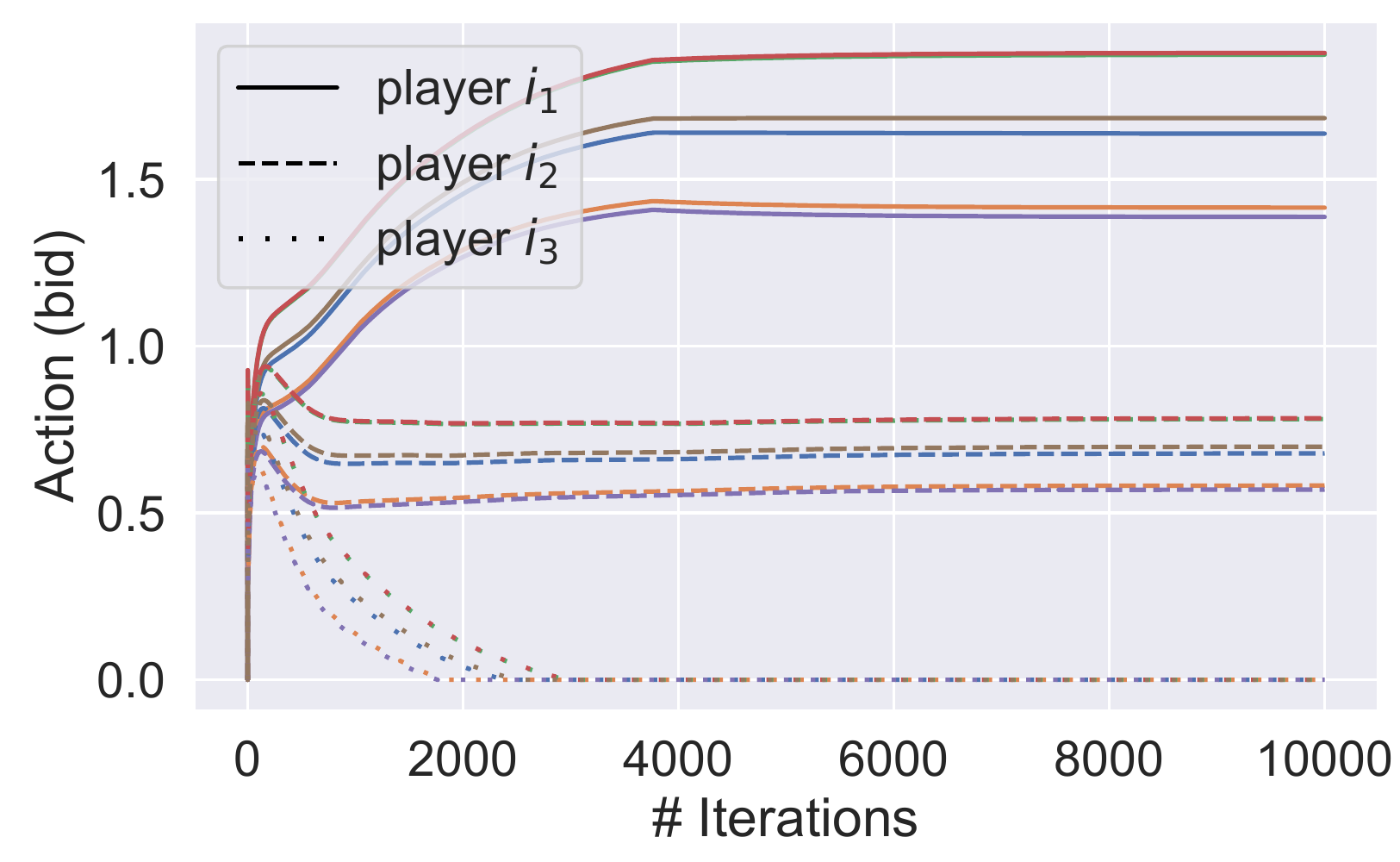}
    \includegraphics[width=0.325\linewidth]{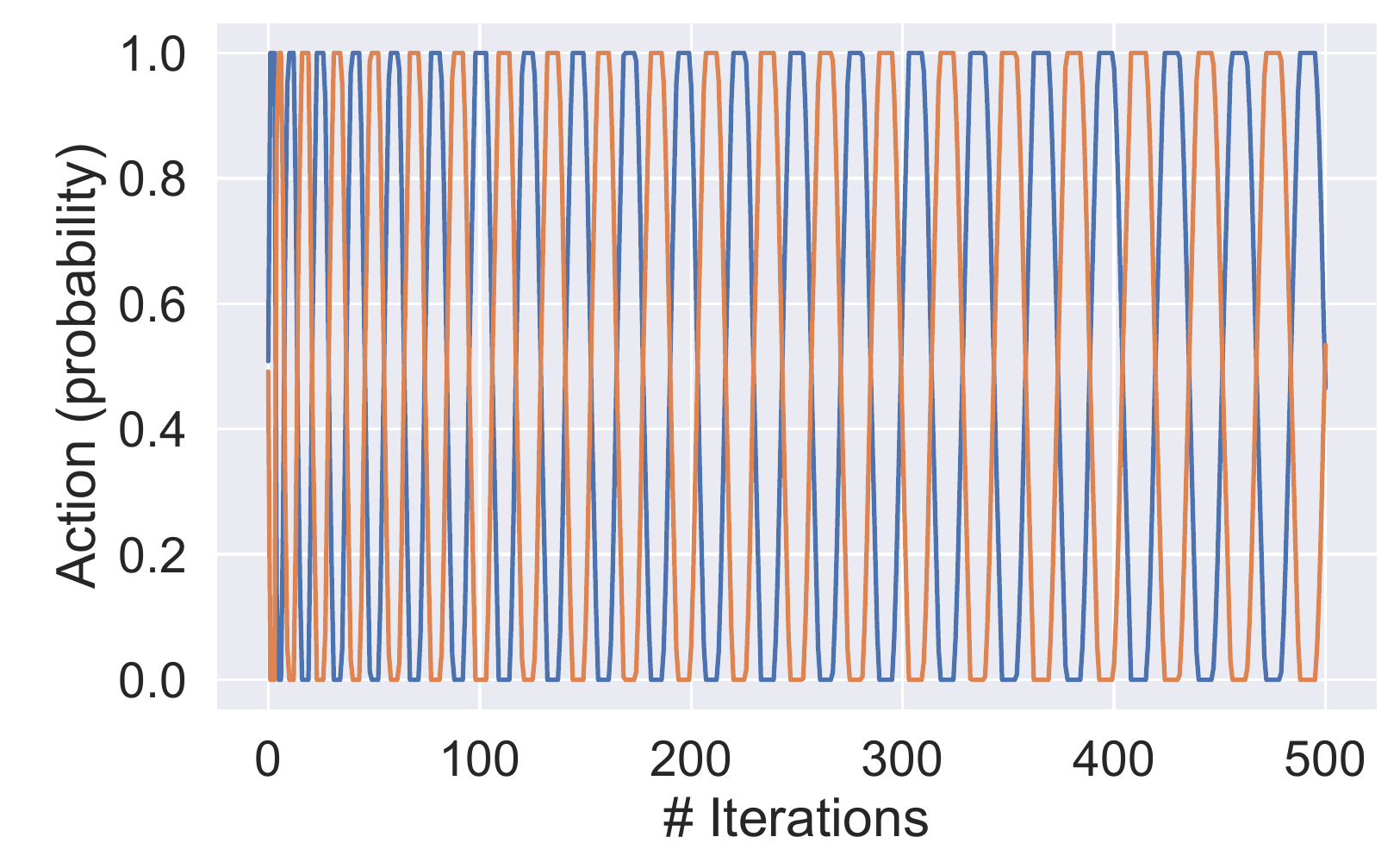}
    \end{subfigure}
    \begin{subfigure}[b]{\linewidth}
    \includegraphics[width=0.325\linewidth]{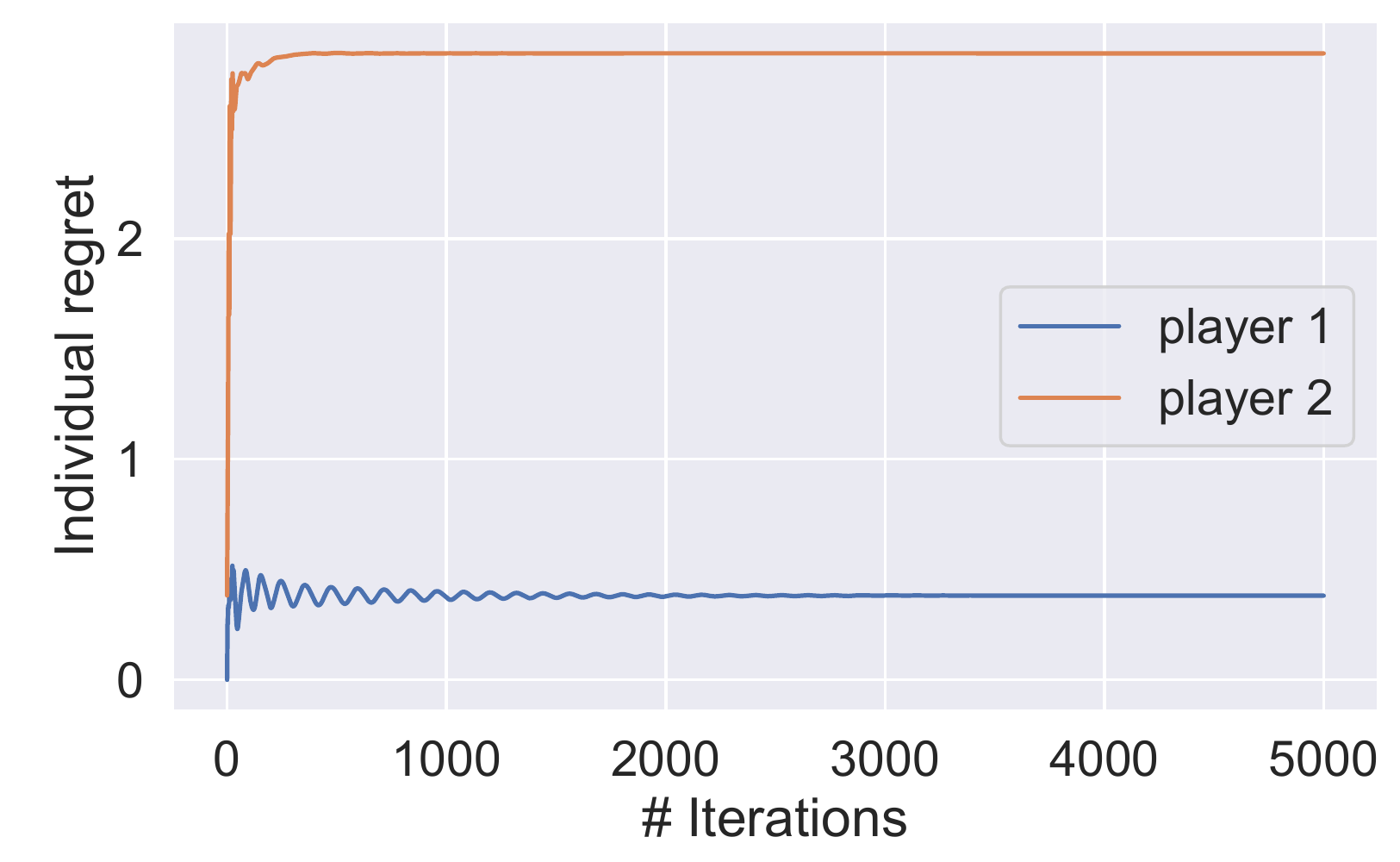}
    \includegraphics[width=0.325\linewidth]{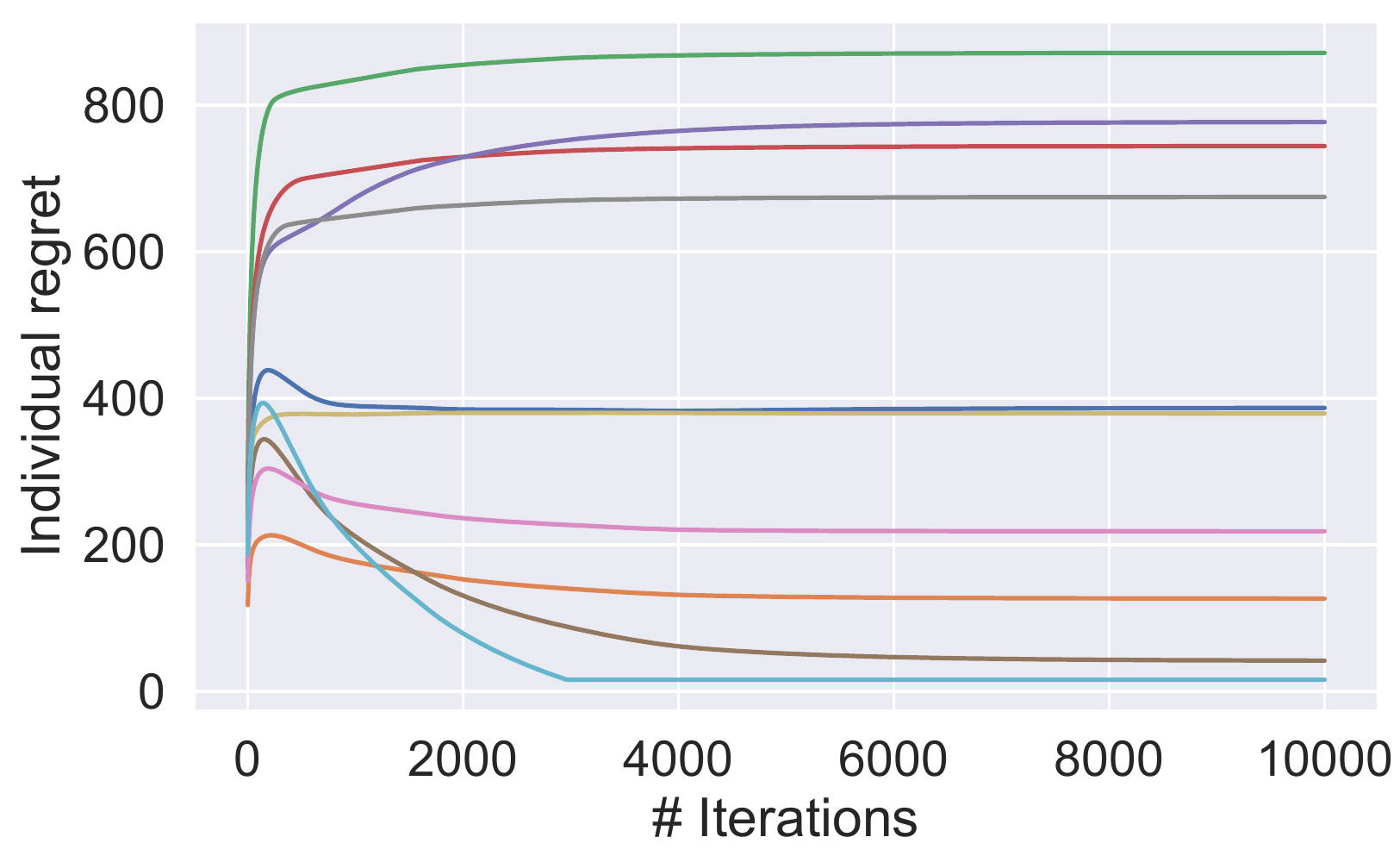}
    \includegraphics[width=0.325\linewidth]{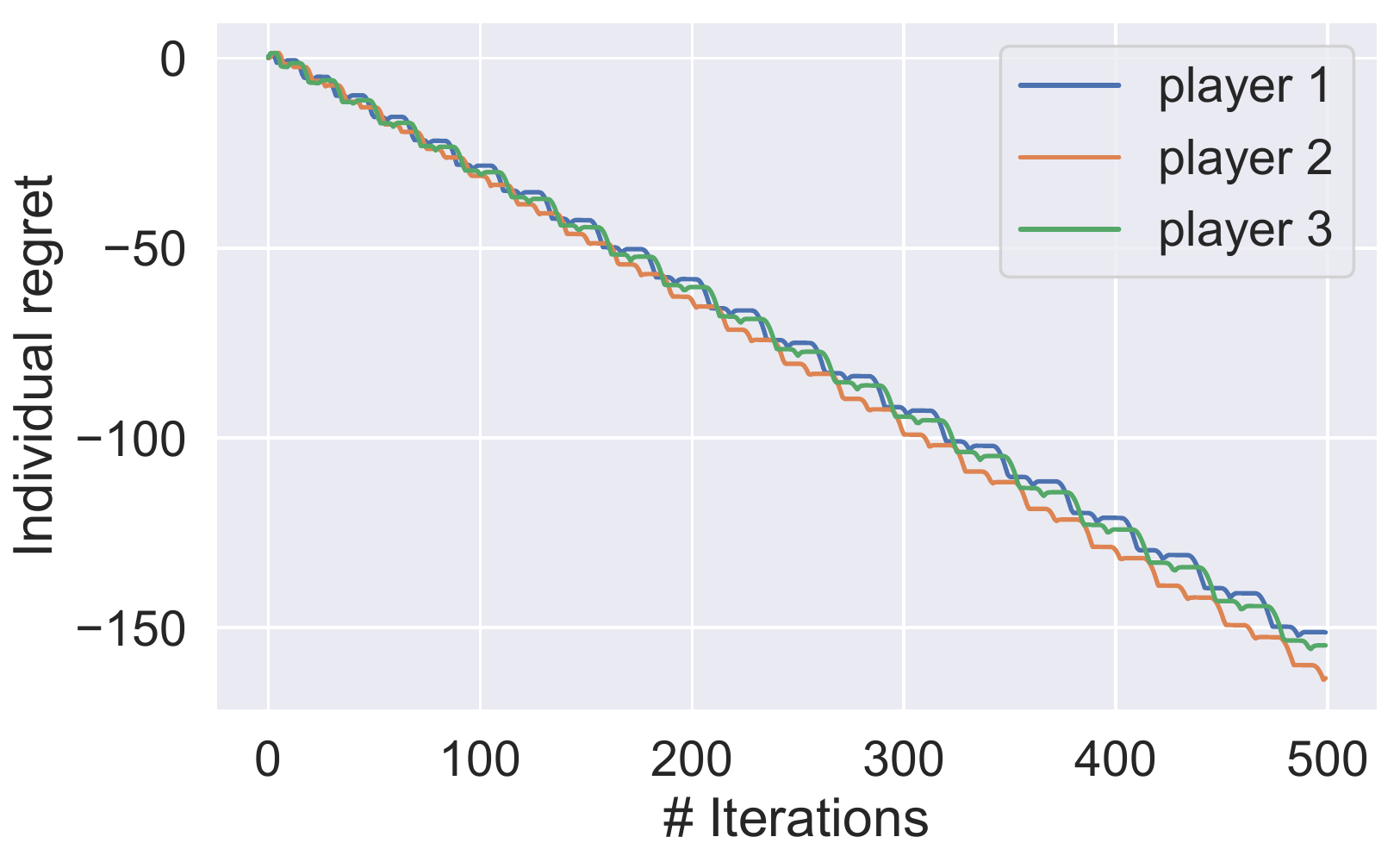}
    \end{subfigure}
    \captionsetup{labelsep=space}
    \caption{\textbf{- [Illustrative experiments]:} The realized actions (top, each line representing a coordinate of $\vwt[\action]$) and the individual regret (bottom) of a subset of players in a finite two-player zero-sum game (left), a resource allocation auction (middle), and a three-player matching-pennies game \cite{Jordan93} (right).
    All the players use either adaptive \ac{OptDA} or adaptive \ac{DS-OptMD} as their learning strategies.
    We observe convergence of the realized actions and the regrets in the first two examples.}
    \label{fig:illustration}
    \vspace{-1em}
\end{figure}

\section{Illustrative experiments}
\label{sec:numerics}
In this section we experimentally illustrate our theoretical results through \cref{ex:finite,ex:auction}.
Precisely, we investigate the following three different setups.

\begin{itemize}[itemsep=0cm, topsep=2pt]
    \item A finite zero-sum two-player game with $10\times10$ cost matrix whose elements are drawn uniformly at random from $[-1,+1]$: We let the two players play \ac{DS-OptMD} respectively with negative entropy and Euclidean regularizers.\footnote{The convergence of this particular situation can be proved following the proof of \cref{thm:converge-OMWU}.}
    \item A resource allocation auction (\cref{ex:auction}) with $6$ resources and $20$ bidders: 
    We fix $c_{\indg}=1$, draw $q_{\indg}$ and $\vw[\textsl{g}]$ uniformly at random from $[4,6]$, and draw $\vw[b]$ uniformly at random from $[5,10]$.
    Each player runs either \ac{OptDA} or \ac{DS-OptMD} and $\vw[\hreg](\point)=\norm{\point}_2^2/2$ for all $\play\in\players$.
    \item A three-player-matching-pennies game introduced in \cite{Jordan93}: Each player has two pure strategies. Player $1$ wants to match the pure strategy of player $2$; player $2$ wants to match the pure strategy of player $3$; and player $3$ wants to match the opposite of the pure strategy of player $1$. Each player receives a loss of $-1$ if they match as desired, and $1$ otherwise.
    It is straightforward to see that the unique equilibrium is achieved when all the players uniformly randomize.
    In this game, we let the three players run \ac{DS-OptMD} with Euclidean regularizer.
\end{itemize}

As for the learning rates, we fix $\vw[\tau]=1$ and use the Euclidean norm $\vwp[\norm{\cdot}]=\norm{\cdot}_2$ throughout.
The results are plotted in \cref{fig:illustration}.
The first two games that we consider are variationally stable, and as predicted by our analysis, we observe the convergence of the iterates and the boundedness of the individual regrets.
For the three-player-matching-pennies game, all the players oscillate between the two pure strategies, and have their individual regrets tend to minus infinity. This is consistent with our dichotomy result \cref{thm:dichotomy-general-sum}.

\section{Concluding remarks}
\label{sec:conclusion}
In this work, we have presented a family of adaptive algorithms for online learning in continuous games that solely utilizes local information received by each player.
We showed that these algorithms achieve optimal regret bounds under various conditions, and more importantly, lead to Nash equilibrium when employed by all the players in a variationally stable game.

Many interesting questions remain to be answered.
For example, it is known that optimistic algorithms can achieve individual regret much smaller than $\bigoh(\sqrt{\nRuns})$ in general-sum finite games when used by all players \cite{SALS15,CP20}.
Is this feature shared by our algorithm?
\cref{thm:dichotomy-general-sum} and preliminary experiments suggest that this could be the case.
Nonetheless, even this property does not imply that the algorithm effectively generates a `good' sequence of play.
In fact, in some cases, the players may benefit more from staying at a Nash equilibrium rather than following a trajectory that lead to $-\infty$ individual regret, and we believe that understanding the dynamics of the algorithm even in the case of no-convergence is an important and challenging direction for future research.

\section*{Acknowledgments}
%
%
This research was partially supported by the COST Action CA16228 ``European Network for Game Theory'' (GAMENET),
and
the French National Research Agency (ANR) in the framework of
the ``Investissements d'avenir'' program (ANR-15-IDEX-02),
the LabEx PERSYVAL (ANR-11-LABX-0025-01),
MIAI@Grenoble Alpes (ANR-19-P3IA-0003),
and the grants ORACLESS (ANR-16-CE33-0004) and ALIAS (ANR-19-CE48-0018-01).

\bibliography{references,Bibliography}

\newpage
\setlength\parindent{0pt}
\appendix

\section{Proof of \cref{lem:template-descent}}
\label{app:mirror}
In this appendix, we present several basic properties of the Bregman divergence, the mirror map, and the Fenchel coupling, before proceeding to prove \cref{lem:template-descent}.
For ease of notation, the player index will be dropped in the notation.
In particular, we will write $\points$ and $\hreg$ respectively for the player's action space and the associated regularizer, and we assume that $\hreg$ is $1$-stronlgy convex relative to an ambient norm $\norm{\cdot}$.

\subsection{Bregman divergence, mirror map, and Fenchel coupling}

We first recall the definition of the Bregman divergence and the Fenchel coupling,
\begin{equation}
    \notag
    \begin{aligned}
    \breg(\arpoint,\point) &= \hreg(\arpoint) - \hreg(\point) - \product{\grad\hreg(\point)}{\arpoint-\point},
    \\
    \fench(\arpoint,\dvec) &= \hreg(\arpoint) + \dual{\hreg}(\dvec) - \product{\dvec}{\arpoint},
    \end{aligned}
\end{equation}
where $\dual{\hreg}\from\dvec\to\max_{\point\in\points}\product{\dvec}{\point}-\hreg(\point)$ is the Fenchel conjugate of $\hreg$.
We also recall that the mirror map induced by $\hreg$ is defined as
\begin{equation}
    \notag
    \mirror(\dvec) = \argmin_{\point\in\points}\product{-\dvec}{\point} + \hreg(\point).
\end{equation}
The auxiliary results that we are going to present below concerning these three quantities are not new (see \eg \cite{JNT11,NJLS09,MZ19} and references therein); however, the set of hypotheses used to obtain them varies widely in the literature, so we still provide the proofs for the sake of completeness.

To begin, our first lemma concerns the optimality condition of the mirror map.

\begin{lemma}
\label{lem:mirror-optimality}
Let $\hreg$ be a regularizer on $\points$. Then, for all $\point\in\dom\subd\hreg$ and all $\dvec\in\dspace$, we have
\begin{equation}
    \notag
    \point=\mirror(\dvec) \iff \dvec \in \subd\hreg(\point).
\end{equation}
Moreover, if $\point=\mirror(\dvec)$, it holds for all $\arpoint\in\points$ that
\begin{equation}
    \notag
    \product{\grad\hreg(\point)}{\point-\arpoint}\le\product{\dvec}{\point-\arpoint}.
\end{equation}
\end{lemma}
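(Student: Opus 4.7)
The statement has two parts \textendash\ the equivalence $\point = \mirror(\dvec) \Leftrightarrow \dvec \in \subd\hreg(\point)$ and the final variational inequality \textendash\ and I would handle both by reading off the first-order optimality conditions of the convex program defining $\mirror(\dvec)$, leveraging the subgradient inequality for the equivalence and a normal-cone decomposition of $\subd\hreg$ for the inequality.

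For the equivalence, by definition $\point = \mirror(\dvec)$ says that $\hreg(\point) - \product{\dvec}{\point} \le \hreg(\arpoint) - \product{\dvec}{\arpoint}$ for every $\arpoint \in \points$. Rearranging immediately gives $\hreg(\arpoint) \ge \hreg(\point) + \product{\dvec}{\arpoint - \point}$, which (with $\hreg$ tacitly extended by $+\infty$ outside $\points$) is exactly the subgradient inequality $\dvec \in \subd\hreg(\point)$, yielding the ``$\Rightarrow$'' direction. The converse ``$\Leftarrow$'' reverses the same chain, with uniqueness of the minimizer guaranteed by the strong convexity of $\hreg$.

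For the variational inequality, the key is the decomposition $\subd\hreg(\point) = \{\grad\hreg(\point)\} + N_{\points}(\point)$, where $N_{\points}(\point)$ is the normal cone to $\points$ at $\point$. This decomposition is available in our setting because the existence of a \emph{continuous} selection $\grad\hreg$ of $\subd\hreg$ forces $\hreg$ to be differentiable on the relative interior of $\dom\subd\hreg$, so $\grad\hreg(\point)$ is singled out as the ``interior-induced'' subgradient. Combined with the first part, $\dvec \in \subd\hreg(\point)$ then translates into $\dvec - \grad\hreg(\point) \in N_{\points}(\point)$, and the defining inequality $\product{\dvec - \grad\hreg(\point)}{\arpoint - \point} \le 0$ of the normal cone rearranges immediately to the claimed $\product{\grad\hreg(\point)}{\point - \arpoint} \le \product{\dvec}{\point - \arpoint}$.

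The main obstacle I anticipate is cleanly justifying the normal-cone decomposition \textendash\ \ie asserting that $\grad\hreg(\point)$ really is the ``canonical'' subgradient at boundary points of $\points$, so that any other element of $\subd\hreg(\point)$ differs from it by a vector in $N_{\points}(\point)$. Without the continuous-selection hypothesis this would fail (the subdifferential can be set-valued even at relative interior points, and picking an arbitrary subgradient need not give the desired inequality), so the writeup should explicitly invoke the standard convex-analytic fact that a continuous selection of $\subd\hreg$ necessarily coincides with the gradient wherever $\hreg$ is differentiable on $\relint\points$.
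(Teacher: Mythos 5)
Your treatment of the equivalence is correct and is the same as the paper's: $\point=\mirror(\dvec)$ is just the first-order optimality condition $\dvec\in\subd\hreg(\point)$ for the strongly convex program defining the mirror map. The problem is the second claim. The decomposition $\subd\hreg(\point)=\{\grad\hreg(\point)\}+N_{\points}(\point)$ is not an off-the-shelf fact you can cite in this setting: $\hreg$ is defined only on $\points$, so there is no ambient differentiable $f_0$ to which a sum rule $\subd(f_0+\delta_{\points})=\{\grad f_0\}+N_{\points}$ applies, and the inclusion you actually need, $\subd\hreg(\point)\subseteq\{\grad\hreg(\point)\}+N_{\points}(\point)$, is word for word the inequality to be proved (given part one, $\dvec\in\subd\hreg(\point)$, the claim $\product{\dvec-\grad\hreg(\point)}{\arpoint-\point}\le0$ for all $\arpoint\in\points$ \emph{is} the statement that $\dvec-\grad\hreg(\point)\in N_{\points}(\point)$). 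The justification you propose \textendash\ that a continuous selection forces differentiability on $\relint\points$ and coincides with the gradient there \textendash\ only constrains $\grad\hreg$ at relative interior points, where the normal cone is trivial and the claim has no content; it says nothing about why, at a boundary point $\point\in\dom\subd\hreg$, every subgradient differs from the particular selection value $\grad\hreg(\point)$ by a normal vector. So, as written, the argument is circular: the ``main obstacle'' you flag is the entire lemma, and the ``standard fact'' you invoke does not close it.

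What does close it is a limiting argument along segments into the relative interior, and this is exactly the paper's proof. Fix $\base\in\relint\points$ and set $\point_t=\point+t(\base-\point)\in\relint\points\subseteq\dom\subd\hreg$ for $t\in(0,1]$. Either proceed as the paper does: $\phi(t)=\hreg(\point_t)-\hreg(\point)-t\product{\dvec}{\base-\point}$ satisfies $\phi(t)\ge0=\phi(0)$ by the subgradient inequality for $\dvec\in\subd\hreg(\point)$, and $\psi(t)=\product{\grad\hreg(\point_t)-\dvec}{\base-\point}$ is a continuous selection of subgradients of $\phi$, so $\phi$ is continuously differentiable with $\phi'=\psi$ and $\phi'(0)=\product{\grad\hreg(\point)-\dvec}{\base-\point}\ge0$; or, equivalently, use monotonicity of $\subd\hreg$, namely $\product{\grad\hreg(\point_t)-\dvec}{\point_t-\point}\ge0$, divide by $t$ and let $t\to0^{+}$, using continuity of the selection at $\point$ along the segment. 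In both variants the conclusion is extended from $\base\in\relint\points$ to all $\arpoint\in\points$ by continuity. Note that the work is done by the continuity of $\grad\hreg$ \emph{at the point $\point$ itself}, approached from $\relint\points$ \textendash\ not by its behavior on $\relint\points$ alone \textendash\ and this is the ingredient missing from your writeup.
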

\begin{proof}
For the first claim, we have by the definition of the mirror map 
$\point=\mirror(\dvec)$ if and only if $0\in\subd\hreg(\point)-\dvec$, \ie $\dvec\in\partial\hreg(\point)$.
For the second claim, it suffices to show it holds for all $\base\in\relint\points$ (by continuity).
To do so, we can define
\begin{equation}
\notag
    \phi(t)
	= \hreg(\point + t(\base-\point))
	- \bracks{\hreg(\point) +  \braket{\dvec}{\point + t(\base-\point)}}.
\end{equation}
Since $\hreg$ is strongly convex and $\dvec\in\partial\hreg(\point)$ by the previous claim, it follows that $\phi(t)\geq0$ with equality if and only if $t=0$.
Moreover, as $\relint\points\subset\dom\partial\hreg$, $\nabla\hreg(\point + t(\base-\point))$ is well-defined and $\psi(t) = \braket{\nabla\hreg(\point + t(\base-\point)) - \dvec}{\base - \point}$ is a continuous selection of subgradients of $\phi$.
Given that $\phi$ and $\psi$ are both continuous on $[0,1]$, it follows that $\phi$ is continuously differentiable and $\phi' = \psi$ on $[0,1]$.
Thus, with $\phi(t) \geq 0 = \phi(0)$ for all $t\in[0,1]$, we conclude that $\phi'(0) = \braket{\nabla\hreg(\point) - \dvec}{\base - \point} \geq 0$, from which our claim follows.
\end{proof}

We continue with the ``three-point identity'' \cite{CT93} which will be used to derive the recurrent relationship between the divergence measures of different steps.

\begin{lemma}
\label{lem:3point}
Let $\hreg$ be a regularizer on $\points$. Then, for all $\arpoint\in\points$ and all $\point,\alt{\point}\in\dom\subd\hreg$, we have
\begin{equation}
    \label{eq:breg-3point}
    \product{\grad\hreg(\alt{\point})-\grad\hreg(\point)}{\point-\arpoint}
    =\breg(\arpoint,\alt{\point})-\breg(\arpoint,\point)-\breg(\point,\alt{\point}).
\end{equation}
Similarly, writing $\point=\mirror(\dvec)$, for all $\arpoint\in\points$ and all $\dvec,\alt{\dvec}\in\dspace$, we have
\begin{equation}
    \label{eq:fench-3point}
    \product{\alt{\dvec}-\dvec}{\point-\arpoint}
    =\fench(\arpoint,\alt{\dvec})-\fench(\arpoint,\dvec)-\fench(\point,\alt{\dvec}).
\end{equation}
\end{lemma}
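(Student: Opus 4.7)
The plan is to prove both identities by direct expansion from the definitions of $\breg$ and $\fench$; the Bregman identity is purely mechanical, while the Fenchel identity requires one extra ingredient, namely the Fenchel--Young equality for subgradients, which Lemma~\ref{lem:mirror-optimality} supplies.

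For \eqref{eq:breg-3point}, I would substitute $\breg(\arpoint,\point)=\hreg(\arpoint)-\hreg(\point)-\product{\grad\hreg(\point)}{\arpoint-\point}$ into each of the three Bregman terms on the right-hand side. The values $\hreg(\arpoint)$, $\hreg(\point)$, and $\hreg(\alt{\point})$ each appear twice with opposite signs and therefore cancel, leaving
\begin{equation*}
-\product{\grad\hreg(\alt{\point})}{\arpoint-\alt{\point}}+\product{\grad\hreg(\point)}{\arpoint-\point}+\product{\grad\hreg(\alt{\point})}{\point-\alt{\point}}.
\end{equation*}
Combining the first and third terms into $\product{\grad\hreg(\alt{\point})}{\point-\arpoint}$ and adding the middle term yields $\product{\grad\hreg(\alt{\point})-\grad\hreg(\point)}{\point-\arpoint}$, which is the claim.

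For \eqref{eq:fench-3point}, the analogous expansion via $\fench(\arpoint,\dvec)=\hreg(\arpoint)+\dual{\hreg}(\dvec)-\product{\dvec}{\arpoint}$ makes the two $\hreg(\arpoint)$ terms and the two $\dual{\hreg}(\alt{\dvec})$ terms cancel. Rewriting the surviving inner products through the identity $\product{\alt{\dvec}-\dvec}{\point-\arpoint}=\product{\alt{\dvec}}{\point}-\product{\alt{\dvec}}{\arpoint}-\product{\dvec}{\point}+\product{\dvec}{\arpoint}$ shows that the right-hand side of \eqref{eq:fench-3point} equals $\product{\alt{\dvec}-\dvec}{\point-\arpoint}+\bigl(\product{\dvec}{\point}-\hreg(\point)-\dual{\hreg}(\dvec)\bigr)$, so the identity reduces to the vanishing of the parenthesized term. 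This is the one non-algebraic step: since $\point=\mirror(\dvec)$, Lemma~\ref{lem:mirror-optimality} gives $\dvec\in\subd\hreg(\point)$, which is precisely the case in which the Fenchel--Young inequality becomes an equality, $\hreg(\point)+\dual{\hreg}(\dvec)=\product{\dvec}{\point}$. I anticipate no obstacle beyond this bookkeeping; the whole lemma is essentially the Fenchel--Young equality packaged with term-chasing.
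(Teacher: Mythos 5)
Your proposal is correct and follows essentially the same route as the paper: both identities are obtained by direct expansion of the definitions, with the Fenchel case hinging on the fact that the conjugate value at $\dvec$ is attained at $\mirror(\dvec)$ (the paper writes $\dual{\hreg}(\dvec)=\product{\dvec}{\mirror(\dvec)}-\hreg(\mirror(\dvec))$ directly from the definition of the mirror map, which is the same Fenchel--Young equality you justify via Lemma~\ref{lem:mirror-optimality}). No gaps; the bookkeeping checks out.
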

\begin{proof}
We start with the Bregman version.
By definition,
\begin{equation}
\notag
\begin{aligned}
\breg(\base,\pointalt)
	&= \hreg(\base) - \hreg(\pointalt) - \braket{\nabla\hreg(\pointalt)}{\base - \pointalt}
	\\
\breg(\base,\point)\hphantom{'}
	&= \hreg(\base) - \hreg(\point) - \braket{\nabla\hreg(\point)}{\base - \point}
	\\
\breg(\point,\pointalt)
	&= \hreg(\point) - \hreg(\pointalt) - \braket{\nabla\hreg(\pointalt)}{\point - \pointalt}.
\end{aligned}
\end{equation}
The result then follows by adding the two last lines and subtracting the first.
On the other hand, in order to show the Fenchel coupling version we write
\begin{equation}
\notag
\begin{aligned}
\fench(\base,\alt{\dvec})
	&= \hreg(\base) +\hreg^{\ast}(\alt{\dvec}) - \braket{\alt{\dvec}}{\base}
	\\
\fench(\base,\dvec)\hphantom{'}
	&= \hreg(\base) + \hreg^{\ast}(\dvec) - \braket{\dvec}{\base}.
	\\
\end{aligned}
\end{equation}
Then, by subtracting the above we obtain
\begin{align}
\fench(\base,\alt{\dvec})-\fench(\base,\dvec)
	&= \hreg(\base)
		+ \hreg^{\ast}(\alt{\dvec})
		- \braket{\alt{\dvec}}{\base}
		- \hreg(\base)
		- \hreg^{\ast}(\dvec)
		+ \braket{\dvec}{\base}
	\notag\\
	&=\hreg^{\ast}(\alt{\dvec})
		- \hreg^{\ast}(\dvec)
		- \braket{\alt{\dvec} - \dvec}{\base}
	\notag\\
	&=\hreg^{\ast}(\alt{\dvec})
		- \braket{\dvec}{\mirror(\dvec)}
		+ \hreg(\mirror(\dvec))
		- \braket{\alt{\dvec} - \dvec}{\base}
	\notag\\
	&=\hreg^{\ast}(\alt{\dvec})
		- \braket{\dvec}{\point}
		+ \hreg(\point)
		- \braket{\alt{\dvec} - \dvec}{\base}
	\notag\\
	&= \hreg^{\ast}(\alt{\dvec})
		+ \braket{\alt{\dvec} - \dvec}{\point}
		- \braket{\alt{\dvec}}{\point}
		+ \hreg(\point)
		- \braket{\alt{\dvec} - \dvec}{\base}
	\notag\\
	&= \fench(\point,\alt{\dvec})
		+ \braket{\alt{\dvec} - \dvec}{\point - \base}
	\notag
\end{align}
and our proof is complete.
\end{proof}

Since $\point=\mirror(\grad\hreg(\point))$ and $\fench(\arpoint,\grad\hreg(\point))=\breg(\arpoint,\point)$, the identity \eqref{eq:breg-3point} is indeed a special case of \eqref{eq:fench-3point}.
In the general case, the Fenchel coupling and the Bregman divergence can be related by the following lemma.

\begin{lemma}
\label{lem:fench-breg}
Let $\hreg$ be a regularizer on $\arpoints$.
Then, for all $\arpoint\in\arpoints$ and $\dvec\in\dspace$, it holds
\begin{equation}
    \notag
    \fench(\arpoint,\dvec)\ge\breg(\arpoint,\mirror(\dvec))\ge\frac{\norm{\arpoint-\mirror(\dvec)}^2}{2}.
\end{equation}
\end{lemma}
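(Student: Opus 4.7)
The plan is to prove both inequalities by reducing everything to the definition of $\fench$ and $\breg$ at the point $\point \defeq \mirror(\dvec)$, and then invoking \cref{lem:mirror-optimality} together with the $1$-strong convexity of $\hreg$.

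First, I set $\point = \mirror(\dvec)$ and rewrite the Fenchel conjugate. By the very definition of the mirror map, $\point$ is the argmax in the definition of $\dual{\hreg}(\dvec)$, which gives the identity
\begin{equation}
\notag
\dual{\hreg}(\dvec) = \product{\dvec}{\point} - \hreg(\point).
\end{equation}
Substituting this into the definition of the Fenchel coupling yields
\begin{equation}
\notag
\fench(\arpoint,\dvec)
= \hreg(\arpoint) - \hreg(\point) + \product{\dvec}{\point - \arpoint}.
\end{equation}
This is the key ``rewriting'' step that makes the Fenchel coupling directly comparable to the Bregman divergence: the only difference between this expression and $\breg(\arpoint,\point)$ is that $\product{\dvec}{\point-\arpoint}$ appears instead of $\product{\grad\hreg(\point)}{\point-\arpoint}$.

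Second, I invoke the second assertion of \cref{lem:mirror-optimality} applied to the pair $(\dvec,\point)$, which states precisely that
\begin{equation}
\notag
\product{\grad\hreg(\point)}{\point - \arpoint} \le \product{\dvec}{\point - \arpoint}.
\end{equation}
Plugging this into the expression from the first step, one obtains
\begin{equation}
\notag
\fench(\arpoint,\dvec)
\ge \hreg(\arpoint) - \hreg(\point) - \product{\grad\hreg(\point)}{\arpoint - \point}
= \breg(\arpoint,\point),
\end{equation}
which gives the first inequality.

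Third, for the lower bound $\breg(\arpoint,\point) \ge \tfrac{1}{2}\norm{\arpoint-\point}^2$, I invoke the $1$-strong convexity of $\hreg$ relative to the ambient norm $\norm{\cdot}$. Strong convexity in the subdifferential form reads
\begin{equation}
\notag
\hreg(\arpoint) \ge \hreg(\point) + \product{\grad\hreg(\point)}{\arpoint-\point} + \tfrac{1}{2}\norm{\arpoint-\point}^2
\end{equation}
for every selection $\grad\hreg(\point) \in \subd\hreg(\point)$, and rearranging this gives exactly the desired bound. None of the steps present a real obstacle; the only mild subtlety is that $\dvec$ and $\grad\hreg(\point)$ need not coincide when $\hreg$ is not differentiable, but the optimality inequality of \cref{lem:mirror-optimality} is precisely what bridges this gap.
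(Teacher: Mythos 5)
Your proof is correct and follows essentially the same route as the paper: rewrite $\dual{\hreg}(\dvec)$ via the argmax characterization of $\mirror(\dvec)$, apply the optimality inequality of \cref{lem:mirror-optimality} to replace $\dvec$ by $\grad\hreg(\mirror(\dvec))$, and finish with the $1$-strong convexity of $\hreg$. No gaps.
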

\begin{proof}
For the first inequality we have,
\begin{align*}
\fench(\arpoint,\dvec)&=\hreg(\arpoint)+\hreg^{\ast}(\dvec)-\braket{\dvec}{\arpoint}\\
&=\hreg(\arpoint)-\hreg(\mirror(\dvec))+\braket{\dvec}{\mirror(\dvec)}+\braket{\dvec}{-\arpoint}\\
&=\hreg(\arpoint)-\hreg(\mirror(\dvec))-\braket{\dvec}{\arpoint-\mirror(\dvec)}
\end{align*}
Since $\dvec \in \partial \hreg(\mirror(\dvec))$, by \cref{lem:mirror-optimality} we get
\begin{equation}
\notag
\braket{\nabla \hreg(\mirror(\dvec))}{\mirror(\dvec)-\arpoint}\leq \braket{\dvec}{\mirror(\dvec)-\arpoint}
\end{equation}
With all the above we then have
\begin{align*}
\fench(\arpoint,\dvec)&=\hreg(\arpoint)-\hreg(\mirror(\dvec))-\braket{\dvec}{\arpoint-\mirror(\dvec)}\\
&\geq \hreg(\arpoint)-\hreg(\mirror(\dvec))-\braket{\nabla \hreg(\mirror(\dvec))}{\arpoint-\mirror(\dvec)}\\
&=\breg(\arpoint,\mirror(\dvec))
\end{align*}
and the result follows. 
The second inequality follows directly from the fact that the regularizer $\hreg$ is $1$-strongly convex relative to $\norm{\cdot}$. 
\end{proof}

\begin{remark*}
From the above proof we see that $\fench(\arpoint,\dvec)=\hreg(\arpoint)-\hreg(\mirror(\dvec))-\product{\dvec}{\arpoint-\mirror(\dvec)}$.
Since $\dvec\in\subd\hreg(\mirror(\dvec))$ by \cref{lem:mirror-optimality}, Fenchel coupling is also closely related to a generalized version of Bregman divergence
which is defined for $\arpoint\in\points$, $\point\in\dom\subd\hreg$, and $\gvec\in\subd\hreg(\point)$ by $\breg(\arpoint,\point;\gvec)=\hreg(\arpoint)-\hreg(\point)-\product{\gvec}{\arpoint-\point}$. This definition is formally introduced in \cite{JKM19}, but its use in the literature can be traced back to much earlier work such as \cite{Kiw97}.
\end{remark*}

\begin{remark*}
By using $\point=\mirror(\grad\hreg(\point))$ and $\fench(\arpoint,\grad\hreg(\point))=\breg(\arpoint,\point)$, we see immediately that Bregman reciprocity is implied by Fenchel reciprocity. 
\end{remark*}

\subsection{Optimistic dual averaging}

We first prove \cref{lem:template-descent} for \acf{OptDA}. Its update writes 
\begin{equation}
    \tag{OptDA}
    \begin{aligned}
    \vt[\state] &= \argmin_{\point\in\points} \sum_{\runalt=1}^{\run-1}\product{\vt[\gvec][\runalt]}{\point} + \vt[\regpar]\hreg(\point),
    \\
    \inter &= \argmin_{\point\in\points}\thinspace
    \product{\last[\gvec]}{\point} + \vt[\regpar]\breg(\point, \vt[\state]).
\end{aligned}
\end{equation}
Let us define $\vt[\dstate]=(-1/\vt[\regpar])\sum_{\runalt=1}^{\run-1}\vt[\gvec][\runalt]$ so that $\vt[\state]=\mirror(\vt[\dstate])$.
For any $\arpoint\in\points$, we can apply the three-point identity for Fenchel coupling \eqref{eq:fench-3point} to the update of $\update[\state]$ and get
\begin{equation}
    \notag
    \begin{aligned}
    \product{\vt[\gvec]}{\update-\arpoint}
    &=\product{\vt[\regpar]\vt[\dstate]-\update[\regpar]\update[\dstate]}{\update-\arpoint}\\
    &=\vt[\regpar]\product{\vt[\dstate]-\update[\dstate]}{\update-\arpoint}
    +(\update[\regpar]-\vt[\regpar])\product{0-\update[\dstate]}{\update-\arpoint}\\
    &= \vt[\regpar]
        (\fench(\arpoint,\vt[\dstate])-\fench(\arpoint,\update[\dstate])-\fench(\update,\vt[\dstate]))\\
    &~~+ (\vt[\regpar][\run+1]-\vt[\regpar])
        (\fench(\arpoint,0)-\fench(\arpoint,\update[\dstate])-\fench(\update,0)).
    \end{aligned}
\end{equation}
As $\fench(\arpoint,0)=\hreg(\arpoint)-\hreg(\mirror(0))=\hreg(\arpoint)-\min\hreg$,
writing $\armeasure(\arpoint)=\hreg(\arpoint)-\min\hreg$, the above gives
\begin{equation}
    \label{eq:dual-avg-inq}
    \product{\vt[\gvec]}{\update-\arpoint}
    \le \vt[\regpar]\fench(\arpoint,\vt[\dstate]) - \update[\regpar]\fench(\arpoint,\update[\dstate])
    -\vt[\regpar]\fench(\update,\vt[\dstate]) + (\update[\regpar]-\vt[\regpar])\armeasure(\arpoint).
\end{equation}
As for the update of $\inter$, we note that $\inter=\mirror(\grad\hreg(\current)-\last[\gvec]/\vt[\regpar])$.
Therefore, invoking \cref{lem:mirror-optimality} gives
\begin{equation}
    \notag
    \product{\grad\hreg(\inter)}{\inter-\arpoint}\le
    \big\langle\grad\hreg(\current)-\frac{\last[\gvec]}{\vt[\regpar]},\inter-\arpoint\big\rangle.
\end{equation}
For the specific choice $\arpoint\subs\update$, using the three-point identity for Bregman divergence \eqref{eq:breg-3point} we obtain
\begin{equation}
    \label{eq:mirror-descent-inq}
    \begin{aligned}[b]
    \product{\last[\gvec]}{\inter-\update}
    &\le \vt[\regpar] \product{\grad\hreg(\current)-\grad\hreg(\inter)}{\inter-\update}\\
    &= \vt[\regpar](\breg(\update,\current) - \breg(\update,\inter) - \breg(\inter,\current)).
    \end{aligned}
\end{equation}
Since $\fench(\update,\vt[\dstate])\ge\breg(\update,\current)$ by \cref{lem:fench-breg}, combining \eqref{eq:dual-avg-inq} and \eqref{eq:mirror-descent-inq} leads to
\begin{equation}
    \notag
    \begin{aligned}[b]
    \product{\vt[\gvec]}{\inter-\arpoint}
    &= \product{\vt[\gvec]-\last[\gvec]}{\inter-\update}
    + \product{\last[\gvec]}{\inter-\update}
    + \product{\vt[\gvec]}{\update-\arpoint}\\
    &\le \vt[\regpar]\fench(\arpoint,\vt[\dstate]) - \update[\regpar]\fench(\arpoint,\update[\dstate])
    + (\update[\regpar]-\vt[\regpar])\armeasure(\arpoint)\\
    &~~ + \product{\vt[\gvec]-\last[\gvec]}{\inter-\update}
    - \vt[\regpar]\breg(\update,\inter) - \vt[\regpar]\breg(\inter,\current).
    \end{aligned}
\end{equation}
This proves the generated iterates of \ac{OptDA} satisfy \eqref{eq:template-descent-main} with $\vwt[\estseq]=\vw[\fench](\cdot,\vwt[\dstate])$ and $\vw[\armeasure]=\vw[\hreg]-\min\vw[\hreg]$.

\subsection{Dual stabilized optimistic mirror descent}

We next prove the generated iterates of \acf{DS-OptMD} satisfy \eqref{eq:template-descent-main} with $\vwt[\estseq]=\vw[\breg](\cdot,\vwt[\state])$ and $\vw[\armeasure]=\vw[\breg](\cdot,\vwt[\state][\play][\start])$.
The algorithm is stated recursively as
\begin{equation}
    \tag{DS-OptMD}
    \begin{aligned}
    \inter &= \argmin_{\point\in\points}\thinspace
    \product{\last[\gvec]}{\point} + \vt[\regpar]\breg(\point, \vt[\state]),
    \\
    \update[\state] &= \argmin_{\point\in\points}\thinspace
    \product{\vt[\gvec]}{\point} + \vt[\regpar]\breg(\point, \vt[\state])
    + (\vt[\regpar][\run+1]-\vt[\regpar])\breg(\point, \vt[\state][\start]).
\end{aligned}
\end{equation}
By definition of the Bregman divergence and the mirror map, the second step is equivalent to
\begin{equation}
    \notag
    \update[\state] = \mirror\left(
    \frac{\vt[\regpar]}{\update[\regpar]}\grad\hreg(\current)
    + (1-\frac{\vt[\regpar]}{\update[\regpar]})\grad\hreg(\vt[\state][\start])-\frac{\vt[\gvec]}{\update[\regpar]}
    \right).
\end{equation}
This shows that the update of $\update$ consists in fact of a mixing step in the dual space with weight $\vt[\regpar]/\update[\regpar]$ followed by a standard mirror descent step. Applying \cref{lem:mirror-optimality} gives
\begin{equation}
    \notag
    \product{\grad\hreg(\update)}{\update-\arpoint}\le
    \left\langle\frac{\vt[\regpar]}{\update[\regpar]}\grad\hreg(\current)
    + (1-\frac{\vt[\regpar]}{\update[\regpar]})\grad\hreg(\vt[\state][\start])-\frac{\vt[\gvec]}{\update[\regpar]},\update-\arpoint\right\rangle.
\end{equation}
We rearrange the terms and use the three-point identity \eqref{eq:breg-3point} to get
\begin{equation}
    \label{eq:base-descent-DS}
    \begin{aligned}[b]
    \product{\vt[\gvec]}{\update[\state]-\arpoint}
    &\le\vt[\regpar]\product{\grad\hreg(\vt[\state])-\grad\hreg(\update[\state])}{\update-\arpoint}\\
    &~~+(\update[\regpar]-\vt[\regpar])\product{\grad\hreg(\vt[\state][\start])-\grad\hreg(\update[\state])}{\update-\arpoint}
    \\
    &\le \vt[\regpar]
    (\breg(\arpoint,\vt[\state])-\breg(\arpoint,\update)-\breg(\update,\current))\\
    &~~+ (\vt[\regpar][\run+1]-\vt[\regpar])
    (\breg(\arpoint,\vt[\state][\start])-\breg(\arpoint,\update)-\breg(\update,\vt[\state][\start]))
    \end{aligned}
\end{equation}
Since $\inter$ is computed exactly as in \eqref{eq:OptDA}, inequality \eqref{eq:mirror-descent-inq} still holds.
We conclude by putting together \eqref{eq:base-descent-DS} and \eqref{eq:mirror-descent-inq}
\begin{equation}
    \notag
    \begin{aligned}[b]
    \product{\vt[\gvec]}{\inter-\arpoint}
    &= \product{\vt[\gvec]-\last[\gvec]}{\inter-\update}
    + \product{\last[\gvec]}{\inter-\update}
    + \product{\vt[\gvec]}{\update-\arpoint}\\
    &\le \vt[\regpar]\breg(\arpoint,\vt[\state]) - \update[\regpar]\breg(\arpoint,\update[\state])
    + (\update[\regpar]-\vt[\regpar])\breg(\arpoint,\vt[\state][\start])\\
    &~~ + \product{\vt[\gvec]-\last[\gvec]}{\inter-\update}
    - \vt[\regpar]\breg(\update,\inter) - \vt[\regpar]\breg(\inter,\current).
    \end{aligned}
\end{equation}
This prove \cref{lem:template-descent} for \ac{DS-OptMD}. \qed

\section{Adaptive optimistic algorithms}
\label{app:template}
In the remainder of the appendix, we consider a broad family of algorithms which we refer to as ``optimistic and compatible with dynamic learning rate''.
Given a regularizer $\hreg$ and a sequence of non-decreasing positive numbers $\seqinf[\vt[\regpar]]$, 
an algorithm of this family produces a sequence of iterates $\seqinf[\vt[\state][\runalt]][\runalt][\N/2]$ satisfying that


\begin{enumerate}[leftmargin=*]
    \item For some non-negative continuous functions $\seqinf[\vt[\estseq]]$ and $\armeasure$ defined on $\vw[\points]$ (the player's action set), we have, for all $\arpoint\in\vw[\points]$,
    \begin{equation}
        \label{eq:template-descent}
        \begin{aligned}[b]
        \update[\regpar]\update[\estseq](\arpoint)
        &\le
        \current[\regpar]\current[\estseq](\arpoint)
        - \product{\vt[\gvec]}{\inter-\arpoint}
        + (\update[\regpar]-\vt[\regpar])\armeasure(\arpoint)\\
        &~+ \product{\vt[\gvec]-\last[\gvec]}{\inter-\update}
        - \vt[\regpar]\breg(\update,\inter) - \vt[\regpar]\breg(\inter,\current),
        \end{aligned}
    \end{equation}
    where $\breg$ is the associated Bregman divergence of $\hreg$.
    \item For every $\run\in\N$, $\inter$ is generated by
    \begin{equation}
    \notag
    \inter = \argmin_{\point\in\vw[\points]}\thinspace
    \product{\last[\gvec]}{\point} + \vt[\regpar]\breg(\point, \vt[\state]).
\end{equation}
\end{enumerate}
By replacing $\armeasure$ with $\max(\armeasure,\vt[\estseq][\start])$ if needed, we may assume $\vt[\estseq][\start]\le\armeasure$ without loss of generality.
Thanks to \cref{lem:template-descent}, we know that both \eqref{eq:DS-OptMD} and \eqref{eq:OptDA} are optimistic and compatible with dynamic learning rate.
As another example, it can be proved in a similar way that \eqref{eq:OptMD} is optimistic and compatible with dynamic learning rate if $\sup_{\arpoint,\point\in\vw[\points]}\breg(\arpoint,\point)<+\infty$.
In this case, $\vt[\estseq]=\breg(\cdot,\vt[\state])$ and $\armeasure\equiv\sup_{\arpoint,\point\in\vw[\points]}\breg(\arpoint,\point)$.

Since the player's cost function is convex with respect to its own action by \cref{asm:convexity+smoothness}, their regret can be bounded by the linearized regret,\footnote{This argument will be used implicitly throughout the proofs.} which, using \eqref{eq:template-descent}, can be in turn bounded by
\begin{equation}
    \label{eq:template-regret}
    \begin{aligned}[b]
    \sum_{\run=1}^{\nRuns}\product{\vt[\gvec]}{\inter[\state]-\arpoint}
    &\le \vt[\regpar][\nRuns+1]\armeasure(\arpoint)
    - \vt[\regpar][\nRuns+1]\vt[\estseq][\nRuns+1](\arpoint)
    + \sum_{\run=1}^{\nRuns}
    \product{\vt[\gvec]-\last[\gvec]}{\inter[\state]-\update[\state]}\\
    &~-\sum_{\run=1}^{\nRuns}\vt[\regpar] \left(
    \breg(\update[\state],\inter[\state])
    +\breg(\inter[\state],\vt[\state])\right).
    \end{aligned}
\end{equation}
To further obtain \eqref{eq:template-descent-main}, we need to invoke Young's inequality and the strong convexity of $\hreg$. More details can be found in the proof of \cref{thm:social-regret-bounded} (\cref{app:social-regret-proof}).
For those results that require the reciprocity conditions, this translates into the following requirement on $\vt[\estseq](\arpoint)$.

\begin{assumption}
\label{asm:distance-funciton}
For some norm $\norm{\cdot}$ and its associated distance function $\dist$, the sequence $\seqinf[\vt[\estseq]]$ satisfies
\begin{enumerate}[label=(\alph*)]
    \item For any $\run\in\N$, $\vt[\estseq](\arpoint)\ge(1/2)\norm{\vt[\state]-\arpoint}^2$.
    \label{asm:distance-funciton-a}
    \item For any compact set $\cpt\in\vw[\points]$ and $\sbradius>0$, there exists $\sradius>0$ such that if $\dist(\vt[\state],\cpt)\le\sradius$ then $\vt[\estseq](\cpt)\defeq\min_{\arpoint\in\cpt}\vt[\estseq](\arpoint)\le\sbradius$.
    \label{asm:distance-funciton-b}
\end{enumerate}
\end{assumption}

For $\vt[\estseq]=\breg(\cdot,\vt[\state])$ and $\vt[\estseq]=\fench(\cdot,\vt[\dstate])$, \cref{asm:distance-funciton}\ref{asm:distance-funciton-a} is indeed verified (\cref{lem:fench-breg}) and \cref{asm:distance-funciton}\ref{asm:distance-funciton-b} is implied by the corresponding reciprocity condition (this can be proved by using some standard arguments of the point-set topology).

In the sequel, we will restate all our results in the case where players ``adopt an adaptive optimistic learning strategy''.
This means that the player runs an optimistic algorithm that is compatible with dynamic learning rate with a regularizer $\vw[\hreg]$ and the adaptive scheme \eqref{eq:adaptive-reg}, and plays $\vwt[\action]=\vwtinter[\state]$.
For ease of presentation, we will take $\vw[\tau]=1$ throughout, and we will assume that $\vw[\hreg]$ is $1$-strongly convex relative to $\vwp[\norm{\cdot}]$,
but the proof can be easily adapted to general $\vw[\tau]$ and $\vwp[\norm{\cdot}]$.
It will also be convenient to define the norm on the joint action space as
\begin{equation}
    \label{eq:joint-norm}
    \norm{\allplayers{\vw[\point]}} = \sqrt{\sumplayer\vwp[\norm{\vw[\point]}^2]}.
\end{equation}


\section{Proofs for regret bounds}
\label{app:regret}

\subsection{Robustness to adversarial opponent}

{
\addtocounter{theorem}{-1}
\renewcommand{\thetheorem}{\ref{thm:adversarial-regret}}
\begin{theorem}
Suppose that \cref{asm:convexity+smoothness} holds, and a player $\play\in\players$ adopts an adaptive optimistic learning strategy.
If $\vw[\arpoints] \subseteq \vw[\points]$ is bounded and $\gbound = \sup_{\run} \norm{\vwt[\gvec]}$,
the regret incurred by the player is bounded as
$\vwt[\reg][\play][\nRuns](\vw[\arpoints])=\bigoh(\gbound\sqrt{\nRuns}+\gbound^2)$.
\end{theorem}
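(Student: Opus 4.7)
The plan is to instantiate the template regret inequality \eqref{eq:template-regret}, dispose of the optimistic cross-term with Young's inequality, and then invoke a standard self-bounding argument for the adaptive sequence $\vwt[\regpar]$.

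First I would pass from the true regret $\vwt[\reg][\play][\nRuns](\vw[\arpoints])$ to the linearised regret $\sum_{\run=1}^{\nRuns}\langle\vwt[\gvec],\vwtinter[\state]-\vw[\arpoint]\rangle$ using the individual convexity of $\vw[\loss]$ (\cref{asm:convexity+smoothness}), and then apply \eqref{eq:template-regret}. The only non-trivial term to control is
\[
\sum_{\run=1}^{\nRuns} \langle \vwt[\gvec]-\vwtlast[\gvec],\,\vwtinter[\state]-\vwtupdate[\state]\rangle
-\sum_{\run=1}^{\nRuns}\vwt[\regpar]\bigl(\vw[\breg](\vwtupdate[\state],\vwtinter[\state])+\vw[\breg](\vwtinter[\state],\vwt[\state])\bigr),
\]
since the remaining terms already give $\vwt[\regpar][\play][\nRuns+1]\vw[\armeasure](\vw[\arpoint]) = O(\vwt[\regpar][\play][\nRuns+1])$ on the bounded comparator set $\vw[\arpoints]$ (by continuity of $\vw[\hreg]$, and hence of $\vw[\armeasure]$).

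Next I would bound the cross-term via Young's inequality relative to the primal/dual norms $\vwp[\norm{\cdot}],\vwpdual[\norm{\cdot}]$:
\[
\langle\vwt[\gvec]-\vwtlast[\gvec],\vwtinter[\state]-\vwtupdate[\state]\rangle
\le \frac{\vwpdual[\norm{\vwt[\gvec]-\vwtlast[\gvec]}^2]}{2\vwt[\regpar]}
+ \frac{\vwt[\regpar]}{2}\vwp[\norm{\vwtinter[\state]-\vwtupdate[\state]}^2].
\]
Since $\vw[\hreg]$ is $1$-strongly convex relative to $\vwp[\norm{\cdot}]$, the second summand is dominated by $\vwt[\regpar]\vw[\breg](\vwtupdate[\state],\vwtinter[\state])$ and is absorbed by the negative term in \eqref{eq:template-regret}. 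Recalling $\vwt[\increment]=\vwpdual[\norm{\vwt[\gvec]-\vwtlast[\gvec]}^2]$, we are left with
\[
\vwt[\reg][\play][\nRuns](\vw[\arpoints]) \le \vwt[\regpar][\play][\nRuns+1]\sup_{\vw[\arpoint]\in\vw[\arpoints]}\vw[\armeasure](\vw[\arpoint]) + \tfrac{1}{2}\sum_{\run=1}^{\nRuns}\frac{\vwt[\increment]}{\vwt[\regpar]}.
\]

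Finally I would handle the two remaining quantities. Using the elementary self-bounding lemma $\sum_{\run=1}^{\nRuns} a_{\run}/\sqrt{1+\sum_{\runalt<\run}a_{\runalt}} \le 2\sqrt{1+\sum_{\run=1}^{\nRuns} a_{\run}}$ for non-negative $a_{\run}$, the sum telescopes to at most $2\vwt[\regpar][\play][\nRuns+1]$. For the prefactor $\vwt[\regpar][\play][\nRuns+1]$, the bound $\sup_{\run}\vwpdual[\norm{\vwt[\gvec]}]\le\gbound$ combined with $\vwt[\gvec][\play][0]=0$ gives $\vwt[\increment]\le 4\gbound^2$ for $\run\ge 2$ and $\vwt[\increment][\play][1]\le\gbound^2$, whence $\vwt[\regpar][\play][\nRuns+1]\le\sqrt{1+\gbound^2+4\gbound^2(\nRuns-1)}\le 1+\gbound+2\gbound\sqrt{\nRuns}$. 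Plugging back in yields $\vwt[\reg][\play][\nRuns](\vw[\arpoints]) = \bigoh(\gbound\sqrt{\nRuns}+\gbound)$, which in particular is $\bigoh(\gbound\sqrt{\nRuns}+\gbound^2)$.

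There is no real obstacle here: everything is a careful bookkeeping on top of the template inequality \eqref{eq:template-regret} provided by \cref{lem:template-descent}. The only delicate point is making sure the Young's inequality constant is small enough for the quadratic term to be absorbed by $\vwt[\regpar]\vw[\breg](\vwtupdate[\state],\vwtinter[\state])$; choosing the Young split as above exactly matches the strong convexity constant and makes the absorption clean.
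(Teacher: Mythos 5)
Your route is the same as the paper's (linearize via convexity, apply \eqref{eq:template-regret}, absorb the optimistic cross-term with Young's inequality and strong convexity, then control $\sum_{\run}\vwt[\increment]/\vwt[\regpar]$ by an adaptive step-size argument), but the last step has a genuine gap. The ``self-bounding lemma'' you invoke,
$\sum_{\run=1}^{\nRuns}a_{\run}\big/\sqrt{1+\sum_{\runalt<\run}a_{\runalt}}\le 2\sqrt{1+\sum_{\run=1}^{\nRuns}a_{\run}}$,
is false for arbitrary non-negative $a_{\run}$: already for $\nRuns=1$ and $a_{1}=M$ the left-hand side is $M$ while the right-hand side is $2\sqrt{1+M}$. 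The valid statement (the paper's \cref{lem:adaptive}) has the \emph{current} increment in the denominator, $\sum_{\run}a_{\run}/\sqrt{\sum_{\runalt\le\run}a_{\runalt}}\le2\sqrt{\sum_{\run}a_{\run}}$, whereas in your bound the denominator is $\vwt[\regpar]=\sqrt{1+\sum_{\runalt=1}^{\run-1}\vwt[\increment][\play][\runalt]}$, which lags by one increment. So the claim that the sum ``telescopes to at most $2\vwt[\regpar][\play][\nRuns+1]$'' does not follow, and neither does your final bound $\bigoh(\gbound\sqrt{\nRuns}+\gbound)$, which is in fact sharper than what the theorem asserts and is an artifact of the invalid lemma.

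The repair is exactly what produces the $\gbound^{2}$ term in the statement, and it is how the paper argues: split
\begin{equation*}
\frac{\vwt[\increment]}{\vwt[\regpar]}
=\frac{\vwt[\increment]}{\vwtupdate[\regpar]}
+\Bigl(\frac{1}{\vwt[\regpar]}-\frac{1}{\vwtupdate[\regpar]}\Bigr)\vwt[\increment],
\end{equation*}
bound the first sum by \cref{lem:adaptive} (the denominator now contains the current increment, giving $\bigoh(\vwt[\regpar][\play][\nRuns+1])=\bigoh(\gbound\sqrt{\nRuns})$), and bound the second using $\vwt[\increment]\le4\gbound^{2}$ so that it telescopes to $\bigoh(\gbound^{2})$ (since $\vwt[\regpar][\play][\start]=\sqrt{\vw[\tau]}$ is an absolute constant). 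Everything else in your write-up (the Young split matched to the strong-convexity constant, the bound $\vwt[\regpar][\play][\nRuns+1]\le1+\gbound+2\gbound\sqrt{\nRuns}$, and the use of boundedness of $\vw[\arpoints]$ to control $\vw[\armeasure]$) matches the paper's proof; with this one correction you recover precisely the stated $\bigoh(\gbound\sqrt{\nRuns}+\gbound^{2})$.
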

}


\begin{proof}
By Young's inequality and the strong convexity of $\vw[\hreg]$,
\begin{equation}
    \label{eq:product-young-individual}
    \begin{multlined}[b]
    \product{\vwt[\gvec]-\vwtlast[\gvec]}{\vwtinter[\state]-\vwtupdate[\state]}
    - \vwt[\regpar]\vw[\breg](\vwtupdate[\state],\vwtinter[\state])\\
    \le \frac{\vwpdual[\norm{\vwt[\gvec]-\vwtlast[\gvec]}^2]}{2\vwt[\regpar]}
    + \frac{\vwt[\regpar]}{2}\vwp[\norm{\vwtinter[\state]-\vwtupdate[\state]}^2]
    - \frac{\vwt[\regpar]}{2}\vwp[\norm{\vwtinter[\state]-\vwtupdate[\state]}^2]
    = \frac{\vwt[\increment]}{2\vwt[\regpar]}.
    \end{multlined}
\end{equation}
From \eqref{eq:template-regret} we then obtain
\begin{equation}
    \label{eq:individual-regret-inq}
    \begin{aligned}[b]
    \sum_{\run=1}^{\nRuns}\product{\vwt[\gvec]}{\vwtinter[\state]-\vw[\arpoint]}
    &\le
    \vwt[\regpar][\play][\nRuns+1]\vw[\armeasure](\vw[\arpoint])
    + \frac{1}{2}\sum_{\run=1}^{\nRuns} \frac{\vwt[\increment]}{\vwt[\regpar]}\\
    &=
    \vwt[\regpar][\play][\nRuns+1]\vw[\armeasure](\vw[\arpoint])
    + \frac{1}{2}\sum_{\run=1}^{\nRuns} \frac{\vwt[\increment]}{\vwtupdate[\regpar]}
    + \frac{1}{2}\sum_{\run=1}^{\nRuns}
    \left(\frac{1}{\vwt[\regpar]}-\frac{1}{\vwtupdate[\regpar]}\right)\vwt[\increment]\\
    &\le
    (1+\vw[\armeasure](\vw[\arpoint]))\sqrt{1+\sum_{\run=1}^{\nRuns}\vwt[\increment]}
    + 2\sum_{\run=1}^{\nRuns}
    \left(\frac{1}{\vwt[\regpar]}-\frac{1}{\vwtupdate[\regpar]}\right)\gbound^2\\
    &\le  (1+\vw[\armeasure](\vw[\arpoint]))\sqrt{1+4\gbound^2\nRuns} + 2\gbound^2
    \end{aligned}
\end{equation}
We have used \cref{lem:adaptive} in the second to last inequality.
We conclude by maximizing the above inequality over $\vw[\arpoint]\in\vw[\arpoints]$.
\end{proof}

\subsection{Constant bound on social regret}
\label{app:social-regret-proof}

{
\addtocounter{theorem}{-1}
\renewcommand{\thetheorem}{\ref{thm:social-regret-bounded}}
\begin{theorem}
Suppose that \cref{asm:convexity+smoothness} holds and all players $\play\in\players$ adopt an adaptive optimistic learning strategy.
Then, for every bounded comparator set $\arpoints\subseteq\points$, the players' social regret is bounded as $\vt[\reg][\nRuns](\arpoints)=\bigoh(1)$.
\end{theorem}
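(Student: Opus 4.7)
The plan is to combine the template regret bound \eqref{eq:template-regret-main} summed over all players with the Lipschitz continuity of $\jvecfield$ (\cref{asm:convexity+smoothness}) to yield a \emph{self-bounding} inequality: the positive terms will scale like $\sqrt{S}$ while the negative stability terms scale like $S$, where $S\defeq\sum_{\run=2}^{\nRuns}\norm{\vt[\jaction]-\vt[\jaction][\run-1]}^{2}$ is the total squared variation of joint play, measured in the joint norm \eqref{eq:joint-norm}. Since the map $S\mapsto c_{1}\sqrt{1+c_{2}S}-S/8$ is bounded above on $[0,+\infty)$, this will yield $\vt[\reg][\nRuns](\arpoints)=\bigoh(1)$ uniformly in $\nRuns$.

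Summing \eqref{eq:template-regret-main} over $\play\in\players$ gives
\[
\vt[\reg][\nRuns](\arpoints)
\le \sumplayer\vwt[\regpar][\play][\nRuns+1]\vw[\armeasure](\vw[\arpoint])
 + \sumplayer\sum_{\run=1}^{\nRuns}\frac{\vwpdual[\norm{\vwt[\gvec]-\vwtlast[\gvec]}^{2}]}{\vwt[\regpar]}
 - \sum_{\run=2}^{\nRuns}\sumplayer\frac{\vwtlast[\regpar]}{8}\vwp[\norm{\vwtinter[\state]-\vwtpast[\state]}^{2}].
\]
Letting $\vw[L]$ denote the Lipschitz modulus of $\vw[\vecfield]$ relative to the joint norm of \eqref{eq:joint-norm} and $\vwpdual[\norm{\cdot}]$, one has $\vwt[\increment]\le \vw[L]^{2}\norm{\vt[\jaction]-\vt[\jaction][\run-1]}^{2}$ for $\run\ge 2$, while the very definition of the joint norm gives $\sumplayer\vwp[\norm{\vwtinter[\state]-\vwtpast[\state]}^{2}]=\norm{\vt[\jaction]-\vt[\jaction][\run-1]}^{2}$. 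Combined with $\vwtlast[\regpar]\ge 1$ (true since $\vw[\tau]\ge 1$), the negative sum therefore dominates $\tfrac{1}{8}S$.

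For the middle (positive) term I would replicate the telescoping argument from the proof of \cref{thm:adversarial-regret} to establish
\[
\sum_{\run=1}^{\nRuns}\frac{\vwt[\increment]}{\vwt[\regpar]} \le 2\vwt[\regpar][\play][\nRuns+1] + C_{\play},
\]
with $C_{\play}$ depending only on the initial increment $\vwt[\increment][\play][1]$. Since $(\vwt[\regpar][\play][\nRuns+1])^{2}=1+\sum_{\run=1}^{\nRuns}\vwt[\increment]\le 1+\vwt[\increment][\play][1]+\vw[L]^{2}S$, both this term and $\sumplayer\vwt[\regpar][\play][\nRuns+1]\vw[\armeasure](\vw[\arpoint])$ are $\bigoh(\sqrt{1+\vw[L]^{2}S})$, with constants that depend only on $\nPlayers$, the per-player Lipschitz moduli, the initial gradients, and $\max_{\vw[\arpoint]\in\vw[\arpoints]}\vw[\armeasure](\vw[\arpoint])$. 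Collecting all of these into a single constant $K$ and setting $L^{2}=\sumplayer\vw[L]^{2}$, one obtains
\[
\vt[\reg][\nRuns](\arpoints)\le K\sqrt{1+L^{2}S}-\tfrac{S}{8}+K,
\]
and the right-hand side, maximised over $S\ge 0$, is an $\nRuns$-independent constant.

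The main technical hurdle I foresee is the positive-term estimate $\sum_{\run}\vwt[\increment]/\vwt[\regpar]\le 2\vwt[\regpar][\play][\nRuns+1]+C_{\play}$ without an a priori uniform gradient bound. In the proof of \cref{thm:adversarial-regret} the residual $\sum_{\run}(1/\vwt[\regpar]-1/\vwtupdate[\regpar])\vwt[\increment]$ is absorbed by $\gbound^{2}$; here, one must either absorb it back into the Lipschitz-controlled variation $\vw[L]^{2}S$, or exploit the self-referential identity $(\vwtupdate[\regpar])^{2}=(\vwt[\regpar])^{2}+\vwt[\increment]$ to close a telescoping chain that never invokes a global bound $\gbound$. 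Once this is in place, the self-bounding argument above turns the individual $\bigoh(\sqrt{\nRuns})$ bound into the claimed $\bigoh(1)$ aggregate guarantee.
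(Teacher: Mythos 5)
Your overall skeleton \textendash\ summing the template bound over players, converting the gradient increments into the squared path variation $S$ of the joint play via Lipschitz continuity, and closing with a ``negative\textendash leading\textendash coefficient quadratic in $\sqrt{S}$'' argument \textendash\ is the same self-bounding idea as in the paper (its inequality \eqref{eq:OptDA-regret-finite-1} is exactly that quadratic step). The genuine gap is the middle positive term, and it is precisely the hurdle you flag: the inequality $\sum_{\run=1}^{\nRuns}\vwt[\increment]/\vwt[\regpar]\le 2\vwt[\regpar][\play][\nRuns+1]+C_{\play}$ with $C_{\play}$ depending only on the initial increment is \emph{false}. The denominator lags the increment by one round, so a single large increment early on breaks it: with $\vwt[\increment][\play][1]=0$ and $\vwt[\increment][\play][2]=M$ the left-hand side is at least $M$ while $\vwt[\regpar][\play][3]=\sqrt{1+M}$, so no constant depending only on the first increment can work. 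In \cref{thm:adversarial-regret} the residual $\sum_{\run}(1/\vwt[\regpar]-1/\vwtupdate[\regpar])\vwt[\increment]$ is absorbed only because $\vwt[\increment]\le4\gbound^2$; here no uniform gradient bound is available (action sets may be unbounded, and in the game setting the increments are only controlled by $\vw[\lips]^2\norm{\inter[\jstate]-\past[\jstate]}^2$). Your fallback of pushing the residual into the Lipschitz-controlled variation does not close either: the crude bound $\vwt[\increment]/\vwt[\regpar]\le\vwt[\increment]\le\vw[\lips]^2\norm{\inter[\jstate]-\past[\jstate]}^2$ yields a positive term that is \emph{linear} in $S$ with coefficient of order $\sum_{\play}\vw[\lips]^2$, which need not be smaller than the $1/8$ supplied by the stability term, so $K\sqrt{1+L^2S}-S/8+cL^2S$ is not bounded above in general.

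What the paper does instead is split the negative stability term into two budgets of size $\vw[\deccst]\sum_{\run}\vwt[\increment]$ with $\vw[\deccst]=1/(16\nPlayers\vw[\lips]^2)$: one budget is played against $\vwt[\regpar][\play][\nRuns+1]\vw[\inibound]$ (your quadratic argument, \eqref{eq:OptDA-regret-finite-1}), and the other is played \emph{jointly} against the middle term, bounding $\sum_{\run=2}^{\nRuns}\bigl(\vwt[\increment]/\vwt[\regpar]-\vw[\deccst]\vwt[\increment]\bigr)$ by a $\nRuns$-independent constant through a dichotomy (\eqref{eq:OptDA-regret-finite-2}): either $\vwt[\regpar]$ converges, in which case $\sum_{\run}\vwt[\increment]<+\infty$ and the whole sum is finite; or $\vwt[\regpar]\to+\infty$, in which case $1/\vwt[\regpar]\le\vw[\deccst]$ for all $\run$ beyond some finite index $\alt{\run}$, and only the finitely many earlier rounds contribute positively \textendash\ a trajectory-dependent but $\nRuns$-independent constant. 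Note in passing that the resulting $\bigoh(1)$ bound is therefore \emph{not} of the uniform form your constant $K$ suggests (depending only on $\nPlayers$, the Lipschitz moduli, the initial gradients and $\max_{\vw[\arpoint]\in\vw[\arpoints]}\vw[\armeasure](\vw[\arpoint])$); trajectory dependence is allowed and is all the theorem claims. Replacing your telescoping lemma by this dichotomy (or an equivalent device) is the missing ingredient; the remainder of your argument is sound.
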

}

\begin{proof}
Let $\bb{\arpoint}=(\vw[\arpoint])_{\play\in\players}\in\arpoints$.
Since $\arpoints$ is bounded and $\vw[\armeasure]$ is continuous,
there exists $\vw[\inibound]>0$ such that it always holds $\vw[\armeasure](\vw[\arpoint])\le\vw[\inibound]$.
We start by rewriting the regret bound \eqref{eq:template-regret} as
\begin{equation}
\label{eq:OptDA-regret-rewrite}
\begin{aligned}[b]
    \sum_{\run=1}^{\nRuns}\product{\vwt[\gvec]}{\vwtinter[\state]-\vw[\arpoint]}
    &\le \vwt[\regpar][\play][\nRuns+1]\vw[\armeasure](\vw[\arpoint])
    - \vwt[\regpar][\play][\nRuns+1]\vwt[\estseq][\play][\nRuns+1](\vw[\arpoint])
    \\
    &~~- \vwt[\regpar][\play][\start]
    \vw[\breg](\vwt[\state][\play][\interstart],\vwt[\state][\play][\start])
    - \frac{\vwt[\regpar][\play][\nRuns]}{2}
    \vw[\breg](\vwtupdate[\state][\play][\nRuns],\vwtinter[\state][\play][\nRuns])
    \\
    &~~ - \sum_{\run=2}^{\nRuns} \left(\frac{\vwt[\regpar][\play][\run-1]}{2}\vw[\breg](\vwt[\state],\vwtpast[\state]) + \vwt[\regpar]\vw[\breg](\vwtinter[\state],\vwt[\state])\right)\\
    &~~ + \sum_{\run=1}^{\nRuns}
    \left(\product{\vwt[\gvec]-\vwtlast[\gvec]}{\vwtinter[\state]-\vwtupdate[\state]}
    - \frac{\vwt[\regpar]}{2}\vw[\breg](\vwtupdate[\state],\vwtinter[\state]) \right)
\end{aligned}
\end{equation}
%
On one hand, the strong convexity of $\vw[\hreg]$ implies
\begin{equation}
\label{eq:2breg-combine}
\begin{aligned}[b]
\vwp[\norm{\vwtinter-\vwtpast}^2]
&\le
2\vwp[\norm{\vwtinter-\vwt}^2]+2\vwp[\norm{\vwt-\vwtpast}^2]\\
&\le
4\vw[\breg](\vwtinter,\vwt)+4\vw[\breg](\vwt,\vwtpast).
\end{aligned}
\end{equation}
On the other hand, similar to \eqref{eq:product-young-individual},
\begin{equation}
    \label{eq:product-young}
    \product{\vwt[\gvec]-\vwtlast[\gvec]}{\vwtinter[\state]-\vwtupdate[\state]}
    - \frac{\vwt[\regpar]}{2}\vw[\breg](\vwtupdate[\state],\vwtinter[\state])\\
    \le \frac{\vwpdual[\norm{\vwt[\gvec]-\vwtlast[\gvec]}^2]}{\vwt[\regpar]}.
\end{equation}
%
%
Combining \eqref{eq:OptDA-regret-rewrite}, \eqref{eq:2breg-combine}, \eqref{eq:product-young}, we obtain
\begin{equation}
    \label{eq:OptDA-regret-key}
    \begin{aligned}[b]
    \sum_{\run=1}^{\nRuns}\product{\vwt[\gvec]}{\vwtinter[\state]-\vw[\arpoint]}
    &\le \vwt[\regpar][\play][\nRuns+1]\vw[\armeasure](\vw[\arpoint])
    - \vwt[\regpar][\play][\nRuns+1]\vwt[\estseq][\play][\nRuns+1](\vw[\arpoint])\\
    &~~ + \sum_{\run=1}^{\nRuns}\frac{\vwpdual[\norm{\vwt[\gvec]-\vwtlast[\gvec]}^2]}{\vwt[\regpar]}
    - \frac{1}{8}\sum_{\run=2}^{\nRuns} \vwtlast[\regpar]\vwp[\norm{\vwtinter[\state]-\vwtpast[\state]}^2][\play]\\
    &\le \vwt[\regpar][\play][\nRuns+1]\vw[\inibound]
    + \vwpdual[\norm{\vw[\vecfield](\vt[\jaction][\start])}^2]\\
    &~~ + \sum_{\run=2}^{\nRuns}
    \left(\frac{\vwpdual[\norm{\vw[\vecfield](\current[\jaction])-\vw[\vecfield](\last[\jaction])}^2]}{\vwt[\regpar]}
    - \frac{\vwtlast[\regpar]}{8} \vwp[\norm{\vwtinter[\state]-\vwtpast[\state]}^2]\right).
    \end{aligned}
\end{equation}
In the current setting, the realized joint action is $\current[\jaction]=\inter[\jstate]$. 
With the norm on $\points$ defined in \eqref{eq:joint-norm}, we have $\sumplayer\norm{\vwtinter[\state]-\vwtpast[\state]}^2 = \norm{\inter[\jstate]-\past[\jstate]}^2$.
Note that $\vwt[\regpar]\ge1$ for all $\run$ and $\play$ by definition. Summing \eqref{eq:OptDA-regret-key} from $\play=1$ to $\nPlayers$ and maximizing over $\arpoint\in\arpoints$ then gives
\begin{equation}
    \label{eq:OptDA-regret-refined}
    \begin{aligned}[b]
    \vt[\reg][\nRuns](\arpoints)
    &\le \sumplayer
    \left(\vwt[\regpar][\play][\nRuns+1]\vw[\inibound]
    +\vwpdual[\norm{\vw[\vecfield](\vt[\jaction][\start])}^2]\right)\\
    &~~+ \sum_{\run=2}^{\nRuns}
    \left(\sumplayer
    \frac{\vwpdual[\norm{\vw[\vecfield](\inter[\jstate])-\vw[\vecfield](\past[\jstate])}^2]}{\vwt[\regpar]}
    - \frac{1}{8} \norm{\inter[\jstate]-\past[\jstate]}^2\right).
    \end{aligned}
\end{equation}

In the remainder of the proof, we show that the \acl{RHS} of \eqref{eq:OptDA-regret-refined} is bounded from above by some constant.
Since all the norms are equivalent in a finite dimensional space, from \cref{asm:convexity+smoothness} we know that for every $\play\in\players$, there exists $\vw[\lips]>0$ such that for all $\jaction, \alt{\jaction}\in\points$,
\begin{equation}
    \label{eq:lips}
    \vwpdual[\norm{\vw[\vecfield](\jaction)-\vw[\vecfield](\alt{\jaction})}]
    \le \vw[\lips]\norm{\jaction-\alt{\jaction}}.
\end{equation}
Subsequently, 
\begin{equation}
    \label{eq:lips-sum}
    \norm{\inter[\jstate]-\past[\jstate]}^2
    \ge \sumplayer
    \frac{1}{\nPlayers\vw[\lips]^2}\vwpdual[\norm{\vw[\vecfield](\inter[\jstate])-\vw[\vecfield](\past[\jstate])}^2].
\end{equation}
It is thus sufficient to show that for each $\play\in\players$, there exists $\vw[\Cst]\in\R_+$ such that for all $\nRuns\in\N$,
\begin{gather}
    \label{eq:OptDA-regret-finite-1}
    \vwt[\regpar][\play][\nRuns+1]\vw[\inibound] - \frac{1}{16\nPlayers\vw[\lips]^2}\sum_{\run=2}^{\nRuns}
    \vwpdual[\norm{\vw[\vecfield](\inter[\jstate])-\vw[\vecfield](\past[\jstate])}^2] \le \vw[\Cst],\\
    \label{eq:OptDA-regret-finite-2}
    \sum_{\run=2}^{\nRuns}\left(
    \frac{\vwpdual[\norm{\vw[\vecfield](\inter[\jstate])-\vw[\vecfield](\past[\jstate])}^2]}{\vwt[\regpar]}
    - \frac{1}{16\nPlayers\vw[\lips]^2}
    \vwpdual[\norm{\vw[\vecfield](\inter[\jstate])-\vw[\vecfield](\past[\jstate])}^2]
    \right) \le \vw[\Cst].
\end{gather}
To simplify the notation, we will write $\vw[\deccst]=1/(16\nPlayers\vw[\lips]^2)$. 
We recall that $\vwt[\regpar]=\sqrt{1+\sum_{\runalt=1}^{\run-1}\vwt[\increment]}$ where $\vwt[\increment]=\vwpdual[\norm{\vwt[\gvec]-\vwtlast[\gvec]}^2]$.
Using the inequality $\sqrt{a+b}\le\sqrt{a}+\sqrt{b}$, we can bound the \acl{LHS} of \eqref{eq:OptDA-regret-finite-1} as following
\begin{equation}
    \label{eq:OptDA-finite1-quadratic}
    \vw[\inibound]\sqrt{1+\sum_{\runalt=1}^{\nRuns}\vwt[\increment]}
    - \vw[\deccst]\sum_{\run=2}^{\nRuns}\vwt[\increment]
    \le \vw[\inibound]\sqrt{1+\vwt[\increment][\play][\start]}
    + \vw[\inibound]\sqrt{\sum_{\runalt=2}^{\nRuns}\vwt[\increment]}
    - \vw[\deccst]\sum_{\run=2}^{\nRuns}\vwt[\increment]
    = \vw[\func]\left(\sqrt{\sum_{\run=2}^{\nRuns}\vwt[\increment]}\right).
\end{equation}
where $\vw[\func]\from\scalar\in\R\mapsto -\vw[\deccst]\scalar^2 + \vw[\inibound]\scalar + \vw[\inibound]\sqrt{1+\vwt[\increment][\play][\start]}$ is a quadratic function with negative leading coefficient
and is hence bounded from above. This proves \eqref{eq:OptDA-regret-finite-1} by setting $\vw[\Cst]\ge\max_{\scalar\in\R_+}\vw[\func](\scalar)$.

Note that $(\vwt[\regpar])_{\run\in\N}$ is non-decreasing. Therefore, it either converges to some finite limit or tends to plus infinity.
We can thus write $\lim_{\run\to+\infty}\vwt[\regpar]=\vw[\regpar]\in\R_+\union\{+\infty\}$.
To prove \eqref{eq:OptDA-regret-finite-2}, we tackle the two cases separately:

\textbf{Case 1, $\vw[\regpar]\in\R_+$:} In other words, $\sum_{\run=2}^{+\infty}\vwt[\increment]$ is finite. Since $\vwt[\regpar]\ge1$, by taking $\vw[\Cst]\ge\sum_{\run=2}^{+\infty}\vwt[\increment]$ inequality \eqref{eq:OptDA-regret-finite-2} is verified.

\textbf{Case 2, $\vw[\regpar]=+\infty$:}
Then $\lim_{\run\to+\infty}1/\vwt[\regpar]=0$.
The quantity $\alt{\run} = \min_{\run}\setdef{\run}{1/\vwt[\regpar]\le\vw[\deccst]}$ is well-defined and the inequality \eqref{eq:OptDA-regret-finite-2} is satisfied as long as $\vw[\Cst]\ge\sum_{\run=2}^{\alt{\run}-1}(1/\vwt[\regpar]-\vw[\deccst])\vwt[\increment]$.

To summarize, we have proved that \eqref{eq:OptDA-regret-finite-1} and \eqref{eq:OptDA-regret-finite-2} must hold for some $\vw[\Cst]\in\R_+$. Therefore, invoking $\eqref{eq:OptDA-regret-refined}$ and $\eqref{eq:lips-sum}$ we have effectively proved $\vt[\reg][\nRuns](\arpoints)=\bigoh(1)$.
\end{proof}

\subsection{Individual regret bound in variationally stable games}

\begin{lemma}
\label{lem:regpar-bounded}
Let \cref{asm:convexity+smoothness} holds and that all players $\play\in\players$ adopt an adaptive optimistic learning strategy.
Assume additionally that the game is variationally stable.
Then, for every $\play\in\players$, the sequence $(\vwt[\regpar])_{\run\in\N}$ converges to a finite constant $\vw[\regpar]\in\R_+$ (equivalently, $\sum_{\run=1}^{+\infty}\vwt[\increment]<+\infty$).
\end{lemma}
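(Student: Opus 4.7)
The plan is to turn the \ac{RVU}-type bound \eqref{eq:template-regret-main} into a self-bounding inequality for the cumulative gradient increments by testing it at a Nash equilibrium and leveraging variational stability to kill the otherwise unbounded first-order term. Concretely, I would fix some $\bb{\sol}\in\sols$ (nonempty by variational stability), apply \eqref{eq:template-regret-main} with $\vw[\arpoint]=\vw[\sol]$ to every player $\play\in\players$, and sum the resulting inequalities over $\play$. The aggregated payoff term telescopes to $\sum_{\run=1}^{\nRuns}\product{\jvecfield(\inter[\jstate])}{\inter[\jstate]-\bb{\sol}}$, which is non-negative by \eqref{eq:VS}, so rearrangement yields
\[
\sum_{\run=2}^{\nRuns}\sumplayer\frac{\vwtlast[\regpar]}{8}\vwp[\norm{\vwtinter[\state]-\vwtpast[\state]}^2]\le\sumplayer\vwt[\regpar][\play][\nRuns+1]\vw[\armeasure](\vw[\sol])+\sumplayer\sum_{\run=1}^{\nRuns}\frac{\vwt[\increment]}{\vwt[\regpar]}.
\]

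Next, I would invoke the Lipschitz continuity of $\vw[\vecfield]$ guaranteed by \cref{asm:convexity+smoothness} to bound each increment as $\vwt[\increment]\le\vw[\lips]^2\norm{\inter[\jstate]-\past[\jstate]}^2$; together with the joint-norm decomposition \eqref{eq:joint-norm} and $\vwtlast[\regpar]\ge 1$, the previous display becomes
\[
\sumplayer\sum_{\run=2}^{\nRuns}\vwt[\increment]\le C_1\sumplayer\vwt[\regpar][\play][\nRuns+1]+C_2\sumplayer\sum_{\run=1}^{\nRuns}\frac{\vwt[\increment]}{\vwt[\regpar]},
\]
for constants $C_1,C_2$ depending only on $\nPlayers$, the individual Lipschitz moduli $(\vw[\lips])_{\play}$, and $\vw[\armeasure](\vw[\sol])$. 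A standard telescoping bound in the spirit of \cref{lem:adaptive}, namely $\sum_{\run=1}^{\nRuns}\vwt[\increment]/\vwt[\regpar]=O(\vwt[\regpar][\play][\nRuns+1])$, further reduces the above to $S^{(\nRuns)}\le A+B\sumplayer\vwt[\regpar][\play][\nRuns+1]$, where $S^{(\nRuns)}\defeq\sumplayer\sum_{\run=1}^{\nRuns}\vwt[\increment]$ and $A,B$ are independent of $\nRuns$.

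To close the loop, I would expand $\vwt[\regpar][\play][\nRuns+1]=\sqrt{1+\sum_{\runalt=1}^{\nRuns}\vwt[\increment]}\le 1+\sqrt{\sum_{\runalt=1}^{\nRuns}\vwt[\increment]}$ and apply Cauchy--Schwarz to obtain $\sumplayer\vwt[\regpar][\play][\nRuns+1]\le\nPlayers+\sqrt{\nPlayers\,S^{(\nRuns)}}$. Substituting back produces a self-bounding inequality of the form $S^{(\nRuns)}\le A'+B'\sqrt{S^{(\nRuns)}}$ with constants still independent of $\nRuns$, which forces $S^{(\nRuns)}$ to stay uniformly bounded as $\nRuns\to\infty$. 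Since each non-negative summand $\sum_{\run=1}^{\nRuns}\vwt[\increment]$ of $S^{(\nRuns)}$ must therefore remain bounded, $\vwt[\regpar]$ converges to a finite limit for every $\play$, as claimed.

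The main obstacle is the coupling across players: variational stability only supplies a \emph{joint} inequality on the aggregate inner product, so none of the per-player quantities on the left-hand side of \eqref{eq:template-regret-main} can be handled in isolation. The key trick is to route this coupling through the joint norm \eqref{eq:joint-norm}, which packages the per-player Lipschitz moduli into a single aggregated constant; equally delicate is ensuring that the resulting self-bounding inequality is genuinely sublinear in $S^{(\nRuns)}$, which depends critically on the telescoping control of $\sum_{\run}\vwt[\increment]/\vwt[\regpar]$ being of order $\sqrt{S^{(\nRuns)}}$ rather than $S^{(\nRuns)}$ itself.
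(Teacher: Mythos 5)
Your first two steps follow the paper's own route (test the template bound at a Nash equilibrium, sum over players, use \eqref{eq:VS} to drop the aggregated inner product, then pass to gradient increments via Lipschitz continuity and the joint norm), and your final "downward parabola" self-bounding step is likewise the paper's \eqref{eq:OptDA-regret-finite-1}\textendash\eqref{eq:OptDA-finite1-quadratic}. The genuine gap is your step 3: the claim $\sum_{\run=1}^{\nRuns}\vwt[\increment]/\vwt[\regpar]=\bigoh(\vwt[\regpar][\play][\nRuns+1])$ is not justified and is false in general. \Cref{lem:adaptive} only yields $\sum_{\run=1}^{\nRuns}\vwt[\increment]/\vwtupdate[\regpar]\le2\,\vwt[\regpar][\play][\nRuns+1]$; under \eqref{eq:adaptive-reg} the learning rate at round $\run$ does \emph{not} include $\vwt[\increment]$, and the residual $\sum_{\run}\vwt[\increment]\bigl(1/\vwt[\regpar]-1/\vwtupdate[\regpar]\bigr)$ can only be telescoped when the increments are uniformly bounded \textendash\ this is exactly the additive $\bigoh(\gbound^2)$ paid in the proof of \cref{thm:adversarial-regret}. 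In \cref{lem:regpar-bounded} no such bound is available: $\points$ need not be compact, and bounding the increments along the trajectory is essentially what the lemma is trying to establish. Concretely, if at a single round $\vwt[\increment]\gg\vwt[\regpar]^2$, that one term of your sum is already of order $\vwt[\increment]\approx(\vwtupdate[\regpar])^2$, i.e.\ of order $S^{(\nRuns)}$ rather than $\sqrt{S^{(\nRuns)}}$, and your inequality degenerates to $S^{(\nRuns)}\le A+B\sqrt{S^{(\nRuns)}}+C\,S^{(\nRuns)}$ with $C\ge1$ (the trivial bound $1/\vwt[\regpar]\le1$), which is vacuous. As you yourself flag, the whole argument hinges on this control, so the proof as written does not close.

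The paper closes this hole differently: after the Lipschitz/joint-norm step it splits the negative quadratic term into two halves with per-player coefficient $\vw[\deccst]=1/(16\nPlayers\vw[\lips]^2)$, and handles $\sum_{\run}\vwt[\increment]/\vwt[\regpar]$ by the dichotomy of \eqref{eq:OptDA-regret-finite-2}: either $\vw[\regpar]=\lim_{\run}\vwt[\regpar]<+\infty$, in which case $\sum_{\run}\vwt[\increment]<+\infty$ and this sum is bounded by a constant outright, or $\vw[\regpar]=+\infty$, in which case $1/\vwt[\regpar]\le\vw[\deccst]$ for all large $\run$ and the tail of the sum is absorbed by $\vw[\deccst]\sum_{\run}\vwt[\increment]$, leaving only a finite prefix as a constant (trajectory-dependent but independent of $\nRuns$). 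Only the remaining half of the negative term is then pitted against $\vwt[\regpar][\play][\nRuns+1]\vw[\armeasure](\vw[\sol])$ in the parabola argument, which forces $\sum_{\run}\vwt[\increment]<+\infty$ for every player. If you replace your step 3 by this case analysis (or by any other argument bounding $\sum_{\run}\vwt[\increment]/\vwt[\regpar]$ by a constant plus a sufficiently small multiple of $\sum_{\run}\vwt[\increment]$), the rest of your proposal goes through and coincides with the paper's proof.
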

\begin{proof}
In this proof we borrow the notations from the proof of \cref{thm:social-regret-bounded}.
First, summing the \acl{LHS} of \eqref{eq:OptDA-regret-key} from $\play=1$ to $\nPlayers$ leads to $\sum_{\run=1}^{\nRuns}\product{\jvecfield(\inter[\jstate])}{\inter[\jstate]-\bb{\arpoint}}$.
Since the game is variationally stable, we may take $\bb{\arpoint}\subs\bb{\sol}\in\sols$ a Nash equilibrium of the game, which gurantees that $\product{\jvecfield(\jaction)}{\jaction-\bb{\sol}}\ge0$ for all $\jaction\in\points$. 
Summing \eqref{eq:OptDA-regret-key} from $\play=1$ to $\nPlayers$ and using the Lipschitz continuity of the functions, similar to \eqref{eq:OptDA-regret-refined}, we obtain
\begin{equation}
    \label{eq:minty-bound}
    \begin{aligned}[b]
    0
    &\le \sumplayer
    \left(\vwt[\regpar][\play][\nRuns+1]\vw[\armeasure](\vw[\sol])
    +\vwpdual[\norm{\vw[\vecfield](\vt[\jaction][\start])}^2]\right)
    \\
    &~~+ \sum_{\run=2}^{\nRuns}
    \left(\sumplayer
    \frac{\vwpdual[\norm{\vw[\vecfield](\inter[\jstate])-\vw[\vecfield](\past[\jstate])}^2]}{\vwt[\regpar]}
    - \frac{1}{8} \norm{\inter[\jstate]-\past[\jstate]}^2\right).
    \end{aligned}
\end{equation}
Combining \eqref{eq:OptDA-regret-finite-1} and \eqref{eq:OptDA-regret-finite-2} with the above inequality, we deduce that for any $\play$, there exists $\vw[\wilde{\Cst}]\in\R$ such that for all $\nRuns\in\N$, 
\begin{equation}
\notag
    \vw[\armeasure](\vw[\sol])\sqrt{1+\sum_{\runalt=1}^{\nRuns}\vwt[\increment]}
    - \vw[\deccst]\sum_{\run=2}^{\nRuns}\vwt[\increment]
    \ge \vw[\wilde{\Cst}].
\end{equation}
Invoking \eqref{eq:OptDA-finite1-quadratic} then gives $\vw[\func]\left(\sqrt{\sum_{\run=2}^{\nRuns}\vwt[\increment]}\right)\ge \vw[\wilde{\Cst}]$.
Since $\vw[\func]$ is a quadratic function with negative leading coefficient, $\lim_{\scalar\to+\infty}\vw[\func](\scalar)=-\infty$. Accordingly, $\sum_{\run=2}^{+\infty}\vwt[\increment]$ is finite, which in turn implies $\vw[\regpar]=\lim_{\run\to+\infty}\vwt[\regpar]<+\infty$.
\end{proof}

{
\addtocounter{theorem}{-1}
\renewcommand{\thetheorem}{\ref{thm:inidividual-regret-bound}}
\begin{theorem}
Suppose that \cref{asm:convexity+smoothness} holds and all players $\play\in\players$ adopt an adaptive optimistic learning strategy.
If the game is variationally stable, then, for every bounded comparator set $\vw[\arpoints]\subseteq\vw[\points]$, the individual regret of player $\play\in\players$ is bounded as $\vwt[\reg][\play][\nRuns](\vw[\arpoints])=\bigoh(1)$.
\end{theorem}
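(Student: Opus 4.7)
The plan is to combine \cref{lem:regpar-bounded} with the individual regret estimate that was already derived in the proof of \cref{thm:adversarial-regret}. More precisely, I would start from the inequality~\eqref{eq:individual-regret-inq},
\begin{equation*}
\sum_{\run=1}^{\nRuns}\product{\vwt[\gvec]}{\vwtinter[\state]-\vw[\arpoint]}
\le \vwt[\regpar][\play][\nRuns+1]\vw[\armeasure](\vw[\arpoint])
+ \frac{1}{2}\sum_{\run=1}^{\nRuns} \frac{\vwt[\increment]}{\vwt[\regpar]},
\end{equation*}
which was obtained from the template descent inequality without any game-theoretic hypothesis on how the opponents behave. By the individual convexity part of \cref{asm:convexity+smoothness}, the left-hand side already upper bounds $\sum_{\run=1}^{\nRuns}\left(\vw[\loss](\vwt[\action],\vwt[\jaction][\playexcept])-\vw[\loss](\vw[\arpoint],\vwt[\jaction][\playexcept])\right)$, so after taking the supremum over $\vw[\arpoint]\in\vw[\arpoints]$ it suffices to show that the right-hand side stays bounded in $\nRuns$, uniformly in $\vw[\arpoint]$ over the bounded comparator set.

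For the first term, continuity of $\vw[\armeasure]$ together with the boundedness of $\vw[\arpoints]$ yields $\sup_{\vw[\arpoint]\in\vw[\arpoints]}\vw[\armeasure](\vw[\arpoint])<+\infty$. Under the variational stability hypothesis, \cref{lem:regpar-bounded} provides that $\vwt[\regpar]$ converges to a finite constant $\vw[\regpar]$, so $\sup_{\nRuns}\vwt[\regpar][\play][\nRuns+1]<+\infty$, and the first term is bounded uniformly over $\vw[\arpoint]\in\vw[\arpoints]$. For the second term, since $\vwt[\regpar]\ge 1$ for all $\run$, I would simply bound $\sum_{\run=1}^{\nRuns}\vwt[\increment]/\vwt[\regpar]\le\sum_{\run=1}^{\nRuns}\vwt[\increment]$; by the definition of the adaptive learning rate \eqref{eq:adaptive-reg}, the convergence of $\vwt[\regpar]$ to a finite limit is equivalent to $\sum_{\run=1}^{+\infty}\vwt[\increment]<+\infty$, which \cref{lem:regpar-bounded} already guarantees. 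Combining the two bounds gives $\vwt[\reg][\play][\nRuns](\vw[\arpoints])=\bigoh(1)$.

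The genuinely difficult step has already been carried out in \cref{lem:regpar-bounded}: establishing that each player's adaptive learning rate remains bounded in spite of the fact that every player adapts their step-size independently and without knowledge of the game's global primitives. That lemma combines the template descent inequality summed over all players (so that the Minty-type inequality $\product{\jvecfield(\jaction)}{\jaction-\bb{\sol}}\ge 0$ available at any Nash equilibrium of a variationally stable game can be leveraged) with the quadratic-growth argument behind \eqref{eq:OptDA-finite1-quadratic} to force the aggregate of increments to be summable. Given that lemma, \cref{thm:inidividual-regret-bound} becomes a short corollary of the adversarial regret bound; no further reasoning about the opponents' dynamics is required.
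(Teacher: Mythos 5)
Your proposal is correct and follows essentially the same route as the paper: the paper's proof likewise starts from the first line of \eqref{eq:individual-regret-inq}, bounds $\vw[\armeasure]$ over the bounded comparator set, uses $\vwt[\regpar]\ge 1$ to drop the denominators, and invokes \cref{lem:regpar-bounded} for the finiteness of $\vw[\regpar]$ and of $\sum_{\run=1}^{+\infty}\vwt[\increment]$. Your attribution of the real work to \cref{lem:regpar-bounded} matches the paper's structure exactly.
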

}

\begin{proof}
From the first line of \eqref{eq:individual-regret-inq} we have
\begin{equation}
    \label{eq:individual-regret-inq'}
    \sum_{\run=1}^{\nRuns}\product{\vwt[\gvec]}{\vwtinter[\state]-\vw[\arpoint]}
    \le \vwt[\regpar][\play][\nRuns+1]\vw[\armeasure](\vw[\arpoint])
    + \frac{1}{2}\sum_{\run=1}^{\nRuns} \frac{\vwt[\increment]}{\vwt[\regpar]}.
\end{equation}
As $\vw[\armeasure]$ is continuous and $\vw[\arpoints]$ is bounded, $\vw[\inibound]=\max_{\vw[\arpoint]\in\vw[\arpoints]}\vw[\armeasure](\vw[\arpoint])$ is well-defined. 
Moreover, $1/\vwt[\regpar]\le1$ for all $\run$. Maximizing \eqref{eq:individual-regret-inq'} over $\vw[\arpoint]\in\vw[\arpoints]$ then gives
\begin{equation}
    \notag
    \vwt[\reg][\play][\nRuns](\vw[\arpoints])
    \le \vwt[\regpar][\play][\nRuns+1]\vw[\inibound]
    + \frac{1}{2}\sum_{\run=1}^{\nRuns} \vwt[\increment]
    \le \vw[\regpar]\vw[\inibound] + \frac{1}{2}\sum_{\run=1}^{+\infty} \vwt[\increment],
\end{equation}
where $\vw[\regpar]=\lim_{\run\to+\infty}\vwt[\regpar]$ and $\sum_{\run=1}^{+\infty} \vwt[\increment]$ are finite according to \cref{lem:regpar-bounded}. We have thus proved $\vwt[\reg][\play][\nRuns](\vw[\arpoints])=\bigoh(1)$.
\end{proof}

\section{Proofs for last-iterate convergence}
\label{app:convergence}

\subsection{Convergence to best response}

In this part, we focus on the learning of a single player when the realized actions of the other players converge asymptotically.
For ease of notation, the player index $\play$ will be dropped when there is no confusion.

\begin{lemma}
\label{lem:to-zero-BR}
Let player $\play$ adopt an adaptive (optimistic) learning strategy.
Then, if the sequence of received feedback is bounded, both the sequences
\begin{enumerate*}[\itshape a\upshape)]
\item $\seqinf[\vt[\regpar][\run+1]-\vt[\regpar]]$ and
\item $\seqinf[\vt[\increment]/\vt[\regpar]]$
\end{enumerate*}
tend to zero. 
\end{lemma}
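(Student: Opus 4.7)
The plan is to split into two exhaustive regimes based on the asymptotic behavior of the cumulative sum $S_\run \defeq \tau + \sum_{\runalt=1}^{\run-1}\vt[\increment][\runalt]$, so that, by the definition of the adaptive scheme \eqref{eq:adaptive-reg}, we have $\vt[\regpar] = \sqrt{S_\run}$. Since each increment $\vt[\increment]$ is nonnegative, the sequence $\seqinf[S_\run]$ is monotone nondecreasing and hence admits a limit in $[\tau,+\infty]$. This elementary dichotomy is meant to serve as the backbone of the argument, and it requires no ingredient beyond the definition of the learning-rate update.

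First, in the convergent regime where $S_\run \to S_\infty < +\infty$, I would argue that the series $\sum \vt[\increment]$ is summable, so its general term must vanish, while $\vt[\regpar]$ stays trapped between $\sqrt{\tau}>0$ and $\sqrt{S_\infty}$. Both conclusions then come for free: (b) $\vt[\increment]/\vt[\regpar]$ is a ratio with vanishing numerator and uniformly positive denominator, and (a) $\vt[\regpar][\run+1]-\vt[\regpar]$ is the telescoping increment of a convergent sequence. Notice that in this regime the boundedness hypothesis on the feedback is not used at all.

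In the divergent regime where $S_\run \to +\infty$, I would instead leverage the boundedness of the feedback $\seqinf[\vt[\gvec]]$ as the key input. Writing $\gbound$ for a uniform bound on $\norm{\vt[\gvec]}$, the triangle inequality applied to the definition of $\vt[\increment]$ gives $\vt[\increment] \le 4\gbound^2$, and the elementary identity $\sqrt{a+b}-\sqrt{a} = b/(\sqrt{a+b}+\sqrt{a})$ then yields
\begin{equation}
\notag
\vt[\regpar][\run+1] - \vt[\regpar]
= \frac{\vt[\increment]}{\sqrt{S_{\run+1}}+\sqrt{S_\run}}
\le \frac{2\gbound^2}{\sqrt{S_\run}},
\end{equation}
which tends to $0$ as $S_\run \to +\infty$. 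The same numerator control simultaneously gives $\vt[\increment]/\vt[\regpar] \le 4\gbound^2/\sqrt{S_\run} \to 0$, closing the argument.

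I do not foresee any genuine obstacle in this proof; the argument is entirely elementary and relies only on monotone-limit reasoning plus one telescoping identity. The only subtlety worth flagging is the complementary role played by the two hypotheses: in the convergent regime, the summability of $\vt[\increment]$ alone handles both claims (even without bounded feedback), whereas in the divergent regime it is the uniform bound on $\vt[\gvec]$ that controls the numerator against the diverging denominator $\sqrt{S_\run}$.
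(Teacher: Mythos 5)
Your proposal is correct and follows essentially the same route as the paper's proof: the same dichotomy on whether $\vt[\regpar]$ converges or diverges, the same bound $\vt[\increment]\le4\gbound^2$ from bounded feedback, and the same difference-of-square-roots identity $\vt[\regpar][\run+1]-\vt[\regpar]=\vt[\increment]/(\vt[\regpar][\run+1]+\vt[\regpar])$ in the divergent case. Your treatment of the convergent case just spells out what the paper dismisses as trivial (summability gives a vanishing numerator while the learning rate stays bounded below), so there is nothing to add.
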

\begin{proof}
This trivially holds if $\lim_{\run\to+\infty}\vt[\regpar]<+\infty$ (which is equivalent to $\sum_{\run=1}^{+\infty}\vt[\increment]<+\infty$).
Otherwise, we have $\vt[\regpar]\to+\infty$.
Let $\gbound$ be an upper bound on the received feedback.
Since $\vt[\increment]\le4\gbound^2$, we deduce the sequence \emph{b}) converges to $0$. For the sequence \emph{a}), we simply note that
\begin{equation}
    \notag
    \vt[\regpar][\run+1]-\vt[\regpar]
    = \frac{\vt[\regpar^2][\run+1]-\vt[\regpar^2]}{\vt[\regpar][\run+1]+\vt[\regpar]}
    = \frac{\vt[\increment]}{\vt[\regpar][\run+1]+\vt[\regpar]}\le\frac{2\gbound^2}{\vt[\regpar]}\xrightarrow{\vt[\regpar]\to+\infty}0.
\end{equation}
\end{proof}

{
\addtocounter{theorem}{-1}
\renewcommand{\thetheorem}{\ref{thm:cvg-best-response}}
\begin{theorem}
Suppose that \cref{asm:convexity+smoothness} holds, and a player $\play\in\players$ adopts an adaptive optimistic learning strategy that verifies \cref{asm:distance-funciton}.
Assume additionally that $\vw[\points]$ is compact.
Then, if all other players' actions converge to a point $\vw[\limp{\jaction}][\playexcept]\in\prod_{\playalt\neq\play}\vw[\points][\playalt]$, player $\play$'s realized actions converge to the best response to $\vw[\limp{\jaction}][\playexcept]$.
\end{theorem}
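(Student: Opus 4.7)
My plan is to implement the two-step trapping scheme sketched right after the statement. The starting point is the template descent inequality \eqref{eq:template-descent} applied at a reference point $\vw[\sol]\in\BR(\vw[\limp{\jaction}][\playexcept])$. Since $\vw[\sol]$ minimizes the convex function $\vw[\loss](\cdot,\vw[\limp{\jaction}][\playexcept])$ over $\vw[\points]$, convexity and first-order optimality combine to give $\product{\vw[\vecfield](\vwtinter,\vw[\limp{\jaction}][\playexcept])}{\vwtinter-\vw[\sol]}\ge\vw[\loss](\vwtinter,\vw[\limp{\jaction}][\playexcept])-\vw[\loss](\vw[\sol],\vw[\limp{\jaction}][\playexcept])\ge 0$. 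Decomposing $\product{\vwt[\gvec]}{\vwtinter-\vw[\sol]}=\product{\vw[\vecfield](\vwtinter,\vw[\limp{\jaction}][\playexcept])}{\vwtinter-\vw[\sol]}+\product{\vwt[\gvec]-\vw[\vecfield](\vwtinter,\vw[\limp{\jaction}][\playexcept])}{\vwtinter-\vw[\sol]}$, the first piece is non-negative and provides a descent-like contribution, while the second piece vanishes as $\run\to+\infty$ uniformly in $\vw[\sol]\in\BR(\vw[\limp{\jaction}][\playexcept])$ by the continuity of $\vw[\vecfield]$, the compactness of $\vw[\points]$, and the hypothesis $\vwt[\jaction][\playexcept]\to\vw[\limp{\jaction}][\playexcept]$.

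For the remaining terms in \eqref{eq:template-descent}, Young's inequality together with the strong convexity of $\vw[\hreg]$ gives $\product{\vwt[\gvec]-\vwtlast[\gvec]}{\vwtinter-\vwtupdate}-\vwt[\regpar]\vw[\breg](\vwtupdate,\vwtinter)\le\vwt[\increment]/(2\vwt[\regpar])$, and both this quantity and $(\vwtupdate[\regpar]-\vwt[\regpar])\vw[\armeasure](\vw[\sol])$ vanish by \cref{lem:to-zero-BR}. Assembling these bounds yields an approximate Fej\'er-type recursion of the form $\vwtupdate[\regpar]\vwtupdate[\estseq](\vw[\sol])\le\vwt[\regpar]\vwt[\estseq](\vw[\sol])-\kappa_\run(\vw[\sol])-\vwt[\regpar]\vw[\breg](\vwtinter,\vwt[\state])+\delta_\run$, where $\kappa_\run(\vw[\sol])\defeq\vw[\loss](\vwtinter,\vw[\limp{\jaction}][\playexcept])-\vw[\loss](\vw[\sol],\vw[\limp{\jaction}][\playexcept])\ge 0$ and $\delta_\run\to 0$ uniformly in $\vw[\sol]\in\BR(\vw[\limp{\jaction}][\playexcept])$. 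From here the argument runs in two steps. For \emph{recurrence}: if $\min_{\vw[\sol]\in\BR}\vwt[\estseq](\vw[\sol])$ were eventually bounded below by some $\sbradius>0$, \cref{asm:distance-funciton}\ref{asm:distance-funciton-b} would force $\vwt[\state]$ to stay a positive distance from $\BR$, so by continuity and compactness $\kappa_\run(\vw[\sol])\ge\kappa>0$ uniformly, and telescoping would drive the non-negative quantity $\vwtupdate[\regpar]\vwtupdate[\estseq](\vw[\sol])$ to $-\infty$, a contradiction. For \emph{trapping}: once $\min_{\vw[\sol]\in\BR}\vwt[\estseq](\vw[\sol])$ dips below a threshold $\sbradius$ at a time large enough that $\delta_\runalt$ is arbitrarily small for $\runalt\ge\run$, the non-positive contributions of the discarded terms keep $\min_{\vw[\sol]\in\BR}\vwtupdate[\estseq](\vw[\sol])$ below a slightly enlarged threshold indefinitely. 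Combining the two gives $\min_{\vw[\sol]\in\BR}\vwt[\estseq](\vw[\sol])\to 0$, and \cref{asm:distance-funciton}\ref{asm:distance-funciton-b} then produces $\dist(\vwt[\state],\BR)\to 0$. Finally, I would upgrade this to convergence of the played iterates $\vwtinter$ by extracting $\vwp[\norm{\vwtinter-\vwt[\state]}]\to 0$ from the summability of $\vwt[\regpar]\vw[\breg](\vwtinter,\vwt[\state])$ that falls out of the same telescoped inequality (using $\vwt[\regpar]\ge 1$ and the strong convexity of $\vw[\hreg]$).

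The main obstacle I anticipate is making the trapping step fully rigorous: the error $\delta_\run$ is only known to vanish and need not be summable, so a naive induction on the approximate Fej\'er recursion does not on its own rule out a slow drift out of a small neighborhood of $\BR$. I expect to handle this by augmenting $\vwt[\regpar]\vwt[\estseq](\vw[\sol])$ with an auxiliary ``correction'' term that absorbs the slow decay of $\delta_\run$, and by carefully exploiting $\vwtupdate[\regpar]\ge\vwt[\regpar]\ge 1$ in tandem with the non-positivity of $-\kappa_\run(\vw[\sol])$ and $-\vwt[\regpar]\vw[\breg](\vwtinter,\vwt[\state])$. A further subtlety is that the analysis must be uniform over $\vw[\sol]\in\BR$, since the best-response set is not necessarily a singleton; this will be handled by the compactness of $\BR$ (a closed subset of the compact set $\vw[\points]$) and the continuity of the map $\vw[\sol]\mapsto\vwt[\estseq](\vw[\sol])$.
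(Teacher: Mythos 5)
Your overall architecture is the same as the paper's \textendash\ the template inequality evaluated at $\vw[\sol]\in\vw[\sols]\defeq\BR(\vw[\limp{\jaction}][\playexcept])$, the split of the received gradient into the field at $(\vwtinter[\state],\vw[\limp{\jaction}][\playexcept])$ plus a perturbation that vanishes by compactness and continuity, Young's inequality for the optimistic term, a recurrence-plus-trapping scheme, and the final passage from $\vwt[\state]$ to the played iterate via $\vwp[\norm{\vwtinter[\state]-\vwt[\state]}]\to0$. The problem is that the step you flag as an ``anticipated obstacle'' is precisely the heart of the proof, and your proposed remedy is not a proof: since your $\delta_{\run}$ is only $\smalloh(1)$ and in general \emph{not} summable, the relation obtained by discarding the non-positive terms only gives $\update[\estseq](\vw[\sol])\le\vt[\estseq](\vw[\sol])+\delta_{\run}$, whose iteration allows an unbounded accumulated drift; ``below a slightly enlarged threshold indefinitely'' therefore does not follow, and the promised ``auxiliary correction term that absorbs the slow decay of $\delta_{\run}$'' is never exhibited (nor is it clear one exists in this form). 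There is also a secondary slip in your recurrence step: \cref{asm:distance-funciton}\ref{asm:distance-funciton-b} keeps $\vwt[\state]$ away from $\vw[\sols]$ when the energy stays above $\sbradius$, but your drift term $\kappa_{\run}$ is evaluated at $\vwtinter[\state]$, not at $\vwt[\state]$; a uniform lower bound on $\kappa_{\run}$ does not follow without additionally exploiting the $\vwp[\norm{\vwtinter[\state]-\vwt[\state]}]$ or $\vwp[\norm{\vwtupdate[\state]-\vwtinter[\state]}]$ terms.

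The paper closes exactly this gap with a per-step, three-case drift argument on the Lyapunov quantity $\vt[\vdist]=\vt[\regpar]\vt[\estseq](\vw[\sols])$, rather than by telescoping an approximate Fej\'er recursion. If $\dist(\vwtinter[\state],\vw[\sols])\ge\sradius/2$, the loss gap is bounded below by some $\cst>0$, and once the vanishing error $\vt[\tozero]$ drops below $\cst/2$ the energy strictly decreases by at least $\cst/2$. If instead $\vwtinter[\state]$ is within $\sradius/2$ of $\vw[\sols]$ but $\vwp[\norm{\vwtupdate[\state]-\vwtinter[\state]}]\ge\sradius/2$, the retained quadratic term $\frac{\vt[\regpar]}{4}\vwp[\norm{\vwtupdate[\state]-\vwtinter[\state]}^2]$ forces a decrease of at least $\sradius^2/32$. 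Otherwise $\dist(\vwtupdate[\state],\vw[\sols])\le\sradius$, and \cref{asm:distance-funciton}\ref{asm:distance-funciton-b} caps the energy directly, $\update[\estseq](\vw[\sols])\le\sbradius$, without using the recursion at all. Since the first two cases decrease the non-negative $\vt[\vdist]$ by a fixed amount $\min(\cst/2,\sradius^2/32)$ (so they cannot recur forever) and moreover give $\update[\estseq](\vw[\sols])\le\vt[\estseq](\vw[\sols])$, the third case must eventually occur and the bound $\sbradius$ is then never lost \textendash\ no summability of the errors is needed, only that they eventually fall below the fixed drift. The ingredients for this fix (the loss gap and the quadratic terms) are already present in your recursion; what is missing is using them pointwise in such a case analysis instead of summing them, and until that is done the trapping step \textendash\ and hence the proof \textendash\ has a genuine gap.
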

}

\begin{proof}
Let $\vw[\sol]\in\vw[\sols]\defeq\BR(\vw[\limp{\jaction}][\playexcept])$. From \eqref{eq:template-descent} we derive immediately that
\begin{equation}
    \label{eq:PMPDS-key-descent}
    \begin{aligned}[b]
    \update[\regpar]\update[\estseq](\vw[\sol])
    &\le \vt[\regpar]\vt[\estseq](\vw[\sol])
    + (\vt[\regpar][\run+1]-\vt[\regpar][\run])\inibound
    +\frac{\vt[\increment]}{\vt[\regpar]}\\
    &~~-\product{\vw[\vecfield](\inter[\jstate])}{\vwtinter[\state]-\vw[\sol]}
    -\frac{\vt[\regpar]}{4}\vwp[\norm{\vwtupdate[\state]-\vwtinter[\state]}^2],
    \end{aligned}
\end{equation}
where $\inibound=\max_{\vw[\sol]\in\vw[\sols]}\armeasure(\vw[\sol])$.
The scalar product term is not necessarily non-negative, but with $\inter[\jstatealt] = (\vwtinter[\state],\vw[\limp{\jaction}][\playexcept])$, $\bb{\sol}=(\oneandother[\sol][\limp{\jaction}])$, and $\radius$ the diameter of $\vw[\points]$, we can decompose
\begin{equation}
    \label{eq:product-lower-bound-decompose-br}
    \begin{aligned}
    \product{\vw[\vecfield](\inter[\jstate])}{\vwtinter[\state]-\vw[\sol]}
    &=
    \product{\vw[\vecfield](\inter[\jstate])-\vw[\vecfield](\inter[\jstatealt])}{\vwtinter[\state]-\vw[\sol]}
    + \product{\vw[\vecfield](\inter[\jstatealt])}{\vwtinter[\state]-\vw[\sol]}\\
    &\ge
    -\radius\vwpdual[\norm{\vw[\vecfield](\inter[\jstate])-\vw[\vecfield](\inter[\jstatealt])}]
    + \vw[\loss](\inter[\jstatealt]) - \vw[\loss](\bb{\sol}).
    \end{aligned}
\end{equation}
In the inequality we have used the convexity of $\vw[\loss](\cdot,\vw[\limp{\jaction}][\playexcept])$.
Since $\vw[\points]$ is compact and $\vw[\vecfield]$ is continuous, the function
\begin{equation}
    \notag
    \func:\vw[\jaction][\playexcept]\mapsto\max_{\vw[\arpoint]\in\vw[\points]}\vwpdual[\norm{\vw[\vecfield](\oneandother[\arpoint][\jaction])-\vw[\vecfield](\oneandother[\arpoint][\limp{\jaction}])}]
\end{equation}
is continuous by Berge's maximum theorem. Therefore $\func(\vwtinter[\jstate][\playexcept])$ converges to $0$ when $\run$ goes to infinity. Moreover, from \eqref{eq:product-lower-bound-decompose-br} we have
\begin{equation}
    \label{eq:product-lower-bound-decompose-br-2}
    \product{\vw[\vecfield](\inter[\jstate])}{\vwtinter[\state]-\vw[\sol]}
    \ge
    -\radius\func(\vwtinter[\jstate][\playexcept])
    + \vw[\loss](\inter[\jstatealt]) - \vw[\loss](\bb{\sol}).
\end{equation}
Let us write $\vw[\sol[\loss]] = \min_{\vw[\action]\in\vw[\points]}\vw[\loss](\vw[\action],\vw[\limp{\jaction}][\playexcept])$.
Combining \eqref{eq:PMPDS-key-descent}, \eqref{eq:product-lower-bound-decompose-br-2} and minimizing with respect to $\vw[\sol]\in\vw[\sols]$ leads to 
\begin{equation}
    \label{eq:PMPDS-key-descent-set}
    \begin{aligned}[b]
    \vt[\regpar][\run+1]\update[\estseq](\vw[\sols])
    &\le \vt[\regpar]\vt[\estseq](\vw[\sols])
    + (\vt[\regpar][\run+1]-\vt[\regpar][\run])\inibound
    +\frac{\vt[\increment]}{\vt[\regpar]}
    +\radius\func(\vwtinter[\jstate][\playexcept])\\
    &~~ - (\vw[\loss](\inter[\jstatealt]) - \vw[\sol[\loss]])
    -\frac{\vt[\regpar]}{4}\vwp[\norm{\vwtupdate[\state]-\vwtinter[\state]}^2].
    \end{aligned}
\end{equation}
We define 
$\vt[\tozero]=(\vt[\regpar][\run+1]-\vt[\regpar][\run])\inibound
+\vt[\increment]/\vt[\regpar]
+\radius\func(\vwtinter[\jstate][\playexcept])$.
As $\vw[\vecfield]$ is continuous, $\vw[\points]$ is compact, and the iterates $\seqinf[\vwt[\jaction][\playexcept]]$ converges and is hence bounded, the sequence of feedback received by player $\play$ is also bounded.
Applying \cref{lem:to-zero-BR} then gives $\lim_{\run\to+\infty}\vt[\tozero]=0$.

Let us next prove that for any $\sbradius>0$, we have $\vt[\estseq](\vw[\sols])\le\sbradius$ for all $\run$ large enough.
Since $\vw[\sols]\subset\vw[\points]$ is a compact set, \cref{asm:distance-funciton}\ref{asm:distance-funciton-b} ensures the existence of $\sradius>0$ such that if $\dist(\vwt[\state],\vw[\sols])\le\sradius$ then $\vt[\estseq](\vw[\sols])\le\sbradius$.
We distinguish between three different situations:

\smallskip
\textbf{Case 1, $\dist(\vwtinter[\state],\vw[\sols])\ge\sradius/2$:}
By convexity of $\vw[\loss](\cdot,\vw[\limp{\jaction}][\playexcept])$ this clearly implies the existence $\cst>0$ such that $\vw[\loss](\inter[\jstatealt]) - \vw[\sol[\loss]]\ge\cst$ whenever we are in this situation.
As $\lim_{\run\to+\infty}\vt[\tozero]=0$, there exists $\run_1\in\N$ such that for all $\run\ge\run_1$, 
$\vt[\tozero]\le\cst/2$.
For any $\run\ge\run_1$, the inequality \eqref{eq:PMPDS-key-descent-set} then gives
\begin{equation}
    \notag
    \vt[\regpar][\run+1]\update[\estseq](\vw[\sols])
    \le \vt[\regpar]\vt[\estseq](\vw[\sols])
    +\vt[\tozero]
    -\cst
    -\frac{\vt[\regpar]}{4}\vwp[\norm{\vwtupdate[\state]-\vwtinter[\state]}^2]
    \le\vt[\regpar]\vt[\estseq](\vw[\sols]) - \frac{\cst}{2}. 
\end{equation}

\textbf{Case 2, $\dist(\vwtinter[\state],\vw[\sols])\le\sradius/2$ and $\vwp[\norm{\vwtupdate[\state]-\vwtinter[\state]}]\ge\sradius/2$:}
We define $\run_2\in\N$ such that for all $\run\ge\run_2$, 
$\vt[\tozero]\le\sradius^2/32$. Then for $\run\ge\run_2$,
\begin{equation}
    \notag
    \vt[\regpar][\run+1]\update[\estseq](\vw[\sols])
    \le \vt[\regpar]\vt[\estseq](\vw[\sols])
    +\vt[\tozero]
    - (\vw[\loss](\inter[\jstatealt]) - \vw[\sol[\loss]])
    -\frac{\sradius^2}{16}
    \le\vt[\regpar]\vt[\estseq](\vw[\sols]) - \frac{\sradius^2}{32}. 
\end{equation}

\textbf{Case 3, $\dist(\vwtinter[\state],\vw[\sols])\le\sradius/2$ and $\vwp[\norm{\vwtupdate[\state]-\vwtinter[\state]}]\le\sradius/2$:}
By the triangular inequality this implies $\dist(\vwtupdate[\state],\vw[\sols])\le\sradius$ and thus $\update[\estseq](\vw[\sols])\le\sbradius$ by the choice of $\sradius$.

\smallskip
\textbf{Conclude.}
Let us consider the sequence $\seqinf[\vt[\vdist]]\in(\R_+)^{\N}$ defined by $\vt[\vdist]=\vt[\regpar]\vt[\estseq](\vw[\sols])$.
For $\run\ge\max(\run_1,\run_2)$, whenever we are in Case 1 or 2, we have $\update[\vdist]\le\vt[\vdist]-\min(\cst/2,\sradius^2/32)$.
Since $\seqinf[\vt[\vdist]]$ is non-negative, this can not happen for all $\run\ge\max(\run_1,\run_2)$; this means Case 3 must happen for some $\alt{\run}\ge\max(\run_1,\run_2)$.
Note that for both Case 1 and 2 we get $\update[\estseq](\vw[\sols])\le\vt[\estseq](\vw[\sols])$.
Therefore, with the three cases presented above we see that for all $\run\ge\alt{\run}+1$ we have $\vt[\estseq](\vw[\sols])\le\sbradius$.
We have proved that for any $\sbradius>0$, the distance measure $\vt[\estseq](\vw[\sols])$ becomes eventually smaller than $\sbradius$.
This means $\lim_{\run\to+\infty}\vt[\estseq](\vw[\sols])=0$ and accordingly $\lim_{\run\to+\infty}\dist(\vwt[\state],\vw[\sols])=0$ thanks to \cref{asm:distance-funciton}\ref{asm:distance-funciton-a}.

We next prove $\vwp[\norm{\vwtinter[\state]-\vwt[\state]}]\to0$.
In \eqref{eq:template-descent} we may keep the $\breg(\vwtinter[\state],\vwt[\state])$ term, then similar to how \eqref{eq:PMPDS-key-descent-set} is derived, we get
\begin{equation}
    \notag
    \vt[\regpar]\breg(\vwtinter[\state],\vwt[\state])
    \le \vt[\regpar]\vt[\estseq](\vw[\sols])
    - \vt[\regpar][\run+1]\update[\estseq](\vw[\sols])
    + \vt[\tozero].
\end{equation}
This implies 
\begin{equation}
    \notag
    \vwp[\norm{\vwtinter[\state]-\vwt[\state]}^2]
    \le
    2\left(
    \vt[\estseq](\vw[\sols])-\update[\estseq](\vw[\sols])
    +\frac{\vt[\tozero]}{\vt[\regpar]}\right).
\end{equation}
As the \acl{RHS} of the above inequality tends to zero when $\run$ goes to infinity, we conclude that  $\vwp[\norm{\vwtinter[\state]-\vwt[\state]}]\to0$.
As a consequence, $\lim_{\run\to+\infty}\dist(\vwtinter[\state],\vw[\sols])=0$.
\end{proof}

\subsection{Convergence to Nash equilibrium}

In this part, we show the convergence of the realized actions to a Nash equilibrium when all the players adopt an adaptive optimistic learning strategy in a variationally stable game.
According to \cref{lem:regpar-bounded}, the limit $\vw[\regpar]=\lim_{\toinf}\vwt[\regpar]$ is finite in this case.

\begin{lemma}
\label{lem:diff-to-zero}
Let \cref{asm:convexity+smoothness} holds and that all players $\play\in\players$ adopt an adaptive optimistic learning strategy in a variationally stable game. Then, $\norm{\inter[\jstate]-\current[\jstate]}\to0$ and $\norm{\current[\jstate]-\past[\jstate]}\to0$ as $\toinf$.
\end{lemma}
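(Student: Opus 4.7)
Both convergences follow from establishing square-summability of two Bregman penalty terms that appear naturally in the template descent inequality~\eqref{eq:template-descent-main}. I will evaluate~\eqref{eq:template-descent-main} at $\vw[\arpoint]=\vw[\sol]$, where $\bb{\sol}\in\sols$ is a Nash equilibrium (nonempty under variational stability), telescope from $\run=1$ to $\nRuns$, and sum over players. By variational stability~\eqref{eq:VS}, the aggregated inner-product term $\sum_\run\product{\jvecfield(\inter[\jstate])}{\inter[\jstate]-\bb{\sol}}$ is non-negative, so it contributes a non-positive quantity on the right-hand side and may be discarded.

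To retain a positive fraction of the Bregman penalties $\vwt[\regpar]\vw[\breg](\vwtupdate,\vwtinter)$ and $\vwt[\regpar]\vw[\breg](\vwtinter,\vwt[\state])$ after absorbing the cross term, I will apply Young's inequality in the mildly suboptimal form
\begin{equation}\notag
\product{\vwt[\gvec]-\vwtlast[\gvec]}{\vwtinter-\vwtupdate[\state]} - \vwt[\regpar]\vw[\breg](\vwtupdate,\vwtinter) \le \frac{\vwt[\increment]}{\vwt[\regpar]} - \frac{\vwt[\regpar]}{2}\vw[\breg](\vwtupdate,\vwtinter),
\end{equation}
so that half of $\vw[\breg](\vwtupdate,\vwtinter)$ survives together with all of $\vw[\breg](\vwtinter,\vwt[\state])$. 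By \cref{lem:regpar-bounded}, $\vwt[\regpar]$ is bounded with limit $\vw[\regpar]<+\infty$ and $\sum_\run\vwt[\increment]<+\infty$; hence the telescoping contribution $(\vwt[\regpar][\play][\nRuns+1]-\vwt[\regpar][\play][1])\vw[\armeasure](\vw[\sol])$ and the residual $\sum_\run\vwt[\increment]/\vwt[\regpar]$ are uniformly bounded in $\nRuns$. Dropping the non-negative term $\vwt[\regpar][\play][\nRuns+1]\vwt[\estseq][\play][\nRuns+1](\vw[\sol])$ on the left and using $\vwt[\regpar]\ge 1$, I conclude
\begin{equation}\notag
\sum_{\play\in\players}\sum_{\run=1}^{+\infty}\bigl(\vw[\breg](\vwtinter,\vwt[\state])+\vw[\breg](\vwtupdate,\vwtinter)\bigr)<+\infty.
\end{equation}

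By $1$-strong convexity of each $\vw[\hreg]$, this upgrades to $\sum_\run\vwp[\norm{\vwtinter-\vwt[\state]}^2]<+\infty$ and $\sum_\run\vwp[\norm{\vwtupdate-\vwtinter}^2]<+\infty$ for every $\play\in\players$, so both player-wise norms vanish along the sequence. Via the joint norm~\eqref{eq:joint-norm}, the first sum yields $\norm{\inter[\jstate]-\current[\jstate]}\to 0$. For the second claim, $\past[\jstate]$ corresponds to $\vwt[\interstate][\play][\run-1]$ aggregated across players, as used in~\eqref{eq:2breg-combine}; hence $\vwp[\norm{\vwt[\state]-\vwtpast[\state]}^2]$ is precisely the index-shifted version of $\vwp[\norm{\vwtupdate-\vwtinter}^2]$, and its square-summability implies $\norm{\current[\jstate]-\past[\jstate]}\to 0$.

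The main technical subtlety is the Young's-inequality calibration in the second step: one must absorb exactly half of $\vw[\breg](\vwtupdate,\vwtinter)$ into the cross term (not all of it) so that both Bregman penalties remain with an $\Omega(1)$ coefficient, while the leftover $\vwt[\increment]/\vwt[\regpar]$ inherits summability from \cref{lem:regpar-bounded}. Beyond that, the proof reduces to strong convexity of $\vw[\hreg]$, the telescoping of $\vwt[\regpar]\vwt[\estseq](\vw[\sol])$, and the variational stability pairing.
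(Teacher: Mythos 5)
Your proposal is correct and follows essentially the same route as the paper: apply the template inequality at a Nash equilibrium, sum over players, discard the aggregated pairing term via variational stability, absorb the cross term with Young's inequality while retaining a constant fraction of both Bregman penalties, invoke \cref{lem:regpar-bounded} to bound the remaining terms, and use strong convexity plus the joint norm to get square-summability, with the second claim following by the index shift of $\norm{\update[\jstate]-\inter[\jstate]}\to0$. No gaps.
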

\begin{proof}
Let $\bb{\sol}$ be a Nash equilibrium.
We apply the regret bound \eqref{eq:template-regret} to $\vw[\arpoint]\subs\vw[\sol]$, and sum these bounds for $\play=1$ to $\nPlayers$, with Young's inequality \eqref{eq:product-young}, we get
\begin{equation}
    \label{eq:breg-diff-bounded}
    \frac{1}{2}\sum_{\run=1}^{\nRuns}\sumplayer
    \vwt[\regpar]\left(\vw[\breg]
    (\vwt[\state][\play][\run+1],\vwtinter[\state])
    + \vw[\breg]
    (\vwtinter[\state],\vwt[\state])\right)
    \le \sumplayer
    \left(\vw[\regpar]\vw[\hreg](\vw[\sol])+\sum_{\run=1}^{+\infty}\frac{\vwt[\increment]}{\vwt[\regpar]}\right).
\end{equation}
The \acl{RHS} of \eqref{eq:breg-diff-bounded} is finite by \cref{lem:regpar-bounded}. 
With strong convexity of $\vw[\hreg]$, this implies
\begin{equation}
    \notag
    \sum_{\run=1}^{+\infty}\left(\norm{\update[\jstate]-\inter[\jstate]}^2
    +\norm{\inter[\jstate]-\current[\jstate]}^2\right) < +\infty.
\end{equation}
As a consequence, both $\norm{\inter[\jstate]-\current[\jstate]}$ and $\norm{\current[\jstate]-\past[\jstate]}$ converge to zero when $\toinf$.
\end{proof}

\begin{lemma}
\label{lem:bregman-cvg}
Let \cref{asm:convexity+smoothness} holds and that all players $\play\in\players$ adopt an adaptive optimistic learning strategy in a variationally stable game. Then, $\sumplayer\vw[\regpar]\vwt[\estseq](\vw[\sol])$ converges for all Nash equilibrium $\bb{\sol}\in\sols$.
\end{lemma}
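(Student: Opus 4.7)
The plan is to show that $\vt[\energy] \defeq \sumplayer \vwt[\regpar]\vwt[\estseq](\vw[\sol])$ satisfies a quasi-Fejér-type recursion with summable perturbation, so that it converges to a finite limit; the desired conclusion then follows because $\vwt[\regpar] \to \vw[\regpar]$ is finite by \cref{lem:regpar-bounded}.

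First, I would apply the template descent inequality \eqref{eq:template-descent-main} from \cref{lem:template-descent} with $\vw[\arpoint] \subs \vw[\sol]$, sum over $\play \in \players$, and exploit the Nash condition together with variational stability \eqref{eq:VS} to discard the linear term. Indeed,
\begin{equation*}
    \sumplayer \product{\vwt[\gvec]}{\vwtinter - \vw[\sol]}
    = \product{\jvecfield(\inter[\jstate])}{\inter[\jstate] - \bb{\sol}} \ge 0.
\end{equation*}
After this step, the leftover cross term $\product{\vwt[\gvec] - \vwtlast[\gvec]}{\vwtinter - \vwtupdate[\state]}$ can be absorbed into $\vwt[\regpar] \vw[\breg](\vwtupdate, \vwtinter)$ via Young's inequality (exactly as in \eqref{eq:product-young-individual}), yielding the bound
\begin{equation*}
    \vt[\energy][\run+1] \le \vt[\energy] + \sumplayer (\vwt[\regpar][\play][\run+1] - \vwt[\regpar])\vw[\armeasure](\vw[\sol]) + \sumplayer \frac{\vwt[\increment]}{2\vwt[\regpar]} - \sumplayer \vwt[\regpar]\vw[\breg](\vwtinter,\vwt[\state]).
\end{equation*}

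Dropping the last non-positive term, I obtain $\vt[\energy][\run+1] \le \vt[\energy] + \vt[\err]$ with $\vt[\err] \ge 0$. The key verification is summability of $\vt[\err]$: the first contribution telescopes since $\sum_\run (\vwt[\regpar][\play][\run+1] - \vwt[\regpar]) = \vw[\regpar] - \vwt[\regpar][\play][1]$ is finite thanks to \cref{lem:regpar-bounded}, while $\sum_\run \vwt[\increment]/\vwt[\regpar] \le \sum_\run \vwt[\increment] < +\infty$ because $\vwt[\regpar] \ge 1$ and again by \cref{lem:regpar-bounded}. Since $\vt[\energy] \ge 0$ and $\sum_\run \vt[\err] < +\infty$, the standard scalar quasi-Fejér argument (consider $\vt[\energy] + \sum_{\runalt \ge \run} \vt[\err][\runalt]$, which is non-increasing and non-negative) shows that $\vt[\energy]$ converges to some finite limit.

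Finally, I pass from $\vwt[\regpar]$ to the constant $\vw[\regpar]$: the difference
\begin{equation*}
    \sumplayer \vw[\regpar]\vwt[\estseq](\vw[\sol]) - \vt[\energy] = \sumplayer (\vw[\regpar] - \vwt[\regpar])\vwt[\estseq](\vw[\sol])
\end{equation*}
tends to zero because $\vwt[\regpar] \to \vw[\regpar]$ and $\vwt[\estseq](\vw[\sol]) \le \vt[\energy]$ is uniformly bounded (as $\vt[\energy]$ converges, it is bounded, and $\vwt[\regpar] \ge 1$). Consequently $\sumplayer \vw[\regpar]\vwt[\estseq](\vw[\sol])$ converges as well.

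I do not foresee a serious obstacle once \cref{lem:regpar-bounded} is in hand: the only subtle point is to check that the telescoping of $\vwt[\regpar][\play][\run+1] - \vwt[\regpar]$ and the summability of $\vwt[\increment]$ cooperate with the signs in \eqref{eq:template-descent-main} so that the residual is truly summable, which is where variational stability is used to neutralize the linear term.
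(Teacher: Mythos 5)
Your proposal is correct and follows essentially the same route as the paper: sum the template descent inequality at $\bb{\sol}$, drop the linear term via variational stability, absorb the cross term by Young's inequality, note the perturbation is summable thanks to \cref{lem:regpar-bounded}, apply the scalar quasi-Fejér lemma (your inline argument is exactly the paper's \cref{lem:summable-cvg}), and then pass from $\vwt[\regpar]$ to $\vw[\regpar]$ using boundedness of $\vwt[\estseq](\vw[\sol])$. No gaps.
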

\begin{proof}
Let $\bb{\sol}$ be a Nash equilibrium.
From the descent inequality \eqref{eq:template-descent}, it is straightforward to show that
\begin{equation}
    \notag
    \begin{aligned}[b]
    \sumplayer\vwtupdate[\regpar]\vwtupdate[\estseq](\vw[\sol])
    &\le \sumplayer\vwt[\regpar]\vwt[\estseq](\vw[\sol])
    -\product{\jvecfield(\inter[\jstate])}{\inter[\jstate]-\bb{\sol}}\\
    &~~+\sumplayer
    \left((\vwtupdate[\regpar]-\vwt[\regpar][\run])\vw[\armeasure](\vw[\sol])
    +\frac{\vwt[\increment]}{2\vwt[\regpar]}\right).
    \end{aligned}
\end{equation}
By the choice of $\bb{\sol}$, $\product{\jvecfield(\inter[\jstate])}{\inter[\jstate]-\bb{\sol}}\ge0$.
On the other hand, thanks to \cref{lem:regpar-bounded} we know that the term on the second line is summable.
Therefore, by applying \cref{lem:summable-cvg}, we deduce the convergence of $\sumplayer\vwt[\regpar]\vwt[\estseq](\vw[\sol])$.
This in particular implies that $\vwt[\estseq](\vw[\sol])$ is bounded above for all $\play$ and $\run$; hence
$\sumplayer(\vw[\regpar]-\vwt[\regpar])\vwt[\estseq](\vw[\sol])$ converges to zero, and the convergence of $\sumplayer\vw[\regpar]\vwt[\estseq](\vw[\sol])$ follows immediately.
\end{proof}

{
\addtocounter{theorem}{-1}
\renewcommand{\thetheorem}{\ref{thm:converge-to-Nash}}
\begin{theorem}
Suppose that \cref{asm:convexity+smoothness} holds and all players $\play\in\players$ adopt an adaptive optimistic learning strategy which verifies \cref{asm:distance-funciton}.
Then the induced trajectory of play converges to a \acl{NE} provided that either of the following conditions is satisfied
\begin{enumerate}[\itshape a\upshape),itemsep=0pt]
\item
The game is strictly variationally stable.
\item
The game is variationally stable and $\vw[\hreg]$ is subdifferentiable on all of $\vw[\points]$.
\end{enumerate}
\end{theorem}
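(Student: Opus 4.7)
The plan is to carry out the two-step schema sketched after the statement: first show that every cluster point of $\seqinf[\inter[\jstate]]$ belongs to $\sols$, and then bootstrap this to actual convergence via the reciprocity assumption. The workhorses are already in place: \cref{lem:regpar-bounded} forces $\vwt[\regpar]\to\vw[\regpar]\in(0,+\infty)$ for every $\play$; \cref{lem:diff-to-zero} yields $\norm{\inter[\jstate]-\current[\jstate]}\to 0$ and $\norm{\current[\jstate]-\past[\jstate]}\to 0$; and \cref{lem:bregman-cvg} guarantees that $\sumplayer\vw[\regpar]\vwt[\estseq](\vw[\sol])$ converges for every $\bb{\sol}\in\sols$.

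For Step~1, case~(a) is handled by contradiction: if $\inter[\jstate]$ had a cluster point $\jaction^{\infty}\notin\sols$, strict variational stability against any fixed $\bb{\sol}\in\sols$ would yield $\product{\jvecfield(\jaction^{\infty})}{\jaction^{\infty}-\bb{\sol}}>0$, hence, by continuity, $\product{\jvecfield(\inter[\jstate])}{\inter[\jstate]-\bb{\sol}}\ge\delta$ for some $\delta>0$ along a subsequence. Summing \eqref{eq:template-descent-main} over $\play$ with $\vw[\arpoint]\subs\vw[\sol]$ and estimating the cross terms via Young's inequality as in the proofs of \cref{thm:social-regret-bounded,lem:regpar-bounded} produces a strictly negative drift of at least $\delta$ in $\sumplayer\vwt[\regpar]\vwt[\estseq](\vw[\sol])$ at each visit of the subsequence, with summable residual error; iterating drives this quantity to $-\infty$, contradicting \cref{lem:bregman-cvg}. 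Case~(b) is more delicate because we lose the energy gap and must instead pass to the limit in the first-order optimality condition of \eqref{eq:DS-OptMD}. The hypothesis $\dom\subd\vw[\hreg]=\vw[\points]$ makes $\grad\vw[\hreg]$ a continuous selection defined on all of $\vw[\points]$; thus, along a subsequence where $\inter[\jstate]\to\jaction^{\infty}$, \cref{lem:diff-to-zero} delivers $\current[\jstate]\to\jaction^{\infty}$ and the lagged feedback $\vwtlast[\gvec]\to\vw[\vecfield](\jaction^{\infty})$, while $\vwt[\regpar]\to\vw[\regpar]>0$ by \cref{lem:regpar-bounded}. The limit of the variational inequality characterizing $\vwtinter$ then collapses the $\grad\vw[\hreg]$-differences and yields $\product{\vw[\vecfield](\jaction^{\infty})}{\vw[\arpoint]-\vw[\action]^{\infty}}\ge 0$ for every $\play\in\players$ and every $\vw[\arpoint]\in\vw[\points]$, where $\vw[\action]^{\infty}$ denotes the $\play$-th component of $\jaction^{\infty}$; combined with \cref{asm:convexity+smoothness}, this is precisely the first-order characterization of $\jaction^{\infty}\in\sols$.

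For Step~2, given any cluster point $\jaction^{\infty}\in\sols$ and a subsequence $\inter[\jstate]\to\jaction^{\infty}$, \cref{lem:diff-to-zero} also gives $\vt[\jstate]\to\jaction^{\infty}$ along this subsequence. The relevant reciprocity condition (\cref{asm:Fenchel-recp} for \eqref{eq:OptDA}, \cref{asm:Bregman-recp} for \eqref{eq:DS-OptMD}) therefore forces $\sumplayer\vw[\regpar]\vwt[\estseq](\vw[\action]^{\infty})\to 0$ along the subsequence; by \cref{lem:bregman-cvg} this sum converges globally, so its limit equals zero. The quadratic lower bound of \cref{asm:distance-funciton}\ref{asm:distance-funciton-a} then upgrades this to $\vt[\jstate]\to\jaction^{\infty}$, and a last invocation of \cref{lem:diff-to-zero} transfers the convergence to $\inter[\jstate]\to\jaction^{\infty}$. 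The main obstacle is case~(b) of Step~1: strict variational stability would close the argument via a direct energy drop, but the merely monotone setting forces us to read the Nash condition off the update rule itself, which is where the everywhere-subdifferentiability of $\vw[\hreg]$ becomes essential to secure continuity of $\grad\vw[\hreg]$ at $\jaction^{\infty}$.
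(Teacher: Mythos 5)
Your proposal is correct and follows essentially the same route as the paper: cluster points are identified as Nash equilibria via the summed template/energy inequality in case (a) (your contradiction-with-drift argument is just a rephrasing of the paper's observation that $\sum_{\run}\product{\jvecfield(\inter[\jstate])}{\inter[\jstate]-\bb{\sol}}=\bigoh(1)$ with nonnegative terms) and via passing to the limit in the optimality condition of the $\vwtinter$ step in case (b), and convergence is then obtained from \cref{lem:bregman-cvg} together with \cref{asm:distance-funciton}, exactly as in the paper. The only points worth tightening are that the existence of a cluster point should be stated explicitly (boundedness of $\seqinf[\vt[\jstate]]$ follows from \cref{lem:bregman-cvg} and \cref{asm:distance-funciton}\ref{asm:distance-funciton-a}), and that within the general framework of the statement the relevant hypothesis is \cref{asm:distance-funciton}\ref{asm:distance-funciton-b} rather than the reciprocity conditions themselves.
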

}

\begin{proof}
We first show that in both cases, a cluster point of $\seqinf[\vt[\jstate]]$ is necessarily a Nash equilibrium.


\smallskip
\emph{a})
 Let $\limp{\jaction}$ be a cluster point of $\seqinf[\vt[\jstate]]$ and $\bb{\sol}$ be a Nash equilibrium. The point $\limp{\jaction}$ is also a cluster point of $\seqinf[\inter[\jstate]]$ since $\lim_{\toinf}\norm{\inter[\jstate]-\current[\jstate]}=0$.
From the proof of \cref{thm:social-regret-bounded}, we have $\sum_{\run=1}^{\nRuns}\product{\jvecfield(\inter[\jstate])}{\inter[\jstate]-\bb{\sol}}=\bigoh(1)$.
As $\product{\jvecfield(\inter[\jstate])}{\inter[\jstate]-\bb{\sol}}\ge0$ for all $\run$, this implies $\lim_{\toinf}\product{\jvecfield(\inter[\jstate])}{\inter[\jstate]-\bb{\sol}}=0$.
Subsequently, $\product{\jvecfield(\limp{\jaction})}{\limp{\jaction}-\bb{\sol}}=0$ by the continuity of $\jvecfield$, which shows that $\limp{\jaction}$ must be a Nash equilibrium by the strict variational stability of the game.

\smallskip
\emph{b})
Let $\limp{\jaction}\in\points$ be a cluster point of $\seqinf[\vt[\jstate]]$.
We recall that $\vwtinter[\state]$ is obtained by
\begin{equation}
    \notag
    \vwtinter[\state] = \argmin_{\point\in\vw[\points]}
    \left\{
    \product{\vw[\vecfield](\past[\jstate])}{\point} + \vwt[\regpar]\vw[\breg](\point, \vwt[\state])
    \right\}.
\end{equation}
For any $\vw[\arpoint]\in\vw[\points]$, the optimality condition \cref{lem:mirror-optimality} then gives
\begin{equation}
    \label{eq:optimality-before-limit}
    \product{\vw[\vecfield](\past[\jstate])
    + \vwt[\regpar]\grad\vw[\hreg](\vwtinter[\state])
    - \vwt[\regpar]\grad\vw[\hreg](\vwt[\state])}{\vw[\arpoint]-\vwtinter[\state]}\ge0.
\end{equation}
Let $(\vt[\jstate][\extr])_{\run\in\N}$ be a subsequence that converges to $\limp{\jaction}$.
With $\lim_{\toinf}\norm{\inter[\jstate]-\current[\jstate]}=0$ and $\lim_{\toinf}\norm{\current[\jstate]-\past[\jstate]}=0$ (\cref{lem:diff-to-zero}), we deduce
$\inter[\jstate][\extr]\to\limp{\jaction}$ and $\past[\jstate][\extr]\to\limp{\jaction}$.
Since both $\grad\vw[\hreg]$ and $\vw[\vecfield]$ are continuous ($\grad\vw[\hreg]$ is a \emph{continuous} selection of the subgradients of $\vw[\hreg]$)
and $\vw[\points]\subset\dom\subd\vw[\hreg]$, by substituting $\run\subs\extr$ in \eqref{eq:optimality-before-limit} and letting $\run$ go to infinity, we get
\begin{equation}
    \notag
    \product{\vw[\vecfield](\limp{\jaction})
    +\vw[\regpar]\grad\vw[\hreg](\vw[\limp{\action}])
    -\vw[\regpar]\grad\vw[\hreg](\vw[\limp{\action}])}
    {\vw[\arpoint]-\vw[\limp{\action}]}\ge0.
\end{equation}
In other words, for all $\vw[\arpoint]\in\vw[\points]$, it holds that
\begin{equation}
    \notag
    \product{\grad_{\vw[\action]}\vw[\loss](\limp{\jaction})}{\vw[\arpoint]-\vw[\limp{\action}]}\ge0.
\end{equation}
Since $\vw[\loss]$ is convex in $\vw[\action]$ by \cref{asm:convexity+smoothness}, the above implies
\begin{equation}
    \notag
    \vw[\loss](\vw[\arpoint],\vw[\limp{\jaction}][\playexcept]) \ge \vw[\loss](\limp{\jaction}).
\end{equation}
This is true for all $\play\in\players$ and all $\vw[\arpoint]\in\vw[\points]$, which shows that $\limp{\jaction}$ is indeed a Nash equilibrium.

\smallskip
\textbf{Conclude.}
\cref{lem:bregman-cvg} along with \cref{asm:distance-funciton}\ref{asm:distance-funciton-a} implies the boundedness of $\seqinf[\vt[\jstate]]$.
With the above we can readily show that $\dist(\vt[\jaction],\sols)\to0$ and $\limsup_{\run\to+\infty}\vwsp[\gap_{\vw[\arpoints]}](\vt[\jaction])\le0$ for all $\play$ and every compact set $\vw[\arpoints]\subset\vw[\points]$ ($\vt[\jaction]=\inter[\jstate]$ is the realized action at time $\run$).

Below, we further prove the convergence of the iterates to a point using \cref{asm:distance-funciton}\ref{asm:distance-funciton-b} and \cref{lem:bregman-cvg}.
The sequence $\seqinf[\vt[\jstate]]$, being bounded, necessarily possesses a cluster point which we denote by $\limp{\jaction}$. We have proved that $\limp{\jaction}$ must be a Nash equilibrium. Therefore, by \cref{lem:bregman-cvg} the sequence $\sumplayer\vw[\regpar]\vwt[\estseq](\vw[\limp{\action}])$ converges.
In \cref{asm:distance-funciton}\ref{asm:distance-funciton-b}, we take $\cpt\subs\{\vw[\limp{\action}]\}$ and this means that when $\vwt[\state]$ is close enough to $\vw[\limp{\action}]$, $\vwt[\estseq](\vw[\limp{\action}])$ becomes arbitrarily small.
Consequently, $\sumplayer\vw[\regpar]\vwt[\estseq](\vw[\limp{\action}])$ can only converge to zero.
By invoking \cref{asm:distance-funciton}\ref{asm:distance-funciton-a}, we then get $\lim_{\toinf}\vt[\jstate]=\limp{\jaction}$, or equivalently $\lim_{\toinf}\inter[\jstate]=\limp{\jaction}$. 
\end{proof}


\subsubsection{Finite two-player zero-sum games with adaptive OMWU}
We now investigate the specific case of learning in a finite two-player zero-sum game with adaptive \eqref{eq:OMWU-dynamic}.
We consider the saddle-point formulation of the problem.
Let us denote respectively by $\minvar\in\simplex_{\nPuresA}$ and $\maxvar\in\simplex_{\nPuresB}$ the mixed strategy of the first and the second player. A point $(\solA,\solB)$ is a Nash equilibrium if for all $\minvar\in\simplex_{\nPuresA}$ and $\maxvar\in\simplex_{\nPuresB}$,
\begin{equation}
    \label{eq:saddle}
    \solA^{\top}\gamemat\solB\le\minvar^{\top}\gamemat\solB,
    ~~~~
    \solA^{\top}\gamemat\maxvar\le\solA^{\top}\gamemat\solB.
\end{equation}
where $\gamemat$ is the payoff matrix and without loss of generality we assume $\norm{\gamemat}_{\infty}\le1$.
We define $\gameval=\min_{\minvar\in\simplex_{\nPuresA}}\max_{\maxvar\in\simplex_{\nPuresB}}\minvar^{\top}\gamemat\maxvar$ as the value of the game and we will write $\vc[\point]$ for the $\indg$-th coordinate of $\point$.
A pure strategy $\vw[\alpha]$ of player $\play$ is called \emph{essential} if there exists a Nash equilibrium in which player $\play$ plays $\vw[\alpha]$ with positive probability.
We have the following lemma from \cite{MPP18}.

\begin{lemma}
\label{lem:nash-all-essential}
Let $\gamemat\in\R^{\nPuresA\times\nPuresB}$ be the game matrix for a finite two-player zero-sum game with value $\gameval$. There is a Nash equilibrium $(\solA,\solB)$ such that each player plays each of their essential strategies with positive probability, and
\begin{equation}
    \notag
    \forall \pure \notin \supp(\solA),~ \vc[(\gamemat\solB)]>\gameval,
    ~~~~
    \forall \purealt \notin \supp(\solB),~ \vc[(\gamemat^{\top}\solA)][\purealt]<\gameval.
\end{equation}
\end{lemma}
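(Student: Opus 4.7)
The plan is to invoke LP strict complementarity (the Goldman--Tucker theorem) together with the product structure of Nash equilibria in zero-sum games. Let $\sols_1 \subset \simplex_{\nPuresA}$ and $\sols_2 \subset \simplex_{\nPuresB}$ denote the sets of optimal strategies for the minimizer and the maximizer, respectively; by the minimax theorem these are non-empty compact convex polytopes and the full equilibrium set factorizes as $\sols = \sols_1 \times \sols_2$, a property specific to zero-sum games that I will rely on throughout.

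The key step is to recognize $\sols_1$ and $\sols_2$ as the optimal-solution sets of a primal-dual pair of LPs: the primal ``$\min \{v : \gamemat^\top\minvar \leq v\mathbf{1},\, \mathbf{1}^\top\minvar = 1,\, \minvar \geq 0\}$'' has dual ``$\max \{w : \gamemat\maxvar \geq w\mathbf{1},\, \mathbf{1}^\top\maxvar = 1,\, \maxvar \geq 0\}$'', and by strong duality both attain the common optimal value $\gameval$. The Goldman--Tucker theorem on strict complementary slackness then guarantees the existence of a primal-dual optimizer $(\solA, \solB) \in \sols_1 \times \sols_2 = \sols$ satisfying
\begin{equation*}
    \solA_{\pure} > 0 \iff (\gamemat\solB)_{\pure} = \gameval,
    \qquad
    \solB_{\purealt} > 0 \iff (\gamemat^\top\solA)_{\purealt} = \gameval.
\end{equation*}

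I would then verify that this $(\solA, \solB)$ satisfies both assertions of the lemma. The strict inequalities on indices outside the support follow immediately from the forward directions of the two equivalences above. For the support itself, if a pure strategy $\pure$ is essential, choose $\solA' \in \sols_1$ with $\solA'_{\pure} > 0$; the product structure yields $(\solA', \solB) \in \sols$, so $\solA'$ best-responds to $\solB$ and standard complementary slackness forces $(\gamemat\solB)_{\pure} = \gameval$; the reverse direction of the equivalence then yields $\solA_{\pure} > 0$. The symmetric argument with the two players' roles interchanged handles $\solB$, completing the proof.

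The main obstacle I anticipate is the necessity of \emph{strict} (rather than ordinary) complementary slackness. An elementary alternative would define $\solA$ as a uniform average of elements $\solA^{(\pure)} \in \sols_1$ with $\solA^{(\pure)}_{\pure} > 0$ ranging over essential $\pure$, and $\solB$ analogously; this convex-averaging trick easily produces full essential support on both sides, but it only delivers the weak inequality $(\gamemat\solB)_{\pure} \geq \gameval$ for non-essential $\pure$. Ruling out equality is a genuinely LP-theoretic statement -- asserting the existence of a \emph{single} dual optimum that is strictly slack on every non-essential index simultaneously -- and this is precisely what Goldman--Tucker provides, whereas convex averaging cannot.
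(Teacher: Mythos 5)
Your proof is correct. One thing worth knowing: the paper itself does not prove this lemma at all \textendash\ it is imported directly from \cite{MPP18} ("We have the following lemma from...") \textendash\ so there is no internal argument to compare against. Your route is the classical one behind that citation: cast the two optimal-strategy polytopes as the solution sets of the dual pair of LPs defining the value, invoke Goldman--Tucker strict complementarity to obtain one optimal pair $(\solA,\solB)$ with $\solA_{\pure}>0 \iff (\gamemat\solB)_{\pure}=\gameval$ and $\solB_{\purealt}>0 \iff (\gamemat^{\top}\solA)_{\purealt}=\gameval$, and then use interchangeability of zero-sum equilibria together with ordinary complementary slackness to conclude that every essential strategy must lie in the support of this particular pair; the strict inequalities off the supports follow because feasibility gives $(\gamemat\solB)_{\pure}\ge\gameval$ and $(\gamemat^{\top}\solA)_{\purealt}\le\gameval$ for all indices, and strict complementarity excludes equality there. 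This is essentially the Bohnenblust--Karlin--Shapley\,/\,Gale--Kuhn--Tucker strict-complementarity theorem for matrix games, and all the delicate points are handled correctly: the correspondence between primal variables and dual constraints is the right one, and the essential-support step correctly uses that $(\solA',\solB)$ is again an equilibrium. Your closing remark is also accurate \textendash\ convex averaging (or, equivalently, taking relative-interior points of the optimal polytopes) yields full essential support but only the weak inequalities, and upgrading them to strict ones is exactly the LP-theoretic content that Goldman--Tucker supplies.
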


In the following, we will denote by $\sol=(\solA,\solB)$ such an equilibrium.
As an immediate consequence, for all $\pure\in\supp(\solA)$, $\vc[(\gamemat\solB)]=\gameval$ and for all $\purealt\in\supp(\solB)$, $\vc[(\gamemat^{\top}\solA)][\purealt]=\gameval$. We also define
\begin{equation}
    \notag
    \gamediff
    =\min\left\{
    \min_{\pure\notin\supp(\solA)}\vc[(\gamemat\solB)]-\gameval,
    \gameval-\max_{\purealt\notin\supp(\solB)}\vc[(\gamemat^{\top}\solA)][\purealt]
    \right\}>0.
\end{equation}
Moreover, 
\begin{equation}
    \notag
    \gamediff
    \le\frac{\min_{\pure\notin\supp(\solA)}\vc[(\gamemat\solB)]-\gameval
    +\gameval-\max_{\purealt\notin\supp(\solB)}\vc[(\gamemat^{\top}\solA)][\purealt]}{2}
    \le
    \frac{\norm{\gamemat\solB}_{\infty} + \norm{\gamemat^{\top}\solA}_{\infty}}{2}
    \le 1.
\end{equation}

For any $\test{\minvar}\in\simplexA$, we denote by
\begin{equation}
    \notag
    \subsupp_{\test{\minvar}}=\setdef{\minvar\in\simplexA} {\supp(\minvar)\subset\supp(\test{\minvar})}.
\end{equation}
the set of the points whose support is included in that of $\test{\minvar}$.
For $\test{\maxvar}\in\simplexB$, $\subsupp_{\test{\maxvar}}$ is defined in the same way.
The next lemma, extracted from \cite{WLZL21}, is crucial to our proof.
\begin{lemma}
\label{lem:mix-still-nash}
Let $\test{\point}=(\test{\minvar},\test{\maxvar})\in\simplex_{\nPuresA}\times\simplex_{\nPuresB}$ satisfy 
that for all
$(\minvar,\maxvar)\in\subsupp_{\solA}\times\subsupp_{\solB}$,
\begin{equation}
    \label{eq:two-player-supp-opt}
    (\minvar-\test{\minvar})^{\top}\gamemat\test{\maxvar}
    + \test{\minvar}^{\top}\gamemat(\test{\maxvar}-\maxvar)\ge0.
\end{equation}
Then $\pointalt=(1-\gamediff/2)\sol+(\gamediff/2)\test{\point}$ is also a Nash equilibrium.
\end{lemma}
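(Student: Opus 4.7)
The plan is to show that the restricted-support condition~\eqref{eq:two-player-supp-opt} in fact pins $\test{\point}$ inside $\subsupp_{\solA}\times\subsupp_{\solB}$ and makes it a saddle-point of the sub-game there, and then to combine this with the strict gap $\gamediff$ provided by \cref{lem:nash-all-essential} to verify the Nash inequalities at $\pointalt=(1-\gamediff/2)\sol+(\gamediff/2)\test{\point}$ globally.

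The first stage extracts sub-game Nash information about $\test{\point}$. Specializing~\eqref{eq:two-player-supp-opt} to $(\minvar,\maxvar)=(\solA,\solB)$ collapses the inequality to $\solA^{\top}\gamemat\test{\maxvar}\ge\test{\minvar}^{\top}\gamemat\solB$; but Nash optimality of $(\solA,\solB)$ also gives $\solA^{\top}\gamemat\test{\maxvar}\le\gameval\le\test{\minvar}^{\top}\gamemat\solB$, so both sides equal $\gameval$. Since \cref{lem:nash-all-essential} supplies the strict inequalities $\vc[(\gamemat\solB)]>\gameval$ for $\pure\notin\supp(\solA)$ and $\vc[(\gamemat^{\top}\solA)][\purealt]<\gameval$ for $\purealt\notin\supp(\solB)$, these equalities force $\supp(\test{\minvar})\subseteq\supp(\solA)$ and $\supp(\test{\maxvar})\subseteq\supp(\solB)$. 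Re-specializing~\eqref{eq:two-player-supp-opt} to $(\solA,\maxvar)$ with $\maxvar\in\subsupp_{\solB}$ and to $(\minvar,\solB)$ with $\minvar\in\subsupp_{\solA}$, and simplifying using the equalities just obtained, we get $\test{\minvar}^{\top}\gamemat\maxvar\le\gameval\le\minvar^{\top}\gamemat\test{\maxvar}$, so $(\test{\minvar},\test{\maxvar})$ is a saddle-point of the sub-game on $\subsupp_{\solA}\times\subsupp_{\solB}$ with value $\gameval$.

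Equipped with these sub-game inequalities, we check the full Nash property at $\pointalt$. Set $\eps=\gamediff/2$ and write $u=(1-\eps)\solA+\eps\test{\minvar}$ and $v=(1-\eps)\solB+\eps\test{\maxvar}$ for the two coordinates of $\pointalt$; bilinear expansion together with $\solA^{\top}\gamemat\test{\maxvar}=\test{\minvar}^{\top}\gamemat\solB=\test{\minvar}^{\top}\gamemat\test{\maxvar}=\gameval$ gives $u^{\top}\gamemat v=\gameval$. For any deviation $\minvar\in\simplex_{\nPuresA}$, decompose $\minvar=(1-p)\minvar_{I}+p\minvar_{O}$ with $\minvar_{I}\in\subsupp_{\solA}$ and $\supp(\minvar_{O})\cap\supp(\solA)=\emptyset$; the first piece satisfies $\minvar_{I}^{\top}\gamemat v\ge\gameval$ by the sub-game saddle-point property, and for the second piece we rewrite $v=\solB+\eps(\test{\maxvar}-\solB)$ to obtain
\begin{equation*}
\minvar_{O}^{\top}\gamemat v\ge(\gameval+\gamediff)-2\eps=\gameval,
\end{equation*}
using the gap $\vc[(\gamemat\solB)]\ge\gameval+\gamediff$ on $\supp(\minvar_{O})$ and the bound $|\minvar_{O}^{\top}\gamemat(\test{\maxvar}-\solB)|\le\norm{\gamemat}_{\infty}\norm{\test{\maxvar}-\solB}_{1}\le2$. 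Convex combining the two pieces gives $\minvar^{\top}\gamemat v\ge\gameval=u^{\top}\gamemat v$; the symmetric argument yields $u^{\top}\gamemat\maxvar\le\gameval$ for every $\maxvar\in\simplex_{\nPuresB}$, so $\pointalt$ is a Nash equilibrium.

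The delicate ingredient is the precise choice $\eps=\gamediff/2$: it must be large enough for the sub-game saddle-point property to take effect, yet small enough that the perturbation $\eps\norm{\gamemat}_{\infty}\norm{\test{\maxvar}-\solB}_{1}\le2\eps$ cannot wipe out the strict gap $\gamediff$ that $\solB$ enjoys against strategies outside $\supp(\solA)$. This tight balance is made possible only by the support containment proved in the first stage: without it, $\test{\point}$ could place mass on inessential strategies and inject ``bad'' entries of $\gamemat$ into the lower bound for $\minvar_{O}^{\top}\gamemat v$, forcing a strictly smaller mixing weight and breaking the argument.
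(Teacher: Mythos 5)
Your proof is correct, and its computational core coincides with the paper's: both start by plugging $(\minvar,\maxvar)=(\solA,\solB)$ into \eqref{eq:two-player-supp-opt} and combining with the Nash property of $\sol$ to get $\solA^{\top}\gamemat\test{\maxvar}=\test{\minvar}^{\top}\gamemat\solB=\gameval$, and both then verify optimality of the half-mixed strategies by splitting any deviation into its essential and inessential coordinates, balancing the gap $\gamediff$ from \cref{lem:nash-all-essential} against the perturbation $2\eps=\gamediff$ via $\norm{\gamemat}_{\infty}\le1$ and $\norm{\test{\maxvar}-\solB}_{1}\le2$. Where you genuinely diverge is in how the conclusion for the pair $\pointalt$ is assembled. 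The paper never needs any statement about the support of $\test{\point}$: it proves that the two profiles $(\solA,\alt{\maxvar})$ and $(\alt{\minvar},\solB)$ are each Nash equilibria (using the coordinate-wise inequality $\vc[(\gamemat(\test{\maxvar}-\solB))]\ge0$ on $\supp(\solA)$, obtained from the hypothesis with $\maxvar=\solB$), and then passes to $(\alt{\minvar},\alt{\maxvar})$, implicitly invoking the interchangeability (rectangularity) of equilibria in zero-sum games. You instead squeeze more out of the hypothesis first: the support containments $\supp(\test{\minvar})\subseteq\supp(\solA)$ and $\supp(\test{\maxvar})\subseteq\supp(\solB)$ (valid, since $\test{\minvar}^{\top}\gamemat\solB=\gameval$ while $\vc[(\gamemat\solB)]>\gameval$ off $\supp(\solA)$, and symmetrically), hence $\test{\minvar}^{\top}\gamemat\test{\maxvar}=\gameval$ and a sub-game saddle point, which gives $\alt{\minvar}^{\top}\gamemat\alt{\maxvar}=\gameval$ and lets you check both saddle inequalities of \eqref{eq:saddle} directly at $\pointalt$. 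Your route is self-contained (no appeal to exchangeability of zero-sum equilibria) at the price of the extra support-containment step; the paper's route avoids any claim about $\test{\point}$'s support but leans on that standard structural fact to conclude from the two half-mixed equilibria.
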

\begin{proof}
We rewrite the \acl{LHS} of \eqref{eq:two-player-supp-opt} as
\begin{equation}
    \label{eq:two-player-supp-opt-rewrite}
    (\minvar-\test{\minvar})^{\top}\gamemat\test{\maxvar}
    + \test{\minvar}^{\top}\gamemat(\test{\maxvar}-\maxvar)
    = \minvar^{\top}\gamemat\test{\maxvar} - \gameval
    + \gameval - \test{\minvar}^{\top}\gamemat\maxvar
    = \minvar^{\top}\gamemat(\test{\maxvar}-\solB)
    + (\solA-\test{\minvar})^{\top}\gamemat\maxvar.
\end{equation}
The second inequality holds because $(\minvar,\maxvar)\in\subsupp_{\solA}\times\subsupp_{\solB}$.
With the choice $(\minvar,\maxvar)\subs(\solA,\solB)$ and \eqref{eq:two-player-supp-opt} we then get
\begin{equation}
    \notag
    \solA^{\top}\gamemat(\test{\maxvar}-\solB)
    + (\solA-\test{\minvar})^{\top}\gamemat\solB \ge 0.
\end{equation}
This implies 
\begin{equation}
    \label{eq:2p-nash-equal}
    \solA^{\top}\gamemat(\test{\maxvar}-\solB)=(\solA-\test{\minvar})^{\top}\gamemat\solB=0
\end{equation}
by the definition of Nash equilibrium \eqref{eq:saddle}.

We next prove that $(\solA,\alt{\maxvar})$ is also a Nash equilibrium with $\alt{\maxvar}=(1-\gamediff/2)\solB+(\gamediff/2)\test{\maxvar}$.
From \eqref{eq:2p-nash-equal} we already have
\begin{equation}
    \notag
    \solA^{\top}\gamemat\alt{\maxvar}=\solA^{\top}\gamemat\solB=\gameval=\max_{\maxvar\in\simplexB}\solA^{\top}\gamemat\maxvar.
\end{equation}
It remains to show that $\solA^{\top}\gamemat\alt{\maxvar}=\min_{\minvar\in\simplexA}\minvar^{\top}\gamemat\alt{\maxvar}$.
By choosing $\maxvar=\solB$ in \eqref{eq:two-player-supp-opt-rewrite}, we know that for all $\minvar\in\subsupp_{\solA}$, it holds $\minvar^{\top}\gamemat(\test{\maxvar}-\solB)\ge0$.
In other words,
\begin{equation}
    \label{eq:coordinate-non-negative}
    \forall \pure\in\supp(\solA),~~
    \vc[(\gamemat(\test{\maxvar}-\solB))]\ge0
\end{equation}
Let $\minvar\in\simplexA$. We decompose
\begin{equation}
    \label{eq:game-product-decompose}
    \minvar^{\top}\gamemat\alt{\maxvar}
    = \sum_{\pure\in\supp(\solA)}\vc[\minvar]\vc[(\gamemat\alt{\maxvar})]
    + \sum_{\pure\notin\supp(\solA)}\vc[\minvar]\vc[(\gamemat\alt{\maxvar})].
\end{equation}
The first term can be bounded below using \eqref{eq:coordinate-non-negative},
\begin{equation}
    \label{eq:game-product-lower-supp}
    \sum_{\pure\in\supp(\solA)}\vc[\minvar]\vc[(\gamemat\alt{\maxvar})]
    = \sum_{\pure\in\supp(\solA)}
    \left(\frac{\gamediff}{2}\,\vc[\minvar]\vc[(\gamemat(\test{\maxvar}-\solB))]
    +\vc[\minvar]\vc[(\gamemat\solB)]\right)
    \ge \sum_{\pure\in\supp(\solA)} \vc[\minvar]\gameval.
\end{equation}
We proceed to lower bound the second term
\begin{equation}
    \label{eq:game-product-lower-nonsupp}
    \begin{aligned}[b]
    \sum_{\pure\notin\supp(\solA)}\vc[\minvar]\vc[(\gamemat\alt{\maxvar})]
    &\ge \sum_{\pure\notin\supp(\solA)}
    \left(
    \vc[\minvar]\vc[(\gamemat\solB)]
    -\frac{\gamediff}{2}\,\abs{\vc[\minvar]\vc[(\gamemat(\test{\maxvar}-\solB))]}
    \right)\\
    &\ge \sum_{\pure\notin\supp(\solA)}
    \left(
    \vc[\minvar]\vc[(\gamemat\solB)]
    -\frac{\gamediff}{2}\,\vc[\minvar]\norm{\gamemat}_{\infty}\norm{\test{\maxvar}-\solB}_1
    \right)\\
    &\ge \sum_{\pure\notin\supp(\solA)}
    \vc[\minvar](\vc[(\gamemat\solB)]-\gamediff)\\
    &\ge \sum_{\pure\notin\supp(\solA)} \vc[\minvar]\gameval.
    \end{aligned}
\end{equation}
In the last inequality we use the definition of $\gamediff$.
Combining \eqref{eq:game-product-decompose}, \eqref{eq:game-product-lower-supp}, and \eqref{eq:game-product-lower-nonsupp} we have $\minvar^{\top}\gamemat\alt{\maxvar}\ge\gameval=\solA^{\top}\gamemat\alt{\maxvar}$.
We have therefore proved that $(\solA,\alt{\maxvar})$ is a Nash equilibrium.
In the same way we can show that with $\alt{\minvar}=(1-\gamediff/2)\solA+(\gamediff/2)\test{\minvar}$,
the point $(\alt{\minvar},\solB)$ is also a Nash equilibrium.
We then conclude that $\alt{\action}=(\alt{\minvar},\alt{\maxvar})$ is indeed a Nash equilibrium.
\end{proof}

In the following we analyse the case where both $\vw[\hreg][1]$ and $\vw[\hreg][2]$ are negative entropy regularizers (\ie both players play adaptive OMWU). The case where one is negative entropy regularizer and the other satisfies that $\vw[\points]\subset\dom\subd\vw[\hreg]$ can be proved similarly.
The Bregman divergence for the negative entropy regularizer is the KL divergence which we will denote by $\dkl$.
We take $\grad\vw[\hreg][1]\from(\vc[\minvar])_{\pure\in\oneto{\nPuresA}}\to(-\log \vc[\minvar])_{\pure\in\oneto{\nPuresA}}$
and
$\grad\vw[\hreg][2]\from(\vc[\maxvar][\purealt])_{\purealt\in\oneto{\nPuresA}}\to(-\log \vc[\maxvar][\purealt])_{\purealt\in\oneto{\nPuresB}}$.

\AdaptiveOMWU*

\begin{proof}
Consider the solution $\sol=(\solA,\solB)$ that we have chosen using \cref{lem:nash-all-essential}.
By \cref{lem:bregman-cvg} we know that $\vw[\regpar][1]\dkl(\solA,\vt[\minvar])+\vw[\regpar][2]\dkl(\solB,\vt[\maxvar])$ are bounded above.
This implies that for all $\pure\in\supp(\solA)$ and $\purealt\in\supp(\solB)$, the coordinates $\vct[\minvar]$ and $\vct[\maxvar][\purealt]$ are bounded below.
In particular, for any cluster point $\limp{\action}=(\limpA,\limpB)$, we have $\supp(\solA)\subset\supp(\limpA)$ and $\supp(\solB)\subset\supp(\limpB)$.

We will proceed to prove the sequence of produced iterates only has one cluster point.
We first use the optimality condition \eqref{eq:optimality-before-limit} but only apply it to $\vw[\arpoint][1]\subs\minvar\in\subsupp(\solA)$. This gives
\begin{equation}
    \label{eq:optimality-before-limit-support}
    \sum_{\pure\in\supp(\solA)}
    (\vc[\vw[\vecfield][1](\past[\jstate])]+\vwt[\regpar][1](\log(\vct[\minvar])-\log(\vctinter[\minvar])))
    (\vc[\minvar]-\vctinter[\minvar])\ge0
\end{equation}
We consider a subsequence that goes to a cluster point $\limp{\point}=(\limpA,\limpB)$. Since $\vc[\limpA]>0$ for all $\pure\in\supp(\solA)$ and both $\norm{\vt[\jstate]-\past[\jstate]}$ and $\norm{\inter[\jstate]-\current[\jstate]}$ go to zero (\cref{lem:diff-to-zero}), \eqref{eq:optimality-before-limit-support} implies
\begin{equation}
    \notag
    \sum_{\pure\in\supp(\solA)}
    \vc[\vw[\vecfield][1](\limp{\jaction})]
    (\vc[\minvar]-\vc[\limpA])\ge0.
\end{equation}
Equivalently, $(\minvar-\limpA)^{\top}\gamemat\limpB\ge0$.
In the same way, for all $\maxvar\in\subsupp(\solB)$ we have $\limpA^{\top}\gamemat(\limpB-\maxvar)\ge0$.
We can thus apply \cref{lem:mix-still-nash} and we know that $(\alt{\solA},\alt{\solB}) = (1-\gamediff/2)\sol+(\gamediff/2)\limp{\point}$ is also a Nash equilibrium.
By the choice of $\sol$, we have $\supp(\alt{\solA})\subset\supp(\solA)$ and $\supp(\alt{\solB})\subset\supp(\solB)$.
Subsequently, $\supp(\solA)=\supp(\limpA)$ and $\supp(\solB)=\supp(\limpB)$.

Using \cref{lem:bregman-cvg}, we can define 
\begin{equation}
    \notag
    \begin{aligned}
    \limdist&=\lim_{\run\to+\infty}\vw[\regpar][1]\dkl(\solA,\vt[\minvar])+\vw[\regpar][2]\dkl(\solB,\vt[\maxvar])\\
    \alt{\limdist}&=\lim_{\run\to+\infty}\vw[\regpar][2]\dkl(\alt{\solA},\vt[\minvar])+\vw[\regpar][2]\dkl(\alt{\solB},\vt[\maxvar]).
    \end{aligned}
\end{equation}
Since $\supp(\solA)=\supp(\limpA)$ and $\supp(\solB)=\supp(\limpB)$, we can use the continuity of the KL divergence with respect to the second variable and deduce that $\vw[\regpar][1]\dkl(\solA,\limpA)+\vw[\regpar][2]\dkl(\solB,\limpB)=\limdist$.
Similarly, $\vw[\regpar][1]\dkl(\alt{\solA},\limpA)+\vw[\regpar][2]\dkl(\alt{\solB},\limpB)=\alt{\limdist}$.
These two equations also hold if we consider another cluster point $\alt{\limp{\point}}=(\alt{\limpA},\alt{\limpB})$.
As a consequence,
\begin{equation}
    \label{eq:KL-equal-1}
    \begin{multlined}
    \vw[\regpar][1]\sum_{\pure\in\supp(\solA)}\vc[\solA]\log\vc[\limpA]
    +\vw[\regpar][2]\sum_{\purealt\in\supp(\solB)}\vc[\solB][\purealt]\log\vc[\limpB][\purealt]\\
    =
    \vw[\regpar][1]\sum_{\pure\in\supp(\solA)}\vc[\solA]\log\vc[\alt{\limpA}]
    +\vw[\regpar][2]\sum_{\purealt\in\supp(\solB)}\vc[\solB][\purealt]\log\vc[\alt{\limpB}][\purealt],
    \end{multlined}
\end{equation}
and
\begin{equation}
    \label{eq:KL-equal-2}
    \begin{multlined}
    \vw[\regpar][1]\sum_{\pure\in\supp(\solA)}\vc[\alt{\solA}]\log\vc[\limpA]
    +\vw[\regpar][2]\sum_{\purealt\in\supp(\solB)}\vc[\alt{\solB}][\purealt]\log\vc[\limpB][\purealt]\\
    =
    \vw[\regpar][1]\sum_{\pure\in\supp(\solA)}\vc[\alt{\solA}]\log\vc[\alt{\limpA}]
    +\vw[\regpar][2]\sum_{\purealt\in\supp(\solB)}\vc[\alt{\solB}][\purealt]\log\vc[\alt{\limpB}][\purealt],
    \end{multlined}
\end{equation}
With $(\alt{\solA},\alt{\solB}) = (1-\gamediff/2)\sol+(\gamediff/2)\limp{\point}$ and $\gamediff>0$, using \eqref{eq:KL-equal-1} and \eqref{eq:KL-equal-2} we get
\begin{equation}
    \notag
    \begin{multlined}[b]
    \vw[\regpar][1]\sum_{\pure\in\supp(\solA)}\vc[\limpA]\log\vc[\limpA]
    +\vw[\regpar][2]\sum_{\purealt\in\supp(\solB)}\vc[\limpB][\purealt]\log\vc[\limpB][\purealt]\\
    =
    \vw[\regpar][1]\sum_{\pure\in\supp(\solA)}\vc[\limpA]\log\vc[\alt{\limpA}]
    +\vw[\regpar][2]\sum_{\purealt\in\supp(\solB)}\vc[\limpB][\purealt]\log\vc[\alt{\limpB}][\purealt],
    \end{multlined}
\end{equation}
Note that we also have $\supp(\solA)=\supp(\alt{\limpA})$ and $\supp(\solB)=\supp(\alt{\limpB})$. The above is thus equivalent to
\begin{equation}
    \notag
    \vw[\regpar][1]\dkl(\limpA,\alt{\limpA})
    +\vw[\regpar][2]\dkl(\limpB,\alt{\limpB}) = 0
\end{equation}
This shows $\limp{\point}=\alt{\limp{\point}}$, and therefore $\seqinf[\vt[\jstate]]$ has only one cluster point; in other words, the algorithm converges (recall that $\lim_{\toinf}\norm{\inter[\jstate]-\current[\jstate]}=0$).
To conclude, we note that if a no regret learning algorithm converges, it must converge to a Nash equilibrium.
In fact, for all $\minvar\in\simplexA$, we have $\sum_{\run=1}^{\nRuns}\product{\vw[\vecfield][1](\inter[\jstate])}{\inter[\minvar]-\minvar}=\smalloh(\nRuns)$ and thus $\liminf_{\run\to+\infty}\product{\vw[\vecfield][1](\inter[\jstate])}{\inter[\minvar]-\minvar}\le0$.
However, $\lim_{\run\to+\infty}\product{\vw[\vecfield][1](\inter[\jstate])}{\inter[\minvar]-\minvar}=\product{\vw[\vecfield][1](\limp{\point})}{\limpA-\minvar}$.
This shows $\product{\vw[\vecfield][1](\limp{\point})}{\limpA-\minvar}\le0$ for all $\minvar\in\simplexA$ and thus $\limpA$ is a best response to $\limpB$.
The same argument also applies to the second player; accordingly, $\limp{\point}$ is indeed a Nash equilibrium.
\end{proof}

\subsection{A dichotomy result for general convex games}
\label{app:dichotomy}

Below we prove a variant of \cref{thm:dichotomy-general-sum} which does not require the compactness assumption. \cref{thm:dichotomy-general-sum} is a direct corollary of this variant.

{
\addtocounter{theorem}{-1}
\renewcommand{\thetheorem}{\ref{thm:dichotomy-general-sum}$'$}
\begin{theorem}
Suppose that \cref{asm:convexity+smoothness} holds and all players $\play\in\players$  adopt an adaptive optimistic learning strategy.
Assume additionally that $\vw[\points]\subset\dom\subd\vw[\hreg]$.
Then one of the following holds:
\begin{enumerate}[\upshape(\itshape a\upshape),topsep=1.2pt,itemsep=1.2pt,leftmargin=*]
    \item For every $\play\in\players$ and every compact set $\vw[\arpoints]\in\vw[\points]$, the individual regret 
    $\vwt[\reg][\play][\nRuns](\vw[\arpoints])$ is bounded above
    \ie $\vwt[\reg][\play][\nRuns](\vw[\arpoints])=\bigoh(1)$.
    Moreover, every cluster point of the realized actions is a Nash equilibrium of the game.
    \label{thm:dichotomy-general-sum-a}
    \item For every compact set $\arpoints\subset\points$, the social regret with respect it tends to minus infinity when $\toinf$, \ie $\lim_{\toinf}\vt[\reg][\nRuns](\arpoints)=-\infty$.
    \label{thm:dichotomy-general-sum-b}
\end{enumerate}
\end{theorem}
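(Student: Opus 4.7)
The plan is to dichotomize on the asymptotic value of the learning rates. Since the sequence $\seqinf[\vwt[\regpar]]$ is non-decreasing, the limit $\vw[\regpar] \defeq \lim_{\toinf} \vwt[\regpar] \in [1, +\infty]$ exists for every $\play \in \players$, and the condition $\vw[\regpar] < +\infty$ is equivalent to the summability $\sum_{\run=1}^{+\infty} \vwt[\increment] < +\infty$. I will show that if $\vw[\regpar] < +\infty$ for every $\play$ (\emph{Case 1}) then statement~\textit{(a)} holds, and if $\vw[\regpar] = +\infty$ for some $\play$ (\emph{Case 2}) then statement~\textit{(b)} holds.

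For Case 2, the argument refines the proof of \cref{thm:social-regret-bounded}. Starting from the social-regret bound \eqref{eq:OptDA-regret-refined} and invoking the Lipschitz lower bound \eqref{eq:lips-sum} exactly as in that proof, the upper bound on $\vt[\reg][\nRuns](\arpoints)$ admits a per-player decomposition of the form $\sumplayer \vw[\func]\bigl(\sqrt{\textstyle\sum_{\run} \vwt[\increment]}\bigr) + \bigoh(1)$, where each $\vw[\func]$ is the downward-opening quadratic from \eqref{eq:OptDA-finite1-quadratic} with leading coefficient depending only on the Lipschitz modulus and with $\vw[\inibound] = \max_{\bb{\arpoint} \in \arpoints} \vw[\armeasure](\vw[\arpoint])$, and the $\bigoh(1)$ remainder absorbs the splitting \eqref{eq:OptDA-regret-finite-2}. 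For any player with $\sum \vwt[\increment] \to +\infty$ the corresponding $\vw[\func]$-term tends to $-\infty$, whereas all other terms remain bounded above; aggregating, $\vt[\reg][\nRuns](\arpoints) \to -\infty$ for every compact comparator $\arpoints \subset \points$, which is \textit{(b)}.

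For Case 1, the individual-regret bound follows immediately from \eqref{eq:individual-regret-inq'}, namely $\vwt[\reg][\play][\nRuns](\vw[\arpoints]) \le \vwt[\regpar][\play][\nRuns+1] \vw[\inibound] + \frac{1}{2} \sum_{\run} \vwt[\increment]/\vwt[\regpar] \le \vw[\regpar] \vw[\inibound] + \frac{1}{2} \sum_{\run=1}^{+\infty} \vwt[\increment] = \bigoh(1)$. To show that every cluster point $\limp{\jaction}$ of $\seqinf[\vt[\jaction]]$ is a Nash equilibrium, I fix a subsequence $\extr$ with $\inter[\jstate][\extr] \to \limp{\jaction}$; since $\vwt[\increment] \to 0$, the difference $\vwt[\gvec] - \vwtlast[\gvec]$ also tends to zero, so $\vwtlast[\gvec][\extr] \to \vw[\vecfield](\limp{\jaction})$ along the same subsequence. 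The key identity is the optimality condition for $\vwtinter$, namely $\product{\vwtlast[\gvec] + \vwt[\regpar]\bigl(\grad\vw[\hreg](\vwtinter) - \grad\vw[\hreg](\vwt)\bigr)}{\vw[\arpoint] - \vwtinter} \ge 0$ for all $\vw[\arpoint] \in \vw[\points]$. Passing to the limit along $\extr$ using the continuity of $\vw[\vecfield]$ and of $\grad\vw[\hreg]$ on $\vw[\points] \subset \dom\subd\vw[\hreg]$, together with $\vwt[\regpar][\extr] \to \vw[\regpar] < +\infty$, yields $\product{\vw[\vecfield](\limp{\jaction})}{\vw[\arpoint] - \vw[\limp{\action}]} \ge 0$ for every $\vw[\arpoint] \in \vw[\points]$, which by convexity of $\vw[\loss](\cdot, \vw[\limp{\jaction}][\playexcept])$ identifies $\vw[\limp{\action}]$ as a best response to $\vw[\limp{\jaction}][\playexcept]$; since this holds for every $\play$, $\limp{\jaction} \in \sols$.

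The main obstacle is ensuring that $\vwp[\norm{\vwtinter - \vwt}] \to 0$ along $\extr$, so that the correction $\vwt[\regpar](\grad\vw[\hreg](\vwtinter) - \grad\vw[\hreg](\vwt))$ vanishes in the limit. In the variationally stable regime this comes from \cref{lem:diff-to-zero} via the Minty-type inequality $\product{\jvecfield(\inter)}{\inter - \bb{\sol}} \ge 0$, which is unavailable here; the substitute must come from Case 1 itself. The approach I would take is to apply the summed template descent inequality \eqref{eq:template-descent-main} at the anchor $\vw[\arpoint] = \vw[\limp{\action}]$ (enlarging $\vw[\arpoints]$ to contain $\vw[\limp{\action}]$ if necessary), keep the $\vwt[\regpar]\vw[\breg](\vwtinter,\vwt)$ and $\vwt[\regpar]\vw[\breg](\vwtupdate,\vwtinter)$ terms on the left-hand side, and combine the individual-regret bound from the previous step with the non-negativity of $\vwt[\regpar][\nRuns+1]\vwt[\estseq][\nRuns+1](\vw[\arpoint])$ and the convergence $\inter[\play][\extr] \to \vw[\limp{\action}]$ to establish $\sum_{\run} \vwt[\regpar](\vw[\breg](\vwtinter,\vwt) + \vw[\breg](\vwtupdate,\vwtinter)) < +\infty$; strong convexity of $\vw[\hreg]$ then delivers $\vwp[\norm{\vwtinter - \vwt}] \to 0$ and $\vwp[\norm{\vwtupdate - \vwtinter}] \to 0$, propagating the convergence of $\inter[\jstate][\extr]$ to both $\vwt[\extr]$ and $\vwt[\play][\extr+1]$ and completing the limiting argument; the cluster-point claim is vacuous if no cluster point exists.
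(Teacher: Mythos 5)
Your Case 2 and the individual-regret half of Case 1 are fine and essentially reproduce the paper's computations (\eqref{eq:OptDA-regret-refined}, \eqref{eq:lips-sum}, \eqref{eq:OptDA-regret-finite-1}\textendash\eqref{eq:OptDA-regret-finite-2}, and \eqref{eq:individual-regret-inq'}). The gap is in the last step of Case 1, and it is exactly at the point you flag as "the main obstacle": your proposed derivation of $\sum_{\run}\vwt[\regpar]\bigl(\vw[\breg](\vwtinter[\state],\vwt[\state])+\vw[\breg](\vwtupdate[\state],\vwtinter[\state])\bigr)<+\infty$ does not go through. Rearranging \eqref{eq:template-regret} at the anchor $\vw[\arpoint]=\vw[\limp{\action}]$ leaves $-\sum_{\run=1}^{\nRuns}\product{\vwt[\gvec]}{\vwtinter[\state]-\vw[\arpoint]}$ on the right-hand side, so to conclude summability of the Bregman terms you need the cumulative \emph{linearized regret to be bounded below} (equivalently, an upper bound on its negative). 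The individual-regret bound from \eqref{eq:individual-regret-inq'} is an upper bound on that sum and is therefore useless here, the non-negativity of $\vwt[\regpar][\play][\nRuns+1]\vwt[\estseq][\play][\nRuns+1](\vw[\arpoint])$ only helps in the same (wrong) direction, and convergence of $\vwtinter[\state]$ along a subsequence says nothing about the sum over all rounds. In the paper this lower bound is supplied either by variational stability ($\product{\jvecfield(\inter[\jstate])}{\inter[\jstate]-\bb{\sol}}\ge0$, which is how \cref{lem:diff-to-zero} works) or not needed at all in the dichotomy theorem \textendash\ and in a general convex game it genuinely can fail: the whole content of alternative (b) is that cumulative (social) regret may tend to $-\infty$, so your Case 1 hypothesis "$\vw[\regpar]<+\infty$ for all $\play$" does not by itself exclude $\sum_{\run}\product{\vwt[\gvec]}{\vwtinter[\state]-\vw[\limp{\action}]}\to-\infty$, nor does it give $\norm{\inter[\jstate]-\current[\jstate]}\to0$ (gradient increments can vanish while the states keep moving, since $\vw[\vecfield]$ need not be injective).

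The paper avoids this by choosing a \emph{different dichotomy variable}: it splits on the finiteness of $\sum_{\run}\bigl(\norm{\inter[\jstate]-\current[\jstate]}^2+\norm{\current[\jstate]-\past[\jstate]}^2\bigr)$ rather than of the players' $\sum_{\run}\vwt[\increment]$. By Lipschitz continuity this quantity dominates every $\vwt[\increment]$, so when it diverges the same downward-parabola argument \eqref{eq:OptDA-finite1-quadratic} still drives the social regret to $-\infty$ (your Case 2 computation, essentially unchanged); and when it converges you get, for free, both $\sum_{\run}\vwt[\increment]<+\infty$ for all $\play$ (hence the $\bigoh(1)$ individual regret) \emph{and} $\norm{\inter[\jstate]-\current[\jstate]}\to0$, $\norm{\current[\jstate]-\past[\jstate]}\to0$, which is precisely what is needed to pass to the limit in the optimality condition \eqref{eq:optimality-before-limit} as in case \emph{b}) of \cref{thm:converge-to-Nash}. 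Your intended Case 1 statement ("all $\vw[\regpar]<+\infty$ implies every cluster point is Nash") is strictly stronger than what the paper proves, and as written it is unsupported; note also that its failure scenario (increments summable but state differences not square-summable) is exactly the regime the paper shunts into alternative (b). Replacing your case split by the paper's one repairs the argument with no other changes.
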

}

\begin{proof}
By Lipschitz continuity of $\vw[\vecfield]$, there exists $\vw[\lips]>0$ such that
\begin{equation}
    \notag
    \vwpdual[\norm{\vw[\vecfield](\inter[\jstate])-\vw[\vecfield](\past[\jstate])}]
    \le
    \vw[\lips]\norm{\inter[\jstate]-\past[\jstate]}
    \le
    \vw[\lips](\norm{\inter[\jstate]-\current[\jstate]}
    +\norm{\current[\jstate]-\past[\jstate]}).
\end{equation}
We set 
$\vwt[\alt{\increment}]= 2\vw[\lips]^2(\norm{\inter[\jstate]-\current[\jstate]}^2
+\norm{\current[\jstate]-\past[\jstate]}^2)$ for $\run\ge2$ so that $\vwt[\increment]\le\vwt[\alt{\increment}]$.
We also define $\vwt[\alt{\increment}][\play][\start]=\vwt[\increment][\play][\start]=\vwpdual[\norm{\vw[\vecfield](\vt[\jaction][\start])}^2]$, $\vw[\deccst]=1/(16\nPlayers\vw[\lips]^2)$, and $\vw[\inibound]=\max_{\vw[\arpoint]\in\vw[\arpoints]}\vw[\armeasure](\vw[\arpoint])$.
Then, from the regret bound \eqref{eq:template-regret}, similar to how \eqref{eq:OptDA-regret-refined} is derived, we deduce
\begin{equation}
    \notag
    \begin{aligned}
    \vt[\reg][\nRuns](\arpoints)
    &\le
    \sumplayer
    \left(\vwt[\regpar][\play][\nRuns+1]\vw[\inibound]
    +\vwpdual[\norm{\vw[\vecfield](\vt[\jaction][\start])}^2]\right)\\
    &~~+ \sum_{\run=2}^{\nRuns}
    \left(\sumplayer
    \frac{\vwt[\increment]}{\vwt[\regpar]}
    - \frac{1}{4}
    \left(\norm{\inter[\jstate]-\current[\jstate]}^2
    +\norm{\current[\jstate]-\past[\jstate]}^2\right)\right)\\
    &\le
    \sumplayer
    \left(
    \vw[\inibound]
    \sqrt{1+\sum_{\run=1}^{\nRuns}\vwt[\alt{\increment}]}
    +\vwt[\increment][\play][\start]
    -\vw[\deccst]
    \sum_{\run=2}^{\nRuns}\vwt[\alt{\increment}]
    \right)
    + \sum_{\run=2}^{\nRuns}\sumplayer
    \left(
    \frac{\vwt[\increment]}{\vwt[\regpar]}
    - \vw[\deccst]\vwt[\increment]\right).
    \end{aligned}
\end{equation}

Following the reasoning of the proof of \cref{thm:social-regret-bounded}, we know there exists a constant $\Cst$ such that for all $\nRuns\in\N$,
\begin{equation}
    \notag
    \vt[\reg][\nRuns](\arpoints)\le\Cst+\vw[\func][1]\left(\sqrt{\sum_{\run=2}^{\nRuns}\vwt[\alt{\increment}][1]}\right),
\end{equation}
where $\vw[\func][1]\from\scalar\in\R\mapsto -\vw[\deccst][1]\scalar^2 + \vw[\inibound][1]\scalar$ is quadratic and has negative leading coefficient.
Therefore, $\vt[\reg][\nRuns](\arpoints)\to-\infty$ when $\sum_{\run=1}^{+\infty}\vwt[\alt{\increment}][1]=+\infty$, and this corresponds to the situation \ref{thm:dichotomy-general-sum-b}.

Otherwise, $\sum_{\run=1}^{+\infty}(\norm{\inter[\jstate]-\current[\jstate]}^2+\norm{\current[\jstate]-\past[\jstate]}^2)<+\infty$
and this implies 
\begin{enumerate}[\itshape i\upshape), itemsep=0pt]
    \item $\lim_{\toinf}\norm{\inter[\jstate]-\current[\jstate]}^2=0$;
    \item $\lim_{\toinf}\norm{\current[\jstate]-\past[\jstate]}^2=0$;
    \item for all $\play\in\players$, $\sum_{\run=1}^{+\infty}\vwt[\increment]<+\infty$ and hence $\vw[\regpar]=\lim_{\toinf}\vwt[\regpar]<+\infty$.
\end{enumerate}

To conclude, we prove the boundedness of individual regrets as in the proof of \cref{thm:inidividual-regret-bound} and that every cluster point of $\seqinf[\inter[\jstate]]$ is a Nash equilibrium as in the proof of \cref{thm:converge-to-Nash} (case \emph{a}).
\end{proof}


\section{Technical lemmas for numerical sequences}
In this appendix we provide two basic lemmas for numerical sequences, one for bounding the adversarial regret of adaptive methods \cite[Lemma 3.5]{ACG02}, and the other for the analysis of quasi-Fejér sequence \cite[Lemma 3.1]{Com01}.

\begin{lemma}
\label{lem:adaptive}
For any real numbers $\seq{\scalar}{1}{\nRuns}$ such that $\sum_{\runalt=1}^{\run}\scalar_{\runalt}>0$ for all $\run\in\oneto{\nRuns}$, it holds
\begin{equation}
    \notag
    \sum_{\run=1}^{\nRuns}
    \frac{\scalar_{\run}}{\sqrt{\sum_{\runalt=1}^{\run}\scalar_{\runalt}}}
    \le 2 \sqrt{\sum_{\run=1}^{\nRuns}\scalar_{\run}}.
\end{equation}
\end{lemma}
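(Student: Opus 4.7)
The plan is to establish the inequality by a telescoping argument built on a single per-step estimate. Introduce the partial sums $S_{\run} = \sum_{\runalt=1}^{\run}\scalar_{\runalt}$ for $\run \in \oneto{\nRuns}$, together with the convention $S_0 = 0$. By hypothesis $S_{\run} > 0$ for all $\run \geq 1$, so every term $\scalar_{\run}/\sqrt{S_{\run}}$ in the sum is well-defined.

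The key step is to show, for each $\run \in \oneto{\nRuns}$, the pointwise bound
\begin{equation*}
\frac{\scalar_{\run}}{\sqrt{S_{\run}}} \;\leq\; 2\bigl(\sqrt{S_{\run}} - \sqrt{S_{\run-1}}\bigr).
\end{equation*}
Once this is in hand, summing over $\run = 1, \ldots, \nRuns$ makes the right-hand side telescope to $2\sqrt{S_{\nRuns}} - 2\sqrt{S_0} = 2\sqrt{S_{\nRuns}}$, which is exactly the claimed estimate.

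To verify the per-step inequality, multiply by $\sqrt{S_{\run}} > 0$: the bound becomes $\scalar_{\run} \leq 2 S_{\run} - 2\sqrt{S_{\run} S_{\run-1}}$. Since $\scalar_{\run} = S_{\run} - S_{\run-1}$, this simplifies to $2\sqrt{S_{\run} S_{\run-1}} \leq S_{\run} + S_{\run-1}$, which is simply the AM--GM inequality applied to the nonnegative quantities $S_{\run}$ and $S_{\run-1}$ (noting that $S_0 = 0$ makes the $\run = 1$ case trivial as well). Importantly, this derivation never uses the sign of $\scalar_{\run}$ individually, only that $S_{\run-1}, S_{\run} \geq 0$; this is why the lemma holds for arbitrary real $\scalar_{\run}$ under the stated positivity of partial sums.

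There is no real obstacle here: the whole argument hinges on recognizing the per-step bound, which is just a repackaging of AM--GM, and then telescoping. The only care needed is handling the boundary case $\run = 1$ via the convention $S_0 = 0$, and noting that positivity of each partial sum is exactly what allows us to divide by $\sqrt{S_{\run}}$ and to apply AM--GM to $S_{\run-1} \geq 0$.
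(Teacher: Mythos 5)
Your proof is correct and follows essentially the same route as the paper: both establish the per-step bound $\scalar_{\run}/\sqrt{S_{\run}} \le 2(\sqrt{S_{\run}}-\sqrt{S_{\run-1}})$ and telescope. The only cosmetic difference is that you verify this bound via AM--GM after clearing the denominator, whereas the paper invokes concavity of $\sqrt{\cdot}$ (the gradient inequality), which rearranges to exactly the same statement $2\sqrt{S_{\run}S_{\run-1}}\le S_{\run}+S_{\run-1}$.
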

\begin{proof}
The function $y\in\R^+\mapsto\sqrt{y}$ being concave and has derivative $y\mapsto1/(2\sqrt{y})$, it holds for every $z\ge0$,
\begin{equation}
    \notag
    \sqrt{z}\le\sqrt{y}+\frac{1}{2\sqrt{y}}(z-y).
\end{equation}
Take $y=\sum_{\runalt=1}^{\run}\vt[\scalar][\runalt]$ and $z=\sum_{\runalt=1}^{\run-1}\vt[\scalar][\runalt]$ gives
\begin{equation}
\notag
    2\sqrt{\sum_{\runalt=1}^{\run-1}\vt[\scalar][\runalt]}+\frac{\vt[\scalar][\run]}{\sqrt{\sum_{\runalt=1}^{\run}\vt[\scalar][\runalt]}}\le2\sqrt{\sum_{\runalt=1}^{\run}\vt[\scalar][\runalt]}.
\end{equation}
We conclude by summing the inequality from $\run=2$ to $\run=\nRuns$ and using $\sqrt{\vt[\scalar][1]}\le2\sqrt{\vt[\scalar][1]}$.
\end{proof}

\begin{lemma}
\label{lem:summable-cvg}
Let $\seqinf[\vt[\srv]]\in\R_+^\N$ be a non-negative sequence and $\seqinf[\vt[\srvp]]\in\R_+^\N$ be summable such that, for all $\run\in\N$,
\begin{equation}
    \label{eq:quasi-fejer}
    \update[\srv]\le\vt[\srv]+\vt[\srvp].
\end{equation}
Then, $\seqinf[\vt[\srv]]$ converges.
\end{lemma}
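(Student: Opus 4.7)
The plan is to use the standard ``auxiliary monotone sequence'' trick that underpins the Robbins--Siegmund / quasi-Fej\'er lemma. Since $\seqinf[\vt[\srvp]]$ is summable, the tail sums $S_\run \defeq \sum_{\runalt=\run}^{+\infty} \vt[\srvp][\runalt]$ are well-defined, non-negative, and satisfy $S_\run \to 0$ as $\run \to +\infty$.

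Define the auxiliary sequence $\vt[\auxvar] \defeq \vt[\srv] + S_\run$. The key observation is that $\seqinf[\vt[\auxvar]]$ is non-increasing: using the recursion \eqref{eq:quasi-fejer} and the identity $S_{\run+1} = S_\run - \vt[\srvp]$, we get
\begin{equation}
\notag
\update[\auxvar] = \update[\srv] + S_{\run+1} \le \vt[\srv] + \vt[\srvp] + S_{\run+1} = \vt[\srv] + S_\run = \vt[\auxvar].
\end{equation}
Moreover, each $\vt[\auxvar]$ is non-negative (since both $\vt[\srv] \ge 0$ and $S_\run \ge 0$). Hence $\seqinf[\vt[\auxvar]]$ is a non-increasing sequence that is bounded below, so it converges to some $\auxvar_{\infty} \in \R_+$.

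To conclude, write $\vt[\srv] = \vt[\auxvar] - S_\run$ and pass to the limit: since $\vt[\auxvar] \to \auxvar_{\infty}$ and $S_\run \to 0$, we obtain $\vt[\srv] \to \auxvar_{\infty}$, which proves the convergence claim. No step here poses any real obstacle; the only subtlety is recognizing that summability of $\seqinf[\vt[\srvp]]$ is precisely what allows the correction term $S_\run$ to convert the almost-monotone sequence $\seqinf[\vt[\srv]]$ into an actually monotone one.
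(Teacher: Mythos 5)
Your proof is correct and follows essentially the same route as the paper: both define the tail-corrected auxiliary sequence $\tilde{u}_{\run} = u_{\run} + \sum_{\runalt=\run}^{+\infty} v_{\runalt}$, observe that the recursion makes it non-increasing (hence convergent, being bounded below by zero), and recover the convergence of the original sequence since the tail sums vanish. The only difference is that you spell out the monotonicity computation and the final limit-passing step, which the paper leaves implicit.
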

\begin{proof}
Since $\seqinf[\vt[\srvp]]$ is summable, we can define $\vt[\alt{\srv}]=\vt[\srv]+\sum_{\runalt=\run}^{+\infty}\vt[\srvp][\runalt]\in\R_+$. Inequality \eqref{eq:quasi-fejer} then implies $\update[\alt{\srv}]\le\vt[\alt{\srv}]$.
Therefore, $\seqinf[\vt[\alt{\srv}]]$ converges, and accordingly $\seqinf[\vt[\srv]]$ converges. 
\end{proof}

\end{document}